\newcommand{\Worst}{\textnormal{Worst}}
\newcommand{\XOR}{\textnormal{XOR}}
\newcommand{\p}{\textnormal{P}}
\newcommand{\Sym}{\textnormal{sym}}
\newcommand{\tot}{\textnormal{tot}}
\newcommand{\s}{\textnormal{sec}}
\newcommand{\key}{\textnormal{Key}}
\newcommand{\enc}[2]{\textsf{sec}\big(#1, #2\big)}
\newcommand{\C}{\mathcal{C}}
\newcommand{\D}{\mathcal{D}}
\newcommand{\K}{\mathcal{K}}
\newcommand{\Ka}{\mathsf{K}}
\newcommand{\Ks}{\mathsf{K}_s}
\newcommand{\Kw}{\mathsf{K}_w}
\newcommand{\Ma}{\mathsf{M}}
\newcommand{\Ms}{\mathsf{M}_s}
\newcommand{\Mw}{\mathsf{M}_w}
\newcommand{\Da}{\mathsf{D}}
\newtheorem{theorem}{Theorem}
\newtheorem{lemma}{Lemma}
\newtheorem{proposition}{Proposition}
\newtheorem{corollary}{Corollary}
\newtheorem{definition}{Definition}
\newtheorem{remark}{Remark}
\begin{document}

\title{Secrecy Capacity-Memory Tradeoff of  Erasure Broadcast Channels}
\author{Sarah Kamel, Mireille Sarkiss, Mich\`ele Wigger, and Ghaya Rekaya-Ben Othman
\thanks{S. Kamel, M.~Wigger, and G. Rekaya-Ben Othman are with   LTCI,  Telecom ParisTech, Universit\'e Paris-Saclay, 75013 Paris. Email: \{sarah.kamel, michele.wigger, ghaya.rekaya\}@telecom-paristech.fr}
\thanks{M. Sarkiss is with CEA, LIST, Communicating Systems Laboratory, BC 173, 91191 Gif-sur-Yvette, France. Email: mireille.sarkiss@cea.fr}
\thanks{Parts of the material in this paper have been presented at the  \emph{IEEE International Conference on Communications (ICC)}, Paris, France, May 2017 \cite{ICC17}, and at the \emph{IEEE Information Theory Workshop (ITW)}, Kaohsiung, Taiwan, Nov. 2017. }}

\maketitle

\begin{abstract}
This paper derives upper and lower bounds on the \emph{secrecy capacity-memory tradeoff} of a wiretap erasure broadcast channel (BC) with $\Kw$ weak receivers and $\Ks$ strong receivers, where weak receivers, respectively strong receivers, have same erasure probabilities and cache sizes. The lower bounds are achieved by schemes that meticulously combine joint cache-channel coding with wiretap coding and key-aided one-time pads. The presented upper bound holds  more generally for arbitrary degraded BCs and arbitrary cache sizes. When only weak receivers have cache memories, upper and lower bounds coincide for small and large cache memories, thus providing the exact secrecy capacity-memory tradeoff for this setup.  The derived bounds  allow to further conclude that the secrecy capacity is positive even when the eavesdropper is stronger than all the legitimate receivers with cache memories. 
	 Moreover, they show that the secrecy capacity-memory tradeoff can be significantly smaller than  its non-secure counterpart, but it grows much faster when cache memories are small. 
	 
	 The paper also presents a lower bound on the \emph{global secrecy capacity-memory tradeoff} where one is allowed to optimize the cache assignment subject to a total cache  budget. It is close to the  best known lower bound without secrecy constraint. For small total cache budget, the global secrecy capacity-memory tradeoff is achieved by assigning all  the available cache memory uniformly over \emph{all} receivers if the eavesdropper is stronger than all legitimate receivers, and it is achieved by assigning the cache memory uniformly only over the \emph{weak} receivers if the eavesdropper is weaker than the strong receivers.
\end{abstract}\bigskip

\section{Introduction} 

	Traffic load in communication systems varies tremendously during the day between busy periods where the network is highly congested causing  packet loss, delivery delays and unsatisfied users and other periods where the network is barely used. Lately, caching has emerged as a promising technique to reduce the network load and latency in such dense wireless networks.  In caching, the communication is divided into two phases: the caching phase and the delivery phase. The caching phase occurs during the off-peak periods of the network, where fragments of popular contents are stored in users' cache memories or on nearby servers. The delivery phase occurs when users request specific files during the peak-traffic periods of the network and they are served partly from their cache memories and partly from the server. The technical challenge in these networks is that in the caching phase the servers do not know exactly which files the receivers will demand during the delivery phase. They are thus obliged to store information about \emph{all possibly requested files} in the receivers' cache memories.
	
	 Maddah-Ali and Niesen showed in their seminal work  \cite{MaddahAli14} that the delivery (high-traffic) communication can benefit from the cache memories more than the obvious local caching gain arising from locally retrieving parts of the requested files. The additional gain, termed \emph{global caching gain}, is obtained through carefully designing the cached contents and applying a new coding scheme, termed \emph{coded caching}, which allows the transmitter to simultaneously serve multiple receivers. In  \cite{MaddahAli14} and in many subsequent works, the delivery phase is modelled as  an error-free broadcast channel and  all receivers have equal cache sizes. 	 
	 Coded caching is straightforwardly extended to noisy BCs by means of a \emph{separate cache-channel coding} architecture where a standard BC code that ignores the cache contents is used to transmit the \emph{delivery-messages}\footnote{Due to the presence of the cache memories the messages conveyed in the delivery phase,  generally differ from the original messages in the library.} produced by the coded caching algorithm to the receivers. Improved global caching gains can be achieved by employing \emph{joint cache-channel coding} where the encoder and the decoder simultaneously adapt to the cache contents and the channel statistics \cite{Wigger15, WiggerJournal,tulino, Gunduz}. 
	
	A different line of works has addressed secrecy issues in cache-aided BCs \cite{Clancy15, Prabhakarany16,SuthanChughKrishnan}, where  delivery communication takes place over a noiseless link. 
	Different secrecy requirements have been studied. 
	In \cite{Clancy15,SuthanChughKrishnan}, the entire library of messages needs to be kept secret from an external eavesdropper that has access to the outputs of the common bit-pipe but not to the cache memories. This is achieved by means of securing the XOR packets produced by coded caching with  secret keys that have been prestored at the receivers during the caching phase \cite{Clancy15}. This approach has subsequently been extended to resolvable networks in \cite{ZewailYener16} and to device-to-device communication models
	\cite{Awan15}.
	In \cite{Prabhakarany16}, each legitimate receiver acts also as eavesdropper and is thus not allowed to learn anything about the files requested by the other receivers. In this case, uncoded fragments of the messages cannot be stored in the users' caches. Instead, random keys and combinations of messages XORed with these random keys are cached. In the delivery phase, messages (or combination of messages) XORed with random keys are transmitted in a way that each message can be decoded only by its intended receiver.
	 Secret communication has also been considered for cache-aided heterogeneous and multiantenna interference networks, see for example  \cite{edge}, \cite{multihop}, and \cite{jamming}. A different angle of attack on this problem is taken in \cite{EngelmannElia}, which presents a privacy-preserving protocol that prevents eavesdroppers from learning which users are requesting which files as well as the statistical popularities of the files. 

	In this paper, we follow the secrecy requirement in \cite{Clancy15} where an external eavesdropper is not allowed to learn anything about \emph{all} the files in the library, but here delivery communication takes place over an erasure BC with one transmitter and $\Ka\geq 2$ receivers.
The eavesdropper  does not have access to the cache memories  but overhears the delivery  communication over the erasure BC.  
	The main interest of this paper is  the \emph{secrecy capacity-memory tradeoff} of such a system, i.e., the largest message rate that allows to find encoding and decoding functions so that the normalized mutual information \begin{equation}\label{eq:mut}\frac{1}{n}I(W_1,\ldots, W_\Da; Z^n)\end{equation} between all the messages $W_1,\ldots, W_\Da$ of the library and the eavesdropper's observations $Z^n$ vanishes asymptotically for increasing blocklengths $n$. In our previous work 
 \cite{WCNC17}, we have addressed the weaker secrecy constraint where  the eavesdropper is not allowed to learn any information about each of the actually demanded messages \emph{individually}. Each of the $\Ka$ mutual informations  $\frac{1}{n}I(W_{d_1};Z^n), \ldots, \frac{1}{n} I(W_{d_\Ka};Z^n)$  thus needs to vanish asymptotically, where  $W_{d_1}, \ldots, W_{d_\Ka}$ denoting the files requested and delivered to Receivers~$1,\ldots, \Ka$.   Our conference publications \cite{WCNC17} and \cite{ICC17} suggest that the ultimate performance limits under these  two secrecy constraints are close.
	
	There are two basic approaches to render cache-aided BCs secure: 1) introduce binning to the non-secure coding schemes so as to  transform them into wiretap codes \cite{elGamalBook}; and 2)  store  secret keys into the cache memories and apply one-time pads \cite{shannon_cipher} to parts of the delivery-messages which can then be sent using  non-secure channel codes. As we will see, in the  BC scenarios where different receivers have different cache sizes, combinations thereof should be employed. Moreover, by applying superposition coding or joint cache-channel coding,  secret keys stored in caches can even be used to secure the communication to other receivers. Under secrecy constraints, further global caching gains are thus possible than the previously  reported gains for non-secure communication \cite{MaddahAli14,WiggerJournal}. 

Based on the described  coding ideas, we propose lower bounds on the secrecy capacity-memory tradeoff of the cache-aided erasure BC in Figure~\ref{fig:systemModel}. The system  consists of $\Kw$ weak receivers $1,\ldots, \Kw$ that have  equal erasure probability $\delta_w \geq 0$ and cache size $\Mw$, and $\Ks$ strong receivers $\Kw+1,\ldots, \Ka$ that have equal erasure probability $\delta_s \geq 0$ and cache size $\Ms$. 
We also provide a general upper bound on the secrecy capacity-memory tradeoff of an arbitrary degraded BC with receivers having arbitrary cache sizes.
Upper and lower bounds match for the  setup in Figure~\ref{fig:systemModel} in special cases, for example, when $\Ms=0$ and $\Mw$ is sufficiently large or small. The proposed bounds moreover allow to conclude the following:
\begin{itemize}
	\item \underline{Secrecy Capacity is Always Positive:} In particular, the secrecy capacity is  positive even when the eavesdropper is stronger than all the legitimate receivers with cache memories.\\
	\item \underline{The Secrecy Constraint Can Significantly Harm Capacity:} The secrecy capacity-memory tradeoff can be significantly smaller than its non-secure counterpart, especially when only weak receivers have cache memories.  One explanation is that when only weak receivers have cache memories, the communication to the strong receivers needs to be either secured through wiretap binning  or other secrecy mechanisms, which both significantly reduce the rate of communication.\\
	\item \underline{Caching \emph{Gains} are More Important under a Secrecy Constraint: } In the regime of small cache memories,    the gains of caching (i.e., the slope of the capacity) are  more pronounced in our system with secrecy constraint  than in the standard non-secure system. Consider for example, $\Ms  =0$ and $\Mw $  sufficiently small. In this regime, when the eavesdropper's erasure probability $\delta_z$ is larger than erasure probability at the strong receivers $\delta_s$, the  slope $\gamma_\s$ of   the secrecy capacity-memory tradeoff satisfies (see Corollary~\ref{cor:1})
	\begin{equation} \gamma_\s =  \frac{\Kw(\delta_z-\delta_s)}{\Kw(\delta_z-\delta_s)+\Ks (\delta_z-\delta_w)^{+}}.
	\end{equation} 
	The slope $\gamma$ of the standard non-secure capacity-memory tradeoff satisfies (see  \cite[Theorem~2]{Wigger15})
	\begin{equation}
\gamma \leq \frac{\Kw}{\Da}.
	\end{equation}
	This latter slope $\gamma$ thus deteriorates with increasing library size $\Da$, which is  not the case for $\gamma_\s$. The main reason for this  behavior is that in a standard system the cache memories are filled with  data, and intuitively each stored bit is useful only under some of the demands. In a secure system, a good option is to store secret keys in the cache memories of the receivers. These secret keys are helpful for all possible demands, and therefore the caching gain does not degrade with the library size $\Da$.   \\
\item \underline{Optimal Cache Assignments for Small Total Cache Budgets:} For small total cache  budgets, the global secrecy capacity-memory tradeoff is achieved by assigning all of the cache memory uniformly  only over the \emph{weak} receivers if the eavesdropper is weaker than strong receivers, and it is achieved by assigning all the cache memory uniformly over \emph{all} receivers, if the eavesdropper is stronger than all receivers.\\
\end{itemize}

	{\textit{Paper Organization:}}
	The remainder of this paper is organized as follows. Section \ref{sec:probDef}  formally defines the  problem. Section \ref{sec:UB} presents a general upper bound on the secrecy capacity-memory tradeoff and specializes it to the specific model of this work. In   Sections \ref{sec:1sided} and \ref{sec:2sided}, we present  our  results for the scenarios when only weak receivers have cache memories and when all receivers have cache memories, and we sketch the  coding schemes  achieving the proposed lower bounds. Section \ref{sec:globalCs} contains our results on the global secrecy capacity-memory tradeoff. Sections \ref{sec:proofLB_Aissgn} and \ref{sec:proofLB}  describe and analyze in detail all the coding schemes  proposed  in this paper. Finally, Section \ref{sec:concl} concludes the paper. \bigskip

\section{Problem Definition}\label{sec:probDef} 

\subsection{Channel Model} 
	We consider a wiretap erasure BC with a single transmitter, $\Ka$ receivers and one eavesdropper, as shown in Figure~\ref{fig:systemModel}. The input alphabet of the BC is
	\begin{equation}
		\mathcal{X}  {=} \{0,1\}
	\end{equation}	 
	and all receivers and the eavesdropper have the same output alphabet
	\begin{equation}
		\mathcal{Z} {=} \mathcal{Y}  {=} \mathcal{X} \cup \Delta.
	\end{equation}
	The output erasure symbol $\Delta$ indicates the loss of a bit at the receiver. 
	Let $\delta_k$ be the erasure probability of Receiver~$k$'s channel, for $k \in \K :=\{1,\ldots,\Ka\}$, and $\delta_z$ be the erasure probability of the eavesdropper's channel.	
	Hence, the marginal transition laws of this BC are described by 
	\begin{equation}
		\mathbb{P}[Y_k=y_k|X=x] = \left\{
		\begin{tabular}{ c l }
  			$1-\delta_k $ & if $~y_k=x$ \\
  			$\delta_k $ & if $~y_k = \Delta  ~~, ~~ \forall ~ k \in \K$ \\
  			$0$ & otherwise  \\
		\end{tabular} \right.
	\end{equation}
	and 
	\begin{equation}
		\mathbb{P}[Z=z|X=x] = \left\{
		\begin{tabular}{ c l }
  			$1-\delta_z $ & if $~z=x$ \\
  			$\delta_z $ & if $~z = \Delta$ \\
  			$0$ & otherwise  \\
		\end{tabular} \right.
	\end{equation}
	with $0 \leq \delta_k,\delta_z \leq 1$.
	
	The $\Ka$ receivers are partitioned into two sets. The first set
	\begin{equation}
		\K_w := \{1,\dots,\Kw\}
	\end{equation}	 
	is formed by $\Kw$ weak receivers which have bad channel conditions. The second set
	\begin{equation}
		\K_s  := \{\Kw+1,\dots,\Ka\}
	\end{equation}	 
	is formed by $\Ks =\Ka-\Kw$ strong receivers which have good channel conditions.

	In other words, we assume that 
	\begin{equation}
		\delta_k = \left\{
		\begin{tabular}{ c l }
  			$\delta_w $ & if $~k \in \K_w$ \\
  			$\delta_s $ & if $~k \in \K_s $ \\
		\end{tabular} \right.
	\end{equation}
	with
	\begin{equation}
		0\leq \delta_s \leq \delta_w \leq 1.
	\end{equation} 
	In a standard wiretap erasure BC, reliable communication with positive  rates is only possible when the eavesdropper's erasure probability $\delta_z$ is larger than the erasure probabilities at all legitimate receivers. Here, any configuration of erasure probabilities is possible, because the legitimate receivers can prestore information in local cache memories and this information is not accessible by the eavesdropper. 
Specifically, we assume that  each  weak receiver has access to a local cache memory of size $n\Mw $ bits and each strong receiver has access to a local cache memory of size $n\Ms  $ bits.

	\begin{figure}
	\begin{center}
		\begin{tikzpicture}[scale=1]
			\node at (6.06,5.8) {Library};
			\node at (6.06,5.3) {$W_1, W_2, \dots, W_\Da$};
			\draw[rounded corners=7pt,thick] (4.47,4.9) rectangle (7.56,6.2);
			\draw[thick] (6.06,4.9) -- (6.06,4.6);

			\node at (4.91,4.25) {$\theta$}; 
			\draw[rounded corners=7pt,thick] (4.56,3.9) rectangle (5.26,4.6);
			\draw[thick] (5.26,4.25) -- (5.56,4.25);
			\node[thick] at (6.06,4.25) {$T_{X}$};
			\draw[thick] (5.56,3.9) rectangle (6.56,4.6);
			\node[right,thick] at (6.06,3.45) {$X^n$};
			\draw[->,thick] (6.06,3.9) -- (6.06,3);
			\draw[thick] (-1.38,2) rectangle (13.5,3);
			\node[thick] at (6.06,2.5) {Erasure Broadcast Channel};

			\draw[->,thick] (0.36,2) -- (0.36,1.1);
			\draw[->,thick] (3.6,2) -- (3.6,1.1);
			\draw[->,thick] (6.6,2) -- (6.6,1.1);
			\draw[->,thick] (9.8,2) -- (9.8,1.1);
			\draw[->,thick] (12.3,2) -- (12.3,1.1);
			\node[right,thick] at (0.36,1.55) {$Y_1^n$};
			\node[right,thick] at (3.6,1.55) {$Y_{\Kw}^n$};
			\node[right,thick] at (6.6,1.55) {$Y_{\Kw+1}^n$};
			\node[right,thick] at (9.8,1.55) {$Y_{\Ka}^n$};
			\node[right,thick] at (12.3,1.55) {$Z^n$};

			\draw[thick] (-0.34,0.4) rectangle (1.06,1.1);
			\node[thick] at (0.36,0.75) {Rx $1$};
			\node[thick] at (1.51,0.75) {$\dots$};
			\draw[thick] (2.9,0.4) rectangle (4.3,1.1);
			\node[thick] at (3.6,0.75) {Rx $\Kw$};
			\draw[thick] (5.8,0.4) rectangle (7.4,1.1);
			\node[thick] at (6.6,0.75) {Rx\! $\Kw\!+\!1$};
			\node[thick] at (7.75,0.75) {$\dots$};
			\draw[thick] (9.1,0.4) rectangle (10.5,1.1);
			\node[thick] at (9.8,0.75) {Rx $\Ka$};
			\draw[thick] (11.1,0.4) rectangle (13.5,1.1);
			\node[thick] at (12.3,0.75) {Eavesdropper};

			\draw[thick] (-0.54,0.75) -- (-0.34,0.75);
			\draw[rounded corners=7pt,thick] (-1.38,0.4) rectangle (-0.54,1.1);
			\node[thick] at (-0.96,0.75) {$n\Mw $};
			\draw[thick] (2.7,0.75) -- (2.9,0.75);
			\draw[rounded corners=7pt,thick] (1.86,0.4) rectangle (2.7,1.1);
			\node[thick] at (2.28,0.75) {$n\Mw $};
			\draw[thick] (5.6,0.75) -- (5.8,0.75);
			\draw[rounded corners=7pt,thick] (4.8,0.4) rectangle (5.6,1.1);
			\node[thick] at (5.2,0.75) {$n\Ms  $};
			\draw[thick] (8.9,0.75) -- (9.1,0.75);
			\draw[rounded corners=7pt,thick] (8.1,0.4) rectangle (8.9,1.1);
			\node[thick] at (8.5,0.75) {$n\Ms  $};	
			
			\draw[->,thick] (0.36,0.4) -- (0.36,-0.1);
			\draw[->,thick] (3.6,0.4) -- (3.6,-0.1);
			\draw[->,thick] (6.6,0.4) -- (6.6,-0.1);
			\draw[->,thick] (9.8,0.4) -- (9.8,-0.1);
			\node[below,thick] at (0.36,-0.1) {$\hat{W}_1$};
			\node[below,thick] at (3.6,-0.1) {$\hat{W}_{\Kw}$};
			\node[below,thick] at (6.6,-0.1) {$\hat{W}_{\Kw+1}$};
			\node[below,thick] at (9.8,-0.1) {$\hat{W}_{\Ka}$};
			
			\draw [thick, decorate,decoration={brace,amplitude=10pt,mirror},xshift=0.4pt,yshift=-0.4pt](-1.28,-0.8) -- (4.3,-0.8) node[midway,yshift=-0.6cm] {Weak receivers};
			\draw [thick, decorate,decoration={brace,amplitude=10pt,mirror},xshift=0.4pt,yshift=-0.4pt](4.9,-0.8) -- (10.5,-0.8) node[midway,yshift=-0.6cm] {Strong receivers};
		\end{tikzpicture} 
		\caption{Erasure BC with $\Ka=\Kw+\Ks $ legitimate receivers and an eavesdropper. The $\Kw$ weaker receivers have cache memories of size $\Mw $ and the $\Ks $ stronger receivers have cache memories of size $\Ms  $. The random variable $\theta$ models a source of randomness locally available at the transmitter.}  \label{fig:systemModel}
	\end{center}
	\end{figure}
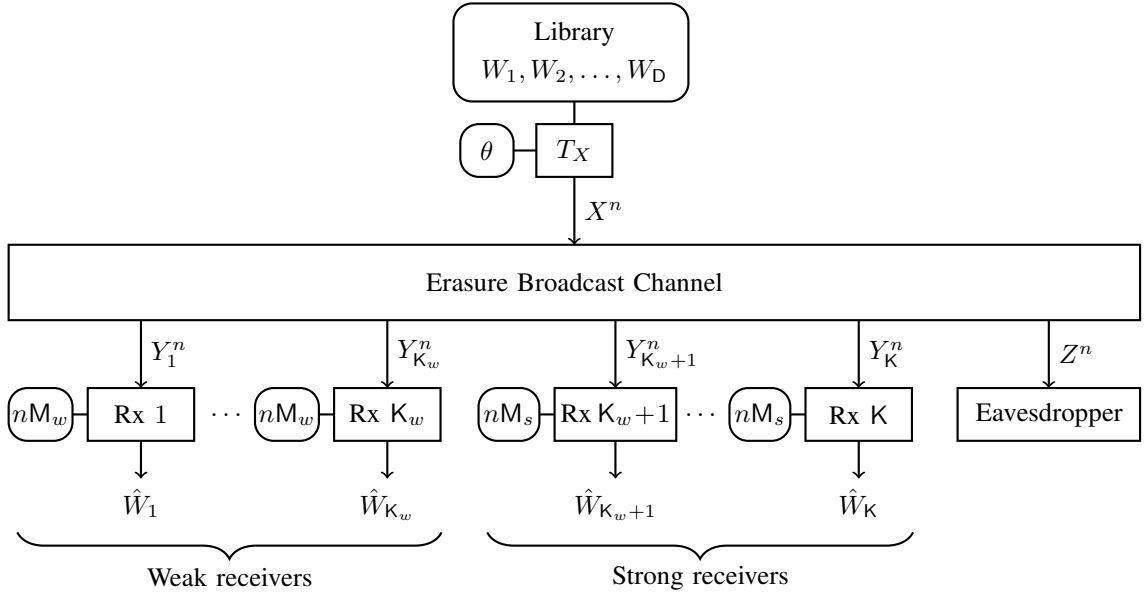

\subsection{Library and Receiver Demands}

	The transmitter can access a library of $\Da > \Ka$ independent messages
	\begin{equation}
		W_1,\dots,W_\Da
	\end{equation}		
	of rate $R \geq 0$ each. So for each $d \in \D$,
	\begin{equation}
		\D := \{1,\ldots, \Da \},
	\end{equation}	
	message~$W_d$ is uniformly distributed over the set $\big\{1,\dots,\lfloor 2^{nR}\rfloor \big\}$, where $n$ is the transmission blocklength.

	Every receiver~$k \in \K$ 
	demands exactly one message $W_{d_k}$ from the library. We denote the demand of Receiver~$k$ by $d_k \in \D$
	and  the \emph{demand vector} of all receivers by
	\begin{equation}
		\mathbf{d} := (d_1,\dots,d_\Ka) \in \D^\Ka.
	\end{equation}

Communication takes place in two phases: a first caching phase where the transmitter sends caching information to be stored in the receivers' cache memories and the subsequent delivery phase where the demanded messages $W_{d_k}$, for $k \in \K$, are conveyed to the receivers.

\subsection{Placement Phase}

	The placement phase takes place during periods of low network-traffic. The transmitter therefore has all the available resources to transmit the  cached contents  in an error-free and secure fashion, and this first phase is only restricted by the storage constraints on the cache memories. 
	However, since the caching phase takes place before the receivers demand their files, the cached content cannot depend on the demand vector $\mathbf{d}$, but only on the library and  local randomness  $\theta$ that is accessible by the transmitter. The cache content $V_k$ stored at receiver $k\in\{1,\ldots,\Ka\}$ is thus of the form
	\begin{equation}\label{eq:caching}
		V_k = g_k \left( W_1,\dots,W_\Da ,\theta \right),
	\end{equation}
	for some caching function
	\begin{equation}\label{eq:cachingFct}
		g_k \colon \left\lbrace 1,\dots,\lfloor 2^{nR}\rfloor \right\rbrace^\Da \times \Theta \rightarrow \mathcal{V}_i
	\end{equation}
 	where for $k \in \K_w$
 	\begin{equation}
 		\mathcal{V}_k := \left\lbrace 1,\dots,\lfloor 2^{n\Mw }\rfloor \right\rbrace, \quad k \in \K_w
 	\end{equation}
 	and for $k \in \K_s $
 	\begin{equation}
 		\mathcal{V}_k := \left\lbrace 1,\dots,\lfloor 2^{n\Ms  }\rfloor \right\rbrace, \quad k \in \K_s .
 	\end{equation}

\subsection{Delivery Phase}
	Prior to the delivery phase, the demand vector $\mathbf{d}$ is learned by the transmitter and the legitimate receivers. The communication of the demand vector requires zero communication rate since it takes only $\Ka \cdot \big\lceil \log(\Da) \big\rceil$ bits to describe $\mathbf{d}$. 
	
	Based on the demand vector $\mathbf{d}$, the transmitter sends 
	\begin{equation}\label{eq:encoding}
		X^n = f_{\mathbf{d}} \left(W_1,\dots,W_{\Da},\theta \right),
	\end{equation} 
	for some function
	\begin{equation} \label{eq:encodingFct}
		f_{\mathbf{d}} \colon \left\lbrace 1,\dots,\lfloor 2^{nR}\rfloor \right\rbrace^{\Da} \times \Theta \rightarrow \mathcal{X}^n.
	\end{equation}
	
	Each Receiver $k \in \K$,  attempts to decode its demanded message $W_{d_k}$ based on its observed outputs $Y_k^n$ and its cache content $V_k$: 
	\begin{equation}\label{eq:decoding}
		\hat{W}_{k} := \varphi_{k,\mathbf{d}}(Y_k^n,V_k), \quad k \in \K,
	\end{equation} 
	for some function 
	\begin{equation}\label{eq:decodingFct}
		\varphi_{k,\mathbf{d}} \colon \mathcal{Y}^n \times \mathcal{V}_k \rightarrow \left\lbrace 1,\dots,\lfloor2^{nR}\rfloor \right\rbrace.
	\end{equation}

\subsection{Secrecy Capacity-Memory Tradeoff}
	A decoding error occurs whenever $\hat{W}_{k} \neq W_{d_k}$, for some $k\in \K$. We consider the worst-case probability of error over all feasible demand vectors
	\begin{equation} \label{eq:PeWorst}
	\p_e^{\Worst} := \max\limits_{\mathbf{d}\in \D^{\Ka}} \p \left[ \bigcup_{k=1}^{\Ka} \left\lbrace \hat{W}_{k} \neq W_{d_k} \right\rbrace \right].
	\end{equation} 

	The communication is considered secure if the eavesdropper's channel outputs $Z^n$ during the delivery phase provide no information about the entire library:
	\begin{equation} \label{eq:jointSecrecy}
		\lim\limits_{n \rightarrow \infty} \frac{1}{n} I \left( W_1,\ldots,W_{\Da};Z^n \right) < \epsilon.
	\end{equation}	
	
	As mentioned previously, the caching phase is supposed to be kept secure from the eavesdropper. This latter thus eavesdrops only on the delivery communication.
	
	\begin{definition}
		A rate-memory triple $(R,\Mw ,\Ms)$ is \emph{securely achievable} if for every $\epsilon>0$ and sufficiently large blocklength $n$, there exist caching, encoding, and decoding functions as in \eqref{eq:cachingFct}, \eqref{eq:encodingFct}, and \eqref{eq:decodingFct} so that
		\begin{equation} \label{eq:Cs_def}
			\p_e^{\Worst} \leq \epsilon  \qquad \mbox{and} \qquad \frac{1}{n} I\left(W_1,\ldots,W_{\Da};Z^n\right) < \epsilon.
		\end{equation}
	\end{definition}
	
	In our previous works \cite{WCNC17} and \cite{ICC17}, we called this secrecy constraint a \emph{joint secrecy constraint} to distinguish it from the \emph{individual secrecy constraint} in \cite{WCNC17} where the second inequality in \eqref{eq:Cs_def} is replaced by $\frac{1}{n} I\left( W_{d_k};Z^n \right) < \epsilon, \forall k \in \K$.

	\begin{definition} \label{def:Cs}
		Given cache memory sizes $\left(\Mw , \Ms \right)$, the \emph{secrecy capacity-memory tradeoff $C_{\s}(\Mw ,\Ms  )$} is the supremum of all rates $R$ so that the triple $\left(R,\Mw ,\Ms \right)$ is securely achievable:
		\begin{equation}
		C_{\s}\left(\Mw ,\Ms  \right) := \sup \left\lbrace R \colon \;\; \left(R, \Mw ,\Ms  \right)  \; \textnormal{securely achievable} \right\rbrace.
		\end{equation}
	\end{definition}
	
	
	\begin{remark} \label{rmk:R0}
		Without cache memories, i.e., $\Mw =\Ms  =0$, the secrecy capacity-memory tradeoff $C_{\s}(\Mw ,\Ms  )$ was determined in \cite{ekrem13}:
		\begin{equation}
			C_{\s}\left(\Mw =0,\Ms  =0\right)  = \left(\sum_{k=1}^{\Ka} \frac{1}{\delta_z-\delta_k}\right)^{-1}.
		\end{equation}
	\end{remark}	\bigskip

	For comparison, we will also be interested in the standard (non-secure) capacity-memory tradeoff $C \left(\Mw,\Ms \right)$ as defined in \cite{WiggerJournal,WiggerYenerJournal}. It is  the largest rate $R$ for given cache sizes $\Mw$ and $\Ms$ that allows to find caching, encoding, and decoding functions as in 
	\eqref{eq:cachingFct}, \eqref{eq:encodingFct}, and \eqref{eq:decodingFct} so that
	\begin{equation}
			\p_e^{\Worst} \leq \epsilon. 
	\end{equation}
	That means, $C \left(\Mw,\Ms \right)$ differs from $C_\s \left(\Mw,\Ms \right)$ in that no secrecy constraint is imposed. 
	
	We start with a small section on preliminaries. 
	
	\subsection{Preliminaries: A Mapping  and A Lemma  }\label{sec:preliminaries}
	
		For fixed positive real numbers $n, R', R_{\key}$, 
		define the mapping
	\begin{IEEEeqnarray}{rCl}
		\textsf{sec}(w, \textnormal{key})\colon \lfloor 2^{nR'} \rfloor \times \lfloor 2^{nR_{\textnormal{Key}}} \rfloor \;&\to &\; \lfloor 2^{nR'} \rfloor \\
		\colon  (w,\key) \;&\mapsto& \;(w+\key)\; \; \textsf{mod} \; \; \lfloor 2^{nR'} \rfloor,
	\end{IEEEeqnarray}
	where $\textsf{mod}$ denotes the modulo operator.
	Notice that from $\textsf{sec}(w,\textnormal{key})$ and $\textnormal{key}$ it is possible to recover $w$. Let $\textsf{sec}^{-1}_{\textnormal{key}}(\cdot)$ denote this inverse mapping so that 
	\begin{equation}
	\textsf{sec}^{-1}_{\textnormal{key}} \left(	\textsf{sec}(w, \textnormal{key}) \right)=w. 
	\end{equation}
	When $R'=R_\key$, we also write $\bigoplus$ instead of $\textsf{sec}$:
	\begin{equation}
	w_1 \bigoplus w_2 := \enc{w_1}{w_2}.
	\end{equation}
	We mostly use $\textsf{sec}$ when one of the arguments is a secret key and we use $\bigoplus$ when both arguments refer to messages.\bigskip
	
	The following lemma is key in the secrecy analysis of our schemes.
	\begin{lemma}\label{lem:secure}
	Consider the rates $\tilde{R} \geq R'>0$ and the random codebook 
		\begin{equation}
		\C=\Big\{x^{n'}(\ell)\colon \; \; \ell\in \big\{1,\ldots,\lfloor 2^{n'\tilde{R}}\rfloor\big\} \Big\}
		\end{equation} with entries drawn i.i.d. according to some distribution $P_X$. Let the message $W'$ be uniform over $\{1,\ldots, \lfloor 2^{nR'}\rfloor\}$. To encode the message $W'=w'$, the transmitter picks uniformly at random a codeword from a predetermined subset $\mathcal{S}(w')\subseteq \C$ and sends this codeword over the channel.  If the cardinalities of the  subsets $\{\mathcal{S}(w')\}$ satisfy
		\begin{equation}
		\log_2|\mathcal{S}(w')| \geq \min\{ R', (1-\delta_z)\}, \quad \forall w'\in\{1,\ldots, \lfloor 2^{n'R'}\rfloor\},
		\end{equation}   then 
		\begin{IEEEeqnarray}{rCl}
			\frac{1}{n'} I\big(W'; Z^{n'}|\mathcal{C}\big) \to 0 \qquad \textnormal{as} \qquad n'\to\infty.
		\end{IEEEeqnarray}
		
	\end{lemma}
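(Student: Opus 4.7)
My plan is to follow Wyner's random-coding/soft-covering template. Starting from
\begin{equation*}
\tfrac{1}{n'}I(W';Z^{n'}\mid\mathcal{C}) \;=\; \tfrac{1}{n'}\mathbb{E}_{W',\mathcal{C}}\!\bigl[D\bigl(P_{Z^{n'}\mid W',\mathcal{C}}\,\big\|\,P_{Z^{n'}\mid \mathcal{C}}\bigr)\bigr] \;\leq\; \tfrac{1}{n'}\mathbb{E}_{W',\mathcal{C}}\!\bigl[D\bigl(P_{Z^{n'}\mid W',\mathcal{C}}\,\big\|\,Q_Z^{\otimes n'}\bigr)\bigr],
\end{equation*}
where $Q_Z$ denotes the single-letter eavesdropper marginal induced by $P_X$ through the erasure channel, the task reduces to showing that $P_{Z^{n'}\mid W'=w',\mathcal{C}}$ is close to $Q_Z^{\otimes n'}$ in codebook-expected relative entropy, uniformly in $w'$. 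Choosing $P_X$ uniform on $\{0,1\}$ gives $I(X;Z)=1-\delta_z$, which is exactly the threshold appearing inside the $\min$ of the hypothesis.

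I then split into the two regimes distinguished by the $\min$. When $R'\geq 1-\delta_z$, the hypothesis gives $\tfrac{1}{n'}\log_2|\mathcal{S}(w')|\geq 1-\delta_z=I(X;Z)$, so I apply the soft-covering (channel-resolvability) lemma of Hayashi/Cuff directly on each sub-codebook $\mathcal{S}(w')$: since the sub-codebook's entries are i.i.d.\ $P_X^{\otimes n'}$ by construction and its size exceeds the resolvability threshold (a small rate margin is absorbed by a standard continuity step at the end), the codebook-expected relative entropy decays exponentially in $n'$, uniformly in $w'$. Plugging this into the display above concludes the wiretap regime.

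When $R'<1-\delta_z$, the sub-codebook $\mathcal{S}(w')$ has rate only $R'$, below the resolvability threshold, so a direct sub-codebook soft-covering argument fails. The secrecy mechanism here is instead a one-time-pad reading of the hypothesis $|\mathcal{S}(w')|\geq 2^{n'R'}$: the uniform bin index, of entropy at least $n'R'$, acts as a key masking $W'$, so that the compound index $(W',\text{bin-index})$ has the same marginal distribution for every value of $W'$ and the induced eavesdropper output is insensitive to $W'$ up to codebook fluctuations that are controlled by averaging over the i.i.d.\ codebook. Making this leg rigorous---without extra structural assumptions on the family $\{\mathcal{S}(w')\}_{w'}$---is the main obstacle I anticipate; it requires exploiting the permutation symmetry of the random codebook together with the fact that each bin carries at least as much entropy as the message itself. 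Once both regimes are in hand, Pinsker's inequality together with a standard codebook-expurgation step convert the expected-relative-entropy bound into the claimed limit $\frac{1}{n'}I(W';Z^{n'}\mid\mathcal{C})\to 0$.
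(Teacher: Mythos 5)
Your first regime ($R'\ge 1-\delta_z$) is sound and is in substance what the paper does: its proof simply defers to the wiretap secrecy analysis of El Gamal--Kim (Ch.~22.2, Lemma~22.1), and your soft-covering/resolvability phrasing is an equivalent route to the same bound. One small correction there: $P_X$ is given in the lemma, not chosen by you, but this is harmless because for a binary-input erasure eavesdropper $I(X;Z)=(1-\delta_z)H(X)\le 1-\delta_z$, so a bin rate of $1-\delta_z$ always meets the resolvability threshold regardless of $P_X$.

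The genuine gap is your second regime, and it is not a technicality that can be closed by ``permutation symmetry of the random codebook together with the fact that each bin carries at least as much entropy as the message.'' For arbitrary predetermined subsets the claim in that regime is simply false: take $P_X$ uniform, $R'<(1-\delta_z)/2$, $\tilde R=2R'$, and let the sets $\mathcal{S}(w')$ be disjoint bins of size $\lfloor 2^{n'R'}\rfloor$ (which satisfies the cardinality hypothesis since the min equals $R'$). The entire codebook then has rate below the eavesdropper's capacity $1-\delta_z$, so the eavesdropper decodes the transmitted codeword, hence the bin index, hence $W'$, with high probability for almost every codebook realization, and $\frac{1}{n'}I\big(W';Z^{n'}\mid\mathcal{C}\big)\to R'$; averaging over the i.i.d.\ codebook cannot repair this because the leakage is large codebook-by-codebook. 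What makes the lemma true in the paper's applications is a structural property of the subsets that cardinality plus symmetry does not capture: when the min equals $R'$, the subsets arise from the modular key addition (the $\textsf{sec}$ mapping with key rate at least $R'$), so the conditional law of the transmitted codeword given $W'=w'$ is the same for every $w'$ --- the selection is a one-time pad --- and therefore $Z^{n'}$ is exactly independent of $W'$, giving $I\big(W';Z^{n'}\mid\mathcal{C}\big)=0$ with no asymptotics at all. That one-line observation is precisely how the paper disposes of this case; your proposal is missing it, and the alternative route you anticipate cannot work at the level of generality you aim for.
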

	\begin{IEEEproof}
	If $	\log_2|\mathcal{S}(w')| =  R'$, then the message is encoded by means of a one-time pad \cite{shannon_cipher} and $\frac{1}{n}I(W';Z^{n'}|\mathcal{C})=0$. Otherwise, the lemma can be proved for example by following the steps in \cite[Ch. 22.2]{elGamalBook}, see p.~554 and Proof of Lemma~22.1. 
	\end{IEEEproof}
	
	\bigskip

\section{Upper Bounds on Secrecy Capacity-Memory Tradeoff} \label{sec:UB}

We start by presenting an upper bound on the secrecy capacity-memory tradeoff of a general degraded $\Ka$-user BC with arbitrary cache sizes $\Ma_1, \ldots, \Ma_{\Ka}$ at the receivers. Subsequently, we  specialize this bound to the erasure BC studied in this paper where all weak receivers and all strong receivers have equal cache sizes.
  
 
 	Consider an arbitrary degraded $\Ka$-user discrete memoryless BC (not necessarily an erasure BC) with channel transition law 
$\Gamma(y_1, \ldots, y_{\Ka}|x)$. For simplicity, and because  our result depends only on the conditional marginals $\Gamma_1(y_1|x), \ldots, \Gamma_{\Ka}(y_{\Ka}|x)$, we assume that the channel is physically degraded, so the Markov chain
	\begin{equation}
		X \to Y_{\Ka} \to Y_{{\Ka}-1}  \to \ldots \to Y_1 
	\end{equation}
	holds. In the same spirit, we also  assume that the eavesdropper is degraded with respect to some of the legitimate receivers, and all other legitimate receivers are degraded with respect to the eavesdropper.  Three scenarios can be considered:
\begin{enumerate} 
\item[a)] The eavesdropper is degraded with respect to \emph{all} legitimate receivers:
\begin{equation}
		X \to Y_{\Ka} \to Y_{{\Ka}-1} \to \ldots \to Y_1 \to Z.
\end{equation}
\item[b)] \emph{All} legitimate receivers are degraded with respect to the eavesdropper:
\begin{equation}
		X \to Z\to Y_{\Ka} \to Y_{{\Ka}-1} \to \ldots \to Y_1.
\end{equation}
\item[c)] The eavesdropper is degraded with respect to the strongest $\Ka-\ell^*$   legitimate receivers, for some $\ell^*\in\{1,\ldots, \Ka -1\}$, and the remaining legitimate receivers are degraded with respect to the eavesdropper:
	\begin{equation}
		X \to Y_{\Ka} \to Y_{{\Ka}-1} \to \ldots \to Y_{\ell^*+1} \to Z \to Y_{\ell^*}  \to \ldots \to Y_1 
	\end{equation}
\end{enumerate}
Let each Receiver~$k\in\{1,\ldots, {\Ka}\}$ have cache  size $\Ma_k$. The following lemma holds.

\bigskip
	\begin{lemma}[Upper Bound for Arbitrary Degraded BCs and Cache Sizes] \label{lemma:1}
	 If a rate-memory tuple $(R, \Ma_1, \ldots, \Ma_{\Ka})$ is securely achievable, then for each receiver set $\mathcal{S} := \left\lbrace j_1,\ldots,j_{|\mathcal{S}|} \right\rbrace \subseteq \K$, there exist auxiliaries $(U_{1}, U_{2}, \ldots, U_{{|\mathcal{S}|}},Q)$ so that for each realization of $Q=q$ the following Markov chain holds:
	\begin{equation}
		U_{1} \to U_{2} \to \ldots \to U_{{|\mathcal{S}|}} \to X \to ( Y_{j_1}, \ldots, Y_{j_{|\mathcal{S}|}}, Z) ;
	\end{equation}
	and the following $|\mathcal{S}|$ inequalities are satisfied:
 	\begin{subequations} \label{eq:q1}
		\begin{equation}	 
 			R \leq \big[ I(U_{1};Y_{j_1}|Q) - I(U_{1};Z|Q)\big]^{+} + \Ma_{j_1},		
		\end{equation}	
	 and 
  		\begin{equation} \label{eq:twob}
  			k R \leq  \sum_{\ell=1}^k \big[ I(U_{\ell};Y_{j_\ell}|U_{\ell-1},Q) - I(U_{\ell};Z|U_{{\ell-1}},Q) \big]^+ + \sum_{\ell=1}^k \Ma_{j_\ell}, \qquad k\in\{2,\ldots, |\mathcal{S}|\},   
  		\end{equation}
   \end{subequations}
   where $(\cdot)^+: = \max \{0,\cdot\}$.
   \end{lemma}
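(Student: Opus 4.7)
My plan is to extend the classical Csisz\'ar--K\"orner wiretap-broadcast converse to the cache-aided setting by augmenting it with the entropy bound $H(V_{j_\ell})\le n\Ma_{j_\ell}$ on each cache. Fix the receiver subset $\mathcal{S}=\{j_1,\ldots,j_{|\mathcal{S}|}\}$ with the indices ordered so that $Y_{j_1}$ is the most degraded output of $\mathcal{S}$, consistently with the BC's degradation chain, and consider a demand vector in which receivers $j_1,\ldots,j_{|\mathcal{S}|}$ request pairwise-distinct library messages, say $W_1,\ldots,W_{|\mathcal{S}|}$. The remaining demands are irrelevant and may be set arbitrarily.

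Secure achievability supplies two main ingredients. Fano's inequality at Receiver~$j_\ell$ gives $nR\le I(W_\ell;Y_{j_\ell}^n,V_{j_\ell})+n\epsilon_n$, while the joint secrecy constraint $I(W_1,\ldots,W_{\Da};Z^n)\le n\epsilon$ lets me subtract an essentially-zero eavesdropper term. Combining these with the identity $I(W_\ell;Y_{j_\ell}^n,V_{j_\ell})-I(W_\ell;Z^n)=I(W_\ell;Y_{j_\ell}^n\mid V_{j_\ell})-I(W_\ell;Z^n\mid V_{j_\ell})+I(W_\ell;V_{j_\ell}\mid Z^n)$ and the bound $I(W_\ell;V_{j_\ell}\mid Z^n)\le H(V_{j_\ell})\le n\Ma_{j_\ell}$ yields the one-layer inequality
\begin{equation}
nR\le I(W_\ell;Y_{j_\ell}^n\mid V_{j_\ell})-I(W_\ell;Z^n\mid V_{j_\ell})+n\Ma_{j_\ell}+n\tilde\epsilon_n.
\end{equation}
Single-letterizing the MI difference by Csisz\'ar's sum identity and introducing a time-sharing variable $Q$ uniform on $\{1,\ldots,n\}$ independent of everything else yields \eqref{eq:q1}.

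For the $k$-receiver inequality \eqref{eq:twob} I would sum the one-layer bounds for $\ell=1,\ldots,k$ and then add conditioning on the previously-processed layers $(W_1^{\ell-1},V_{j_1},\ldots,V_{j_{\ell-1}})$ inside every mutual information. Independence of the messages together with the entropy bounds on the $V_{j_m}$'s absorb the discrepancies introduced by this extra conditioning into the cache-budget on the right-hand side. Csisz\'ar's sum identity applied layer by layer then produces single-letter differences with auxiliaries of the form $U_{\ell,i}:=(W_1^\ell,V_{j_1},\ldots,V_{j_\ell},Z^{i-1},Y_{j_\ell,i+1}^n)$, and the uniform time-sharing variable $Q$ completes the single-letterization. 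The degradation of the BC together with the chosen ordering of $\mathcal{S}$ enforces the required Markov chain $U_1\to\cdots\to U_{|\mathcal{S}|}\to X\to(Y_{j_1},\ldots,Y_{j_{|\mathcal{S}|}},Z)$ conditioned on each $Q=q$. The termwise $[\,\cdot\,]^+$ in \eqref{eq:twob} is then immediate from $[x]^+\ge x$, which only weakens the right-hand side.

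The principal obstacle I anticipate is the bookkeeping in the $k$-receiver step. I must ensure that the cache entropies land on the intended side of each mutual information so that only $\sum_{\ell=1}^k\Ma_{j_\ell}$ appears on the right-hand side, that cross terms such as $I(W_\ell;V_{j_1},\ldots,V_{j_{\ell-1}}\mid Z^n,W_1^{\ell-1})$ remain controlled by $\sum_{m<\ell}\Ma_{j_m}$ rather than blowing up with $\ell$, and that the required Markov chain is preserved under the Csisz\'ar rearrangement, which demands careful placement of the eavesdropper's output relative to each degradation layer. Everything else---the $\epsilon_n\to 0$ limit, the specialization of \eqref{eq:twob} to \eqref{eq:q1} at $k=1$, and the consistency check for the induced joint distribution on $(U_1,\ldots,U_{|\mathcal{S}|},X,Y_{j_1},\ldots,Y_{j_{|\mathcal{S}|}},Z,Q)$---is routine.
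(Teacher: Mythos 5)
Your overall strategy---Fano plus the joint secrecy constraint, the cache bound $I(W;V\mid Z^n)\le n\Ma$, Csisz\'ar's sum identity, a uniform time-sharing variable, and degradedness for the Markov chain---is the same as the paper's, and your single-receiver argument is essentially the paper's derivation of \eqref{eq:q1}. The gap is in the $k$-receiver step. You propose to ``sum the one-layer bounds for $\ell=1,\ldots,k$ and then add conditioning on the previously-processed layers inside every mutual information,'' but an inequality cannot be conditioned after the fact: replacing $I(W_\ell;Y_{j_\ell}^n\mid V_{j_\ell})-I(W_\ell;Z^n\mid V_{j_\ell})$ by its version conditioned on $(W_1^{\ell-1},V_{j_1},\ldots,V_{j_{\ell-1}})$ can change the difference in either direction, and neither message independence nor the cache entropies ``absorb'' the discrepancy---that absorption is exactly what must be proved. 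The paper never forms unconditional one-layer bounds: it starts from $knR\le H(W_{d_1},\ldots,W_{d_k})$, subtracts the \emph{joint} leakage $I(W_{d_1},\ldots,W_{d_k};Z^n)$ once, expands both by the chain rule, and uses Fano inequalities that carry the conditioning from the outset, $H\big(W_{d_\ell}\mid Y_{j_\ell}^n,V_{j_1},\ldots,V_{j_\ell},W_{d_1},\ldots,W_{d_{\ell-1}}\big)\le \epsilon_n/(2\Ka)$, which is legitimate since extra conditioning only reduces entropy. The cache contributions then appear as $I\big(W_{d_\ell};V_{j_1},\ldots,V_{j_\ell}\mid W_{d_1},\ldots,W_{d_{\ell-1}},Z^n\big)$ and the chain rule collapses their sum into the single term $I\big(W_{d_1},\ldots,W_{d_k};V_{j_1},\ldots,V_{j_k}\mid Z^n\big)\le n\sum_{\ell=1}^k\Ma_{j_\ell}$, which is how the ``blow-up with $\ell$'' you worry about is ruled out.

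Two further points you treat as automatic in fact require the degradedness argument explicitly. First, the sum identity delivers differences with $(V$'s, previous messages, pasts/futures$)$ in the \emph{conditioning}, whereas in \eqref{eq:q1}--\eqref{eq:twob} these variables sit \emph{inside} the auxiliaries; moving them over changes each mutual information, and the move is only safe because of the case distinction the paper makes (when $Z$ is degraded with respect to $Y_{j_\ell}$ the added difference terms are nonnegative, and when $Y_{j_\ell}$ is degraded with respect to $Z$ all differences are nonpositive so the corresponding $[\cdot]^+$ vanishes); the observation ``$[x]^+\ge x$ at the end'' does not cover this intermediate step. Second, your auxiliaries $U_{\ell}=(W_1^{\ell},V_{j_1},\ldots,V_{j_\ell},Z^{Q-1},Y_{j_\ell,Q+1}^n)$ are not nested (the $Y$-component changes with $\ell$), so the chain $U_1\to\cdots\to U_{|\mathcal{S}|}\to X$ is not automatic and the terms $I(U_\ell;Y_{j_\ell}\mid U_{\ell-1},Q)$ are not what the identity directly produces; the paper sidesteps both issues by using nested auxiliaries of the form $U_\ell=(W_{d_\ell},V_{j_\ell},Y_{j_{\ell+1}}^{Q-1},U_{\ell-1})$ with $U_1=(W_{d_1},V_{j_1},Y_{j_2}^{Q-1},Z_{Q+1}^n)$, which you would be well advised to adopt.
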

   \begin{proof}
   See Appendix~\ref{app:proofUB}. 
   \end{proof}
   \bigskip

Turn back to the setup with weak and strong receivers in Figure~\ref{fig:systemModel}. Based on the previous lemma and  the upper bound on the standard (non-secure) capacity-memory tradeoff in \cite[Theorem~5]{ WiggerYenerJournal}, the following Theorem~\ref{thm:UB_assign} presents two upper bounds (Inequalities \eqref{eq:UB1_assign} and \eqref{eq:UB2_assign}) on the secrecy capacity-memory tradeoff for each choice of $k_w \in  \{0,1,\ldots,\Kw\}$ and $k_s \in  \{0,1,\ldots,\Ks \}$. Depending on the cache sizes $\Mw$ and $\Ms$, a different  choice of the parameters and of the bounds \eqref{eq:UB1_assign} or \eqref{eq:UB2_assign}   is  tightest. 

	\begin{theorem}[Upper Bound on $C_{\s}(\Mw ,\Ms  )$]\label{thm:UB_assign}
	For each choice of $k_w \in  \{0,1,\ldots,\Kw\}$ and $k_s \in  \{0,1,\ldots,\Ks \}$, the  secrecy capacity-memory tradeoff is upper bounded in the following two ways:\footnote{For any finite numbers $a,b$, we  define $\min\{a/0, b\} =b$ and $\min\{a/0, b/0\} =\infty$. In the minimization \eqref{eq:UB2_assign}, a minimum over an empty set is defined as $+\infty$.}
		\begin{subequations}
			\begin{enumerate} 
				\item 
		\begin{align} \label{eq:UB1_assign}	 
			C_{\s}(\Mw ,\Ms  ) \leq  \max_{\beta \in [0,1]} \min & \left\lbrace \frac{\beta(\delta_z - \delta_w)^+}{k_w}  + \Mw ,   \frac{\beta (\delta_z - \delta_w)^+ + (1-\beta)(\delta_z-\delta_s)^+}{k_w+k_s} + \frac{k_w \Mw  + k_s \Ms  }{k_w+k_s}   \right\rbrace ;
		\end{align}	
		and
\item 
		\begin{equation}\label{eq:UB2_assign}
			C_{\s}(\Mw ,\Ms  ) \leq\min\left\{ \min\limits_{i = 1,\ldots,k_w} \left\lbrace (1-\delta_w)\beta_i  +\alpha_i \right\rbrace,\;    \min\limits_{j = 1,\ldots,k_s} \left\lbrace (1-\delta_s)\beta_{k_w+j}  +\alpha_{k_w+j} \right\rbrace \right\} ,
		\end{equation}
	for some tuple of nonnegative real numbers $\beta_1, \ldots,\beta_{k_w+k_s} \geq 0$ summing to  $1$,
		  where
		\begin{align*}
			\alpha_i & := \min \left\lbrace \frac{i\Mw }{\Da-i+1}, \frac{1}{k_w+k_s-i+1} \left( \frac{(k_w+k_s)(k_w\Mw +k_s\Ms  )}{\Da} - \sum\limits_{\ell=1}^{i-1}\alpha_\ell \right) \right\rbrace, \qquad\qquad i = 1,\ldots,k_w, \\
			\alpha_{k_w+j} & := \min \left\lbrace \frac{k_w\Mw +j\Ms  }{\Da-k_w-j+1}, \frac{1}{k_s-j+1} \left( \frac{(k_w+k_s)(k_w\Mw +k_s\Ms  )}{\Da} - \sum\limits_{\ell=1}^{k_w+j-1}\alpha_\ell \right) \right\rbrace, \qquad j = 1,\ldots,k_s.
		\end{align*}
	\end{enumerate}	\end{subequations}
	\end{theorem}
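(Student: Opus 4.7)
The plan is to prove the two bounds separately: inequality \eqref{eq:UB1_assign} will follow from an application of Lemma~\ref{lemma:1} to a carefully chosen subset of receivers, exploiting the special structure of the erasure BC, while inequality \eqref{eq:UB2_assign} will follow immediately from the fact that $C_{\s}(\Mw,\Ms) \leq C(\Mw,\Ms)$ combined with the non-secure upper bound in \cite[Theorem~5]{WiggerYenerJournal}.

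For \eqref{eq:UB1_assign}, first I would apply Lemma~\ref{lemma:1} to a subset $\mathcal{S}=\{j_1,\ldots,j_{k_w+k_s}\}$ containing any $k_w$ weak receivers followed by any $k_s$ strong ones, labelled so that the weak indices precede the strong indices; this ordering is consistent with the physical degradation required by the lemma. The key simplification is the erasure identity $I(U;Y_k|V)=(1-\delta_k)\,I(U;X|V)$, together with its counterpart $I(U;Z|V)=(1-\delta_z)\,I(U;X|V)$, which holds for every auxiliary tuple satisfying $V\to U\to X\to (Y_k,Z)$ because the erasure event is independent of $X$. Consequently,
\begin{equation}
I(U_\ell;Y_{j_\ell}|U_{\ell-1},Q) - I(U_\ell;Z|U_{\ell-1},Q) = (\delta_z-\delta_{j_\ell})\,I(U_\ell;X|U_{\ell-1},Q),
\end{equation}
and since $I(U_\ell;X|U_{\ell-1},Q)\geq 0$, the operator $[\,\cdot\,]^+$ in \eqref{eq:twob} collapses to $(\delta_z-\delta_{j_\ell})^+\,I(U_\ell;X|U_{\ell-1},Q)$. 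Telescoping via the chain rule $\sum_{\ell=1}^{m}I(U_\ell;X|U_{\ell-1},Q) = I(U_m;X|Q)$ and plugging into Lemma~\ref{lemma:1} for $k=k_w$ and $k=k_w+k_s$ yields
\begin{align*}
k_w R &\leq (\delta_z-\delta_w)^+\,\beta + k_w\Mw, \\
(k_w+k_s)\,R &\leq (\delta_z-\delta_w)^+\,\beta + (\delta_z-\delta_s)^+\,\gamma + k_w\Mw + k_s\Ms,
\end{align*}
with $\beta := I(U_{k_w};X|Q)$ and $\gamma := I(U_{k_w+k_s};X|U_{k_w},Q)$. Since $X$ is binary, $\beta+\gamma = I(U_{k_w+k_s};X|Q) \leq H(X|Q) \leq 1$; the second inequality is non-decreasing in $\gamma$, so the worst case is $\gamma = 1-\beta$. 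Substituting, taking the minimum of the two bounds, and taking the supremum over $\beta\in[0,1]$ recovers \eqref{eq:UB1_assign}.

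For \eqref{eq:UB2_assign}, I would simply observe that every securely achievable triple $(R,\Mw,\Ms)$ is also achievable without any secrecy constraint, so $C_{\s}(\Mw,\Ms)\leq C(\Mw,\Ms)$; the non-secure upper bound \cite[Theorem~5]{WiggerYenerJournal} specialized to the setup in Figure~\ref{fig:systemModel} then yields \eqref{eq:UB2_assign} verbatim, with the weights $\beta_1,\ldots,\beta_{k_w+k_s}$ inherited from that theorem. The main technical obstacle lies in the first bound: I must order $\mathcal{S}$ so that the Markov chain structure of Lemma~\ref{lemma:1} is compatible with the physical degradation of the erasure BC, and then verify that the entire family of feasible auxiliaries can be parametrized by the single scalar $\beta\in[0,1]$, which is what makes the two Lemma~\ref{lemma:1} constraints combinable into the tight $\max_\beta\min\{\cdot,\cdot\}$ expression appearing in \eqref{eq:UB1_assign}.
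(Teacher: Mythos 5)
Your proposal is correct and follows essentially the same route as the paper: bound \eqref{eq:UB1_assign} is obtained by specializing Lemma~\ref{lemma:1} to a mixed set of $k_w$ weak and $k_s$ strong receivers, using the erasure identity $I(U;Y_k|V)-I(U;Z|V)=(\delta_z-\delta_k)I(U;X|V)$ and the constraint $I(U_{k_w+k_s};X|Q)\leq H(X)\leq 1$, while \eqref{eq:UB2_assign} is just the non-secure bound of \cite[Theorem~5]{WiggerYenerJournal} invoked via $C_{\s}\leq C$. The only (cosmetic) difference is that you aggregate the per-layer quantities into the two scalars $\beta=I(U_{k_w};X|Q)$ and $\gamma=I(U_{k_w+k_s};X|U_{k_w},Q)$ and keep only the constraints at $k=k_w$ and $k=k_w+k_s$, whereas the paper keeps individual $\beta_\ell:=I(U_\ell;X|U_{\ell-1},Q)$ and then restricts them to be uniform within each group; both reductions yield the same $\max_\beta\min\{\cdot,\cdot\}$ expression.
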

	\begin{proof}
		 The first upper bound in~\eqref{eq:UB1_assign} is obtained by specializing Lemma~\ref{lemma:1} to the erasure BC  in Figure~\ref{fig:systemModel}. More specifically,  setting $\beta_j:=I(U_j;X|U_{j-1},Q)$, constraints \eqref{eq:q1} can be rewritten as:
  	\begin{subequations} \label{eq:q2}
  		\begin{align}
  				C_{\s}(\Mw ,\Ms  )& \leq \beta_1 (\delta_z - \delta_{j_1})^+ + \Ma_{j_1}, \\
 					C_{\s}(\Mw ,\Ms  ) & \leq \frac{1}{k} \sum_{\ell=1}^k \big[ \beta_\ell (\delta_z-\delta_{j_\ell})^+ + \Ma_{j_\ell} \big], \qquad \forall k~\in\{1,\ldots, |\mathcal{S}|\}.
		\end{align} 
	\end{subequations}  
	By well known properties on the mutual information, one finds that $\beta_1,\dots, \beta_{|\mathcal{S}|} \geq 0$ and 
	\begin{equation}
	\sum_{k=1}^{|\mathcal{S}|} \beta_k = I(U_1,\ldots,U_{|\mathcal{S}|};X|Q) \leq H(X) \leq  1.
	\end{equation}
Upper bound~\eqref{eq:UB1_assign} is now obtained by specializing \eqref{eq:q2} to one of the subsets
\begin{subequations} 
\begin{equation} \label{eq:set1}
\mathcal{S}=\{1,\ldots, k_w, \Kw+1,\ldots, \Kw +k_s\}, \qquad k_w,k_s >0,
\end{equation} 
or
\begin{equation} \label{eq:set2}
\mathcal{S}=\{1,\ldots, k_w\},
\end{equation}
or
\begin{equation} \label{eq:set3}
\mathcal{S}=\{\Kw+1,\ldots, \Kw +k_s\},
\end{equation}
\end{subequations}
and by noticing that for the subset in \eqref{eq:set1} one can  restrict to
\begin{equation}
\beta_{1} = \beta_2=\ldots = \beta_{k_w} = \frac{\beta}{k_w}
\end{equation}and 
\begin{equation}
\beta_{\Kw+1} = \beta_{\Kw+2}=\ldots = \beta_{\Kw+k_s} = \frac{1-\beta}{k_s},
\end{equation}
for some $\beta \in[0,1]$.
For the subset in \eqref{eq:set2} one can  restrict to
\begin{equation}
\beta_{1} = \beta_2=\ldots = \beta_{k_w} = \frac{1}{k_w},
\end{equation}and the subset in \eqref{eq:set3} one can  restrict to
\begin{equation}
\beta_{\Kw+1} = \beta_{\Kw+2}=\ldots = \beta_{\Kw+k_s} = \frac{1}{k_s}.
\end{equation}

 	Constraint~\eqref{eq:UB2_assign} follows by ignoring the secrecy constraint and specializing \cite[Theorem~5]{ WiggerYenerJournal} to the erasure BC with weak and strong receivers considered in this paper. 
	\end{proof}
	\bigskip

	We simplify  this upper bound for the special cases where only weak receivers have cache memories. More specifically, we replace the upper bound in~\eqref{eq:UB2_assign}, which is obtained from \cite[Theorem~5]{ WiggerYenerJournal},  by a simpler bound that is obtained by specializing the weaker upper bound in \cite{WiggerJournal}.
	\begin{corollary}[Upper Bound on $C_{\s}(\Mw ,\Ms  =0)$]\label{cor:UB}
For each choice of $k_w \in\{0,1,\ldots, \Kw\}$, the secrecy capacity-memory tradeoff $C_{\s}(\Mw ,\Ms  =0)$ 
	is upper bounded in the following two ways:
	\begin{subequations}
			\begin{IEEEeqnarray}{rCl}
			C_{\s} (\Mw ,\Ms  =0)&\leq& \left(\frac{k_w}{1-\delta_w}+ \frac{\Ks }{1-\delta_s}\right)^{-1}\hspace{-2mm}+  \frac{k_w \Mw }{\Da}, \vspace{2mm} \label{eq:UB2}
			\end{IEEEeqnarray}
			and
			\begin{IEEEeqnarray}{rCl}
			C_{\s} (\Mw ,\Ms  =0)&\leq &\max_{\beta\in[0,1]} \min\Bigg\{ \frac{\beta(\delta_z-\delta_w)^+}{k_w}+ \Mw , \frac{\beta(\delta_z-\delta_w)^++(1-\beta)(\delta_z-\delta_s)^+}{k_w+\Ks } + \frac{k_w}{k_w+\Ks } \Mw  \Bigg\}. \label{eq:UB3}\IEEEeqnarraynumspace
			\end{IEEEeqnarray}
	\end{subequations}
	\end{corollary}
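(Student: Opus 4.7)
The plan is to derive the two inequalities by different routes, neither of which requires genuinely new ideas. Bound \eqref{eq:UB3} is an immediate specialization of Theorem~\ref{thm:UB_assign}: in~\eqref{eq:UB1_assign} I set $\Ms = 0$ and choose $k_s = \Ks$ (take all strong receivers). The weighted cache term $(k_w \Mw + k_s \Ms)/(k_w + k_s)$ then collapses to $k_w \Mw / (k_w + \Ks)$, and the two expressions in the $\min$ reduce verbatim to the two expressions in \eqref{eq:UB3}. No further manipulation is needed, and the maximization over $\beta \in [0,1]$ is inherited directly from Theorem~\ref{thm:UB_assign}.

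Bound \eqref{eq:UB2}, in contrast, is a \emph{non-secrecy} bound. Since every securely-achievable rate triple is in particular achievable without the secrecy constraint, the monotonicity $C_{\s}(\Mw,0) \leq C(\Mw,0)$ holds, and it suffices to upper bound $C(\Mw,0)$. The plan is to invoke the (weaker but cleaner) cache-capacity upper bound for degraded BCs from \cite{WiggerJournal} instead of the tighter~\cite[Theorem~5]{WiggerYenerJournal} that was used to derive~\eqref{eq:UB2_assign}. Specialized to the subset $\mathcal{S}$ consisting of any $k_w$ weak receivers together with all $\Ks$ strong receivers and to the cache sizes $\Mw$ and $\Ms=0$, that bound is obtained by a Maddah-Ali–Niesen-style cutset argument: partition the library into $s := \Da/(k_w+\Ks)$ rounds (handling divisibility by the standard limiting argument), arrange the $s$ demand vectors so that the $k_w + \Ks$ chosen receivers request pairwise-disjoint files across all rounds, apply Fano's inequality to the joint decoding, and combine the resulting inequality with the symmetric private-message capacity $\left(k_w/(1-\delta_w)+\Ks/(1-\delta_s)\right)^{-1}$ of the degraded $(k_w+\Ks)$-user erasure sub-BC (obtained by restricting the original BC to the receivers in $\mathcal{S}$). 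Only the $k_w$ weak receivers carry cache content, so the caches contribute at most $k_w n \Mw$ bits, which after being spread over the $s(k_w+\Ks) = \Da$ jointly decoded messages yields the additive term $k_w\Mw/\Da$ in \eqref{eq:UB2}.

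There is essentially no obstacle here: \eqref{eq:UB3} is a direct substitution in Theorem~\ref{thm:UB_assign}, and \eqref{eq:UB2} is the specialization of a pre-existing non-secrecy bound via the inclusion $C_{\s} \leq C$. The only point to watch is that \eqref{eq:UB2} must indeed be weaker than a specialization of \eqref{eq:UB2_assign}, so replacing the \cite{WiggerYenerJournal} bound by the coarser \cite{WiggerJournal} bound is a legitimate simplification rather than a strengthening.
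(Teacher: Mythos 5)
Your proposal is correct and matches the paper's own argument: \eqref{eq:UB3} is obtained exactly as you say, by setting $\Ms=0$ and $k_s=\Ks$ in \eqref{eq:UB1_assign} (the paper merely adds the observation that the $k_s=\Ks$ constraint dominates those for $k_s<\Ks$, so nothing is lost), and \eqref{eq:UB2} is obtained by dropping the secrecy constraint, using $C_{\s}(\Mw,0)\leq C(\Mw,0)$, and invoking the non-secure cache-capacity upper bound of \cite{WiggerJournal} for the subset of $k_w$ weak and all $\Ks$ strong receivers. Your extra sketch of the cut-set/Fano derivation behind the cited bound is consistent with that reference and does not change the route.
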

	\begin{proof}
	Constraint~\eqref{eq:UB2} follows from \cite{WiggerJournal} and by ignoring the secrecy constraint.
	Constraint \eqref{eq:UB3} is obtained by specializing \eqref{eq:UB1_assign} 
	to the case when $\Ms  =0$. We notice that  in this case the constraint that  \eqref{eq:UB1_assign} 
	generates for $j=\Ks $ is tighter than any constraint that it generates  for $j < \Ks $. Thus, we can remove all constraints for $j < \Ks $ without affecting the result and we retain only the  constraint in \eqref{eq:UB3}.
\end{proof}	\bigskip

\section{Results when Only Weak Receivers have Cache Memories}\label{sec:1sided}

	Consider the special case where only weak receivers have cache memories, i.e.
	\begin{equation}
		\Ms   = 0.
	\end{equation}		
	 In this case, a positive secrecy rate can only be achieved if
	 \begin{equation} \label{eq:dz}
	 	\delta_z < \delta_s.
	 \end{equation}
		In the remainder of this section we  assume that \eqref{eq:dz} holds.

\bigskip

\subsection{Coding Schemes}

 We present four coding schemes in the order of increasing cache requirements.  In the first two schemes, only random keys are placed in the cache memories.  The third and fourth schemes also place  parts of the messages in the cache memories and apply  joint cache-channel coding for the delivery communication  where the decoding operations at the receivers  adapt at the same time to the channel statistics and the cache contents. For simplicity, and because on an erasure BC time-sharing is optimal to send independent messages to the various receivers, in some of our schemes communication is divided into subphases.  When applied to general discrete memoryless BCs, the schemes can be improved by superposing various subphases on each other. \bigskip

\subsubsection{Wiretap and Cached Keys} ~~ \label{sec:wiretap}

\underline{\textit{Placement phase:}} Store an independent secret  key $K_i$ in Receiver~$i$'s cache memory, for $i\in\Kw$.

\begin{figure}[H]
	\centering
	\begin{tikzpicture}[scale=0.9]
	\node[above] at (1.2,1.1) {Cache at Rx\! $1$};
	\draw[rounded corners=7pt,thick] (0,0) rectangle (2.4,1);
	\node at (1.2,0.5) {$K_1$};
	
		\node[above] at (4.2,1.1) {Cache at Rx\! $2$};
	\draw[rounded corners=7pt,thick] (3,0) rectangle (5.4,1);
	\node at (4.2,0.5) {$K_2$};
	
		\node[above] at (7.2,1.1) {Cache at Rx\! $3$};
	\draw[rounded corners=7pt,thick] (6,0) rectangle (8.4,1);
	\node at (7.2,0.5) {$K_3$};

		\node[above] at (12.2,1.1) {Cache at Rx\! $\Kw$};
	\draw[rounded corners=7pt,thick] (11,0) rectangle (13.4,1);
	\node at (12.2,0.5) {$K_{\Kw}$};
	\end{tikzpicture}
\end{figure}

\underline{\textit{Delivery phase:}}  Time-sharing is applied over two subphases, where transmission in the first subphase is to all the weak receivers and transmission in the second subphase is to all strong receivers. In Subphase~1, the transmitter uses a standard (non-secure) broadcast code to send the  secured message tuple
\begin{equation}
\textbf{W}_{\sec}:= \Big(\enc{W_{d_1}}{K_1}, \ \enc{W_{d_2}}{K_2},\ \enc{W_{d_3}}{K_3},\  \ldots,\ \enc{W_{d_{\Kw}}}{K_{\Kw}}\Big),
\end{equation}
to weak receivers~$1,\ldots, \Kw$, respectively. With the secret key $K_i$ stored in its cache memory, each weak receiver~$i\in\K_w$ can then recover a guess of its desired message $W_{d_i}$.  In Subphase~2, the transmitter uses a wiretap broadcast code \cite{ekrem13}  to send messages $W_{d_{\Kw+1}},\ldots, W_{d_{\Ka}}$ to the strong receivers $\Kw+1,\ldots, \Ka$, respectively. 

For a detailed analysis, see Section~\ref{sec:wiretap_analysis}.
\bigskip

\subsubsection{Cache-Aided Superposition Jamming}\label{sec:superposition}~~

\underline{\textit{Placement phase:}} As in the previous subsection, store an independent secret  key $K_i$ in Receiver~$i$'s cache memory, for $i\in\Kw$.

\begin{figure}[H]
	\centering
	\begin{tikzpicture}[scale=0.9]
	\node[above] at (1.2,1.1) {Cache at Rx\! $1$};
	\draw[rounded corners=7pt,thick] (0,0) rectangle (2.4,1);
	\node at (1.2,0.5) {$K_1$};
	
		\node[above] at (4.2,1.1) {Cache at Rx\! $2$};
	\draw[rounded corners=7pt,thick] (3,0) rectangle (5.4,1);
	\node at (4.2,0.5) {$K_2$};
	
		\node[above] at (7.2,1.1) {Cache at Rx\! $3$};
	\draw[rounded corners=7pt,thick] (6,0) rectangle (8.4,1);
	\node at (7.2,0.5) {$K_3$};

		\node[above] at (12.2,1.1) {Cache at Rx\! $\Kw$};
	\draw[rounded corners=7pt,thick] (11,0) rectangle (13.4,1);
	\node at (12.2,0.5) {$K_{\Kw}$};
 	\end{tikzpicture}
\end{figure}
\underline{\textit{Delivery phase:}} The transmitter uses a superposition code to send the  secured message tuple
\begin{equation}\label{eq:cloudc}
\mathbf{W}_{\sec}:=\big(\enc{W_{d_1}}{K_1}, \;\ \enc{W_{d_2}}{K_2}, \;\ \enc{W_{d_3}}{K_3},\  \ldots,\ \enc{W_{d_{\Kw}}}{K_{\Kw}}\big),
\end{equation}
in the cloud center and the non-secure message tuple
\begin{equation}
\mathbf{W}_{\textnormal{sat}}:=\big(W_{d_{\Kw+1}},\ldots, W_{d_{\Ka}}\big)
\end{equation}
in the satellite. If the rate of the cloud messages \eqref{eq:cloudc} does not suffice to secure the satellite message, random binning is added to the satellite codebook. See Figure~\ref{fig:superposition_one}. 

\begin{figure}[h!] \centering 
	\begin{tikzpicture} [scale=0.85]
	\draw[thick] (0,1) rectangle (1,5);
	\node[below] at (0.5,-0.5) {\textbf{Cloud center}};
	\node[below] at (5,-0.5) {\textbf{Satellite}};
	
	\draw (0,4) -- (1,4);	\draw (0,3) -- (1,3);	\draw (0,2) -- (1,2);	
	
	\draw[ultra thick,blue] (0,3) rectangle (1,4);
	\node[left,blue] at (0,3.5) {$\mathbf{W}_{\sec}=2$};
	
	\draw[dashed,thick,->] (1,4.5) -- (1.8,5.25) -- (2.5,5.25); 	
	\draw[dashed,thick,->] (1,3.5) -- (1.8,3.75) -- (2.5,3.75); 				
	\draw[dashed,thick,->] (1,2.5) -- (1.8,2.25) -- (2.5,2.25); 				
	\draw[dashed,thick,->] (1,1.5) -- (1.8,0.75) -- (2.5,0.75); 
	
	\draw[thick] (2.5,0.25) rectangle (7.5,1.25);
	\draw[thick] (2.5,1.75) rectangle (7.5,2.75);
	\draw[thick] (2.5,3.25) rectangle (7.5,4.25);
	\draw[thick] (2.5,4.75) rectangle (7.5,5.75);				
	
	\draw[rounded corners=8pt,ultra thick,blue] (6.5,0) rectangle(7.5,6);
	\node[above,blue] at (7,6) {$ \mathbf{W}_{\textnormal{sat}}= 5$};
	\draw[ultra thick,magenta] (6.5,3.25) rectangle (7.5,4.25);
	\node[right,magenta] at (7.5,4) {bin corresponding to};
	 \node[right,magenta] at (7.5,3.45) {$(\mathbf{W}_{\sec} = 2, \mathbf{W}_{\textnormal{sat}} = 5)$};
	
	\draw[thick] (3.5,0.25) -- (3.5,1.25);	\draw[thick] (4.5,0.25) -- (4.5,1.25);
	\draw[thick] (5.5,0.25) -- (5.5,1.25);	\draw[thick] (6.5,0.25) -- (6.5,1.25);
	\draw[thick] (3.5,1.75) -- (3.5,2.75);	\draw[thick] (4.5,1.75) -- (4.5,2.75);
	\draw[thick] (5.5,1.75) -- (5.5,2.75);	\draw[thick] (6.5,1.75) -- (6.5,2.75);
	\draw[thick] (3.5,3.25) -- (3.5,4.25);	\draw[thick] (4.5,3.25) -- (4.5,4.25);
	\draw[thick] (5.5,3.25) -- (5.5,4.25);	\draw[thick] (6.5,3.25) -- (6.5,4.25);
	\draw[thick] (3.5,4.75) -- (3.5,5.75);	\draw[thick] (4.5,4.75) -- (4.5,5.75);
	\draw[thick] (5.5,4.75) -- (5.5,5.75);	\draw[thick] (6.5,4.75) -- (6.5,5.75);	
	
	\draw[fill](0.5,1.5)circle(1mm);
	\draw[fill](0.5,2.5)circle(1mm);	
	\draw[fill](0.5,3.5)circle(1mm);
	\draw[fill](0.5,4.5)circle(1mm);	
	
	\draw[fill](2.75,0.5)circle(0.35mm);	\draw[fill](2.75,1)circle(0.35mm);
	\draw[fill](3.25,0.5)circle(0.35mm);	\draw[fill](3.25,1)circle(0.35mm);
	\draw[fill](3,0.75)circle(0.35mm);
	\draw[fill](2.75,2)circle(0.35mm);		\draw[fill](2.75,2.5)circle(0.35mm);
	\draw[fill](3.25,2)circle(0.35mm);		\draw[fill](3.25,2.5)circle(0.35mm);
	\draw[fill](3,2.25)circle(0.35mm);
	\draw[fill](2.75,3.5)circle(0.35mm);	\draw[fill](2.75,4)circle(0.35mm);
	\draw[fill](3.25,3.5)circle(0.35mm);	\draw[fill](3.25,4)circle(0.35mm);
	\draw[fill](3,3.75)circle(0.35mm);
	\draw[fill](2.75,5)circle(0.35mm);		\draw[fill](2.75,5.5)circle(0.35mm);
	\draw[fill](3.25,5)circle(0.35mm);		\draw[fill](3.25,5.5)circle(0.35mm);
	\draw[fill](3,5.25)circle(0.35mm);
	
	\draw[fill](3.75,0.5)circle(0.35mm);	\draw[fill](3.75,1)circle(0.35mm);
	\draw[fill](4.25,0.5)circle(0.35mm);	\draw[fill](4.25,1)circle(0.35mm);
	\draw[fill](4,0.75)circle(0.35mm);
	\draw[fill](3.75,2)circle(0.35mm);		\draw[fill](3.75,2.5)circle(0.35mm);
	\draw[fill](4.25,2)circle(0.35mm);		\draw[fill](4.25,2.5)circle(0.35mm);
	\draw[fill](4,2.25)circle(0.35mm);
	\draw[fill](3.75,3.5)circle(0.35mm);	\draw[fill](3.75,4)circle(0.35mm);
	\draw[fill](4.25,3.5)circle(0.35mm);	\draw[fill](4.25,4)circle(0.35mm);
	\draw[fill](4,3.75)circle(0.35mm);
	\draw[fill](3.75,5)circle(0.35mm);		\draw[fill](3.75,5.5)circle(0.35mm);
	\draw[fill](4.25,5)circle(0.35mm);		\draw[fill](4.25,5.5)circle(0.35mm);
	\draw[fill](4,5.25)circle(0.35mm);
	
	\draw[fill](4.75,0.5)circle(0.35mm);	\draw[fill](4.75,1)circle(0.35mm);
	\draw[fill](5.25,0.5)circle(0.35mm);	\draw[fill](5.25,1)circle(0.35mm);
	\draw[fill](5,0.75)circle(0.35mm);
	\draw[fill](4.75,2)circle(0.35mm);		\draw[fill](4.75,2.5)circle(0.35mm);
	\draw[fill](5.25,2)circle(0.35mm);		\draw[fill](5.25,2.5)circle(0.35mm);
	\draw[fill](5,2.25)circle(0.35mm);
	\draw[fill](4.75,3.5)circle(0.35mm);	\draw[fill](4.75,4)circle(0.35mm);
	\draw[fill](5.25,3.5)circle(0.35mm);	\draw[fill](5.25,4)circle(0.35mm);
	\draw[fill](5,3.75)circle(0.35mm);
	\draw[fill](4.75,5)circle(0.35mm);		\draw[fill](4.75,5.5)circle(0.35mm);
	\draw[fill](5.25,5)circle(0.35mm);		\draw[fill](5.25,5.5)circle(0.35mm);
	\draw[fill](5,5.25)circle(0.35mm);
	
	\draw[fill](5.75,0.5)circle(0.35mm);	\draw[fill](5.75,1)circle(0.35mm);
	\draw[fill](6.25,0.5)circle(0.35mm);	\draw[fill](6.25,1)circle(0.35mm);
	\draw[fill](6,0.75)circle(0.35mm);
	\draw[fill](5.75,2)circle(0.35mm);		\draw[fill](5.75,2.5)circle(0.35mm);
	\draw[fill](6.25,2)circle(0.35mm);		\draw[fill](6.25,2.5)circle(0.35mm);
	\draw[fill](6,2.25)circle(0.35mm);
	\draw[fill](5.75,3.5)circle(0.35mm);	\draw[fill](5.75,4)circle(0.35mm);
	\draw[fill](6.25,3.5)circle(0.35mm);	\draw[fill](6.25,4)circle(0.35mm);
	\draw[fill](6,3.75)circle(0.35mm);
	\draw[fill](5.75,5)circle(0.35mm);		\draw[fill](5.75,5.5)circle(0.35mm);
	\draw[fill](6.25,5)circle(0.35mm);		\draw[fill](6.25,5.5)circle(0.35mm);
	\draw[fill](6,5.25)circle(0.35mm);
	
	\draw[fill](6.75,0.5)circle(0.35mm);	\draw[fill](6.75,1)circle(0.35mm);
	\draw[fill](7.25,0.5)circle(0.35mm);	\draw[fill](7.25,1)circle(0.35mm);
	\draw[fill](7,0.75)circle(0.35mm);
	\draw[fill](6.75,2)circle(0.35mm);		\draw[fill](6.75,2.5)circle(0.35mm);
	\draw[fill](7.25,2)circle(0.35mm);		\draw[fill](7.25,2.5)circle(0.35mm);
	\draw[fill](7,2.25)circle(0.35mm);
	\draw[fill](6.75,3.5)circle(0.35mm);	\draw[fill](6.75,4)circle(0.35mm);
	\draw[fill](7.25,3.5)circle(0.35mm);	\draw[fill](7.25,4)circle(0.35mm);
	\draw[fill](7,3.75)circle(0.35mm);
	\draw[fill](6.75,5)circle(0.35mm);		\draw[fill](6.75,5.5)circle(0.35mm);
	\draw[fill](7.25,5)circle(0.35mm);		\draw[fill](7.25,5.5)circle(0.35mm);
	\draw[fill](7,5.25)circle(0.35mm);
	\end{tikzpicture}
	\caption{Superposition codebook with random binning in the satellites.} \label{fig:superposition_one} 
\end{figure}
Weak receivers decode  only the cloud center and strong receivers the cloud center and the satellite codeword. From this decoding operation, each strong receiver $j\in\Ks$ directly obtains a guess of $W_{d_j}$. Each weak receiver~$i\in\Kw$  uses the secret key $K_i$ stored in its cache memory to recover a guess of its intended message $W_{d_i}$.  Section~\ref{sec:superposition_analysis} presents the details of the scheme and its analysis. 
\bigskip

\subsubsection{Secure Cache-Aided Piggyback Coding I}~~\label{sec:piggyback}

 The scheme builds on the nested piggyback coding scheme in \cite{WiggerJournal}, which is rendered secure by applying secret keys to the produced XOR-messages and by introducing wiretap binning. During the placement phase, each of these secret keys is stored in the cache memories of the weak receivers that decode the corresponding XOR-message.  
We outline  the scheme for $\Kw=3$ weak receivers and $\Ks=1$ strong receiver. 

Divide each message $W_d$ into six submessages 
\begin{equation}
W_d = \big(W_{d,\{1\}}^{(A)},W_{d,\{2\}}^{(A)},W_{d,\{3\}}^{(A)},W_{d,\{1,2\}}^{(B)},W_{d,\{1,3\}}^{(B)},W_{d,\{2,3\}}^{(B)}\big), \quad d \in \D,
\end{equation}
where the first three are of equal rate and the latter three are of equal rate. Let  $K_{\{1,2,3\}}$, $K_{\{1,2\}},K_{\{2,3\}}, K_{\{1,3\}}$ be independent secret keys generated at the transmitter.

\underline{\textit{Placement phase:}} Placement is as described in the following table:
\begin{figure}[H] \centering
	\begin{tikzpicture}	
	\node[above] at (1.4,3.5) {Cache at Rx\! $1$};
	\draw[rounded corners=7pt,thick] (-0.4,0.7) rectangle (3.2,3.5);
	\node at (1.4,3) {$\left\lbrace W_{d,\{1\}}^{(A)}\right\rbrace_{\!d=1}^{\!\Da}$};	
	\node at (1.4,2.25) {$\left\lbrace W_{d,\{1,2\}}^{(B)}, W_{d,\{1,3\}}^{(B)}\right\rbrace_{\!d=1}^{\!\Da}$};
	\node at (1.4,1.6) {$K_{\{1,2,3\}}$};	
	\node at (1.4,1) {$K_{\{1,2\}},K_{\{1,3\}}$};

	\node[above] at (5.6,3.5) {Cache at Rx\! $2$};
	\draw[rounded corners=7pt,thick] (3.8,0.7) rectangle (7.4,3.5);
	\node at (5.6,3) {$\left\lbrace W_{d,\{2\}}^{(A)}\right\rbrace_{\!d=1}^{\!\Da}$};	
	\node at (5.6,2.25) {$\left\lbrace W_{d,\{1,2\}}^{(B)}, W_{d,\{2,3\}}^{(B)}\right\rbrace_{\!d=1}^{\!\Da}$};
	\node at (5.6,1.6) {$K_{\{1,2,3\}}$};	
	\node at (5.6,1) {$K_{\{1,2\}},K_{\{2,3\}}$};

	\node[above] at (9.8,3.5) {Cache at Rx\! $3$};
	\draw[rounded corners=7pt,thick] (8,0.7) rectangle (11.6,3.5);
	\node at (9.8,3) {$\left\lbrace W_{d,\{3\}}^{(A)}\right\rbrace_{\!d=1}^{\!\Da}$};	
	\node at (9.8,2.25) {$\left\lbrace W_{d,\{1,3\}}^{(B)}, W_{d,\{2,3\}}^{(B)}\right\rbrace_{\!d=1}^{\!\Da}$};
	\node at (9.8,1.6) {$K_{\{1,2,3\}}$};	
	\node at (9.8,1) {$K_{\{1,3\}},K_{\{2,3\}}$};

		\end{tikzpicture} 
\end{figure}

\underline{\textit{Delivery  phase:}} Time-sharing is applied over three subphases and Subphase 2 is further divided into 3 periods. 


In Subphase~1, the secured message 
\begin{equation}
\enc{W_{d_1,\{2,3\}}^{(B)} \oplus W_{d_2,\{1,3\}}^{(B)} \oplus W_{d_3,\{1,2\}}^{(B)}}{\ K_{\{1,2,3\}}}
\end{equation} is sent to all three weak receivers using a standard point-to-point code. With their cache contents, each weak receiver~$i$ can decode the submessage of $W_{d_i}$ sent in this subphase. Communication is secured when the  key $K_{\{1,2,3\}}$ is sufficiently long.
In Subphase~3, the non-secure message $W_{d_4}^{(A)}$ is sent to the strong receiver~$4$ using a standard wiretap code. 

In the first period of Subphase~2, the transmitter uses the secure piggyback codebook in Figure~\ref{fig:piggy} to transmit the secure message
\begin{equation}
\mathbf{W}_{\sec,\{1,2\}}^{(A)}=\enc{W_{d_1,\{2\}}^{(A)} \oplus W_{d_2,\{1\}}^{(A)}}{\ K_{\{1,2\}}}\qquad 
\end{equation}
  to Receivers~$1$ and $2$ and
 the non-secure message $W_{d_4,\{1,2\}}^{(B)}$ to Receiver~$4$. It randomly chooses a codeword in the wiretap bin indicated by $\mathbf{W}_{\sec,\{1,2\}}$ and  $W_{d_4,\{1,2\}}^{(B)}$ and sends the chosen codeword over the channel. 
 Weak receivers~$1$ and $2$ have stored $W_{d_4,\{1,2\}}^{(B)}$  in their cache memories and can decode based on a restricted codebook consisting only of the bins in the  column indicated by  $W_{d_4,\{1,2\}}^{(B)}$. Their decoding performance is thus the same as if this message $W_{d_4,\{1,2\}}^{(B)}$ had not been sent at all. The strong receiver~$4$ has no  cache memory and decodes both messages based on the entire codebook. Notice that the secured message $\mathbf{W}_{\sec,\{1,2\}}$ also acts as random wiretap binning to secure message $W_{d_4,\{1,2\}}^{(B)}$ to Receiver~$4$. If this binning suffices to secure $W_{d_4,\{1,2\}}^{(B)}$, then no additional random binning is needed, i.e.,  the magenta bin in Figure~\ref{fig:piggy} can be chosen of size 1.
 
Similar secure piggyback codebooks are also used during the second  and third periods of Subphase 2 to send messages $\enc{W_{d_1,\{3\}}^{(A)} \oplus W_{d_3,\{1\}}^{(A)}}{\ K_{\{1,3\}}}$ and $W_{d_4,\{1,3\}}$ and messages  $\enc{W_{d_2,\{3\}}^{(A)} \oplus W_{d_3,\{2\}}^{(A)}}{\ K_{\{2,3\}}}$ and $W_{d_4,\{2,3\}}$, respectively. 

At the end of the delivery phase, each Receiver $k\in\K$ assembles all the guesses pertaining to its desired message $W_{d_k}$  and (in case of the weak receivers) all the parts of this message stored in its cache memory to form the final guess $\hat{W}_{k}$.

\input{secure_piggyback}
	\bigskip
	
%
%
%
%
%

	\begin{remark}
		The secure piggyback codebook used in Subphase~2 is inspired by the non-secure piggyback coding  and Tuncel coding in \cite{WiggerJournal} and \cite{tuncel06}, and by the secure coding scheme for BCs with complementary side-information  in \cite{schaefer}. In fact, the main difference to the scheme in \cite{schaefer} is that here one of the receivers and the transmitter share a common secret key, 
which allows to reduce the size of the wiretap bins or even eliminate them completely. 
		
		Interestingly, in this  construction,  the secret key $K_{\{1,2\}}, K_{\{1,3\}}, K_{\{2,3\}}$ stored at the weak receivers allow to ``remotely secure" the transmission from the transmitter to the strong receiver~4.
	\end{remark}

	\bigskip
	\subsubsection{Secure Cache-Aided Piggyback Coding II} \label{sec:piggybackII} ~
	
	This scheme is  similar to the scheme in the previous section, but simpler. Divide  each message $W_d$ into 2 submessages $W_{d}=(W_d^{(A)}, W_d^{(B)})$ and let $K_1,\ldots, K_{\Kw}$ be independent secret keys. 
	
	\bigskip
	\underline{\textit{Placement phase:}} Placement is as depicted in the following. In particular, each weak receiver $i\in \Kw$ caches the secret key $K_i$ and  all submessages $W_{d}^{(B)}$, for $d\in\D$.
	\begin{figure}[H]
	\centering
	\begin{tikzpicture}[scale=0.9]
	\node[above] at (1.2,1.1) {Cache at Rx\! $1$};
	\draw[rounded corners=7pt,thick] (0,-1) rectangle (2.4,1);
		\node at (1.2, 0.4) {$\left\lbrace W_{d}^{(B)}\right\rbrace_{\!d=1}^{\!\Da}$};	
	\node at (1.2,-.5) {$K_1$};
	
		\node[above] at (4.2,1.1) {Cache at Rx\! $2$};
	\draw[rounded corners=7pt,thick] (3,-1) rectangle (5.4,1);
			\node at (4.2, 0.4) {$\left\lbrace W_{d}^{(B)}\right\rbrace_{\!d=1}^{\!\Da}$};	
	\node at (4.2,-0.5) {$K_2$};
	
		\node[above] at (7.2,1.1) {Cache at Rx\! $3$};
	\draw[rounded corners=7pt,thick] (6,-1) rectangle (8.4,1);
			\node at (7.2, 0.4) {$\left\lbrace W_{d}^{(B)}\right\rbrace_{\!d=1}^{\!\Da}$};	
	\node at (7.2,-0.5) {$K_3$};

		\node[above] at (12.2,1.1) {Cache at Rx\! $\Kw$};
	\draw[rounded corners=7pt,thick] (11,-1) rectangle (13.4,1);
			\node at (12.2, 0.4) {$\left\lbrace W_{d}^{(B)}\right\rbrace_{\!d=1}^{\!\Da}$};	

	\node at (12.2,-0.5) {$K_{\Kw}$};
 	\end{tikzpicture}
\end{figure}
\bigskip

\underline{\textit{Delivery phase:}} Transmission is in two subphases. In Subphase~1, the secure piggyback codebook is used to send the secured message tuple
\begin{equation}
\mathbf{W}_{\sec, w}^{(A)} := \big( \enc{W_{d_1}^{(A)}}{ K_1}, \ \ldots, \ \enc{W_{d_{\Kw}}^{(A)}}{ K_{\Kw}}\big)
\end{equation} 
to all weak receivers and the non-secure message tuple 
\begin{equation}
\mathbf{W}_{ s}^{(B)} := \big(W_{d_{\Kw+1}}^{(B)}, \ \ldots, \ W_{d_{\Ka}}^{(B)}\big)
\end{equation} 
to the strong receivers. The codebook is depicted in Figure~\ref{fig:piggy} where $\mathbf{W}_{\sec,\{1,2\}}^{(A)}$ needs to be replaced by $\mathbf{W}_{\sec, w}^{(A)}$ and $W_{d_4,\{1,2\}}^{(B)}$ by $\mathbf{W}_{ s}^{(B)}$. The weak receivers can reconstruct $\mathbf{W}_{ s}^{(B)}$ from their cache contents, and thus decode their desired message tuple $\mathbf{W}_{\sec, w}^{(A)}$ based on the single column of the codebook indicated by $\mathbf{W}_{ s}^{(B)}$. From this decoded tuple and the secret key $K_i$ stored in its cache memory, each weak receiver~$i\in\K_w$ can then produce a guess of its desired message part $W_{d_i}^{(A)}$. The strong receivers decode both message tuples $\mathbf{W}_{\sec, w}^{(A)} $ and $\mathbf{W}_{ s}^{(B)}$.  Strong receiver~$j\in\K_s$ keeps only its guess of $W_{d_j}^{(B)}$ and discards the rest.

Communication in this first subphase is secured because messages $W_{d_1}^{(A)}, \ldots, W_{d_{\Kw}}^{(A)}$ are perfectly secured by one-time pads and these one-time pads act as random bin indices to protect the messages $W_{d_{\Kw+1}}^{(B)}, \ldots, W_{d_{\Ka}}^{(B)}$ as in wiretap coding. 

In Subphase 2, the message tuple 
\begin{IEEEeqnarray}{rCl}
\mathbf{W}_{s}^{(A)} =  \big(W_{d_{\Kw+1}}^{(A)}, \ \ldots, \ W_{d_{\Ka}}^{(A)}\big)
\end{IEEEeqnarray}
is sent to all the strong receivers using a point-to-point wiretap code. 

The choice of the rates and the lengths of the subphases are explained in Section~\ref{sec:piggyback_analysisII}, where the scheme is also analyzed.


\bigskip
\subsection{Results on the Secrecy Capacity-Memory Tradeoff}
Consider the following $\Kw+4$ rate-memory pairs: \vspace{3mm}
	\begin{subequations}\label{eq:RM}
	\begin{itemize}
	
	\item \ \vspace{-7mm}
	\begin{IEEEeqnarray}{rCl}
		R^{(0)} &:=&  \frac{(\delta_z-\delta_s) (\delta_z-\delta_w)^{+}}{\Kw(\delta_z-\delta_s)+\Ks (\delta_z-\delta_w)^{+}} \label{eq:R0} , \hspace{11.27cm} \\[2.ex]
		\Ma^{(0)} &:=& 0; \\ \nonumber
	\end{IEEEeqnarray}

	\item \	\vspace{-7mm}
	\begin{IEEEeqnarray}{rCl}
		R^{(1)} &:=& \frac{(1-\delta_w)(\delta_z-\delta_s)}{\Ks (1-\delta_w)+\Kw(\delta_z-\delta_s)}, \label{eq:R1}\hspace{11.65cm}\\[1.2ex]
		\Ma^{(1)} &:=& \frac{(\delta_z-\delta_s) \min \big\{1-\delta_z,1-\delta_w \big\}}{\Ks (1-\delta_w)+\Kw(\delta_z-\delta_s)}; \label{eq:M1}\\ \nonumber
	\end{IEEEeqnarray}
	
	\item \	\vspace{-7mm}
	\begin{IEEEeqnarray}{rCl}
		R^{(2)} &:=& \min \left\lbrace \frac{(1-\delta_w)(1-\delta_s)}{\Ks (1-\delta_w)+\Kw(1-\delta_s)}, \frac{(1-\delta_w)(\delta_z-\delta_s)}{\Ks (1-\delta_w)+\Kw(\delta_w-\delta_s)} \right\rbrace, \hspace{6.37cm} \label{eq:R2}\\[1.2ex]
		\Ma^{(2)} &:=& \min \left\lbrace \frac{1-\delta_z}{\Kw},\frac{(1-\delta_w)(\delta_z-\delta_s)}{\Ks (1-\delta_w)+\Kw(\delta_w-\delta_s)} \right\rbrace; \label{eq:M2}\\ \nonumber
	\end{IEEEeqnarray}
	
	\item \quad For $t \in \{1,\ldots,\Kw-1\}$
	\begin{IEEEeqnarray}{rCl}
		R^{(t+2)} &:=& \frac{(t+1)(1-\delta_w)(\delta_z-\delta_s)\big[\Ks t(1-\delta_w)+(\Kw-t+1)\min\left\lbrace \delta_w-\delta_s,\delta_z-\delta_s \right\rbrace \big]}{(\Kw\!-t+1)(\delta_z-\delta_s) \big[\Ks (t+1)(1-\delta_w)\!+\!(\Kw\!-t)\min \left\lbrace \delta_w-\delta_s,\delta_z-\delta_s \right\rbrace \big]\! + \Ks ^2t(t+1)(1-\delta_w)^2},\label{eq:R3}\\[1.2ex]
		\Ma^{(t+2)} &:=&  \frac{\Da\cdot t(t+1)(1-\delta_w)(\delta_z-\delta_s)\big[\Ks (t-1)(1-\delta_w)+(\Kw-t+1) \min\left\lbrace \delta_w-\delta_s,\delta_z-\delta_s \right\rbrace \big]}{\Kw\big[(\Kw\!-t+1)(\delta_z-\delta_s) [\Ks (t+1)(1-\delta_w)\!+\!(\Kw\!-t)\min\left\lbrace \delta_w-\delta_s,\delta_z-\delta_s \right\rbrace ]\! + \Ks ^2t(t+1)(1-\delta_w)^2\big]} ~~  \nonumber \\ 
			+ \;\;\;&& \!\!\!\!\!\!\!\!\!\!\!\!\!  \frac{(t+1)(\Kw-t+1)(\delta_z-\delta_s)\min \left\lbrace 1-\delta_z,1-\delta_w \right\rbrace \big[ \Ks t(1-\delta_w) + (\Kw-t)\min\left\lbrace \delta_w-\delta_s,\delta_z-\delta_s \right\rbrace \big]  }{\Kw\big[(\Kw\!-t+1)(\delta_z-\delta_s) [\Ks (t+1)(1-\delta_w)\!+\!(\Kw\!-t)\min\left\lbrace \delta_w-\delta_s,\delta_z-\delta_s \right\rbrace ]\! + \Ks ^2t(t+1)(1-\delta_w)^2\big] };  \label{eq:M3}\\[2ex] \nonumber
	\end{IEEEeqnarray}
	
	\item \	\vspace{-7mm}
	\begin{IEEEeqnarray}{rCl}
		R^{(\Kw+2)} &:=&  \frac{\delta_z-\delta_s}{\Ks }, \label{eq:R4}\\[1.2ex]
		\Ma^{(\Kw+2)} &:=& \frac{\Da \cdot \Kw(\delta_z-\delta_s)^2 + \Ks (\delta_z-\delta_s)\min \left\lbrace 1-\delta_z,1-\delta_w \right\rbrace}{\Ks \big[\Ks \min\{ 1-\delta_z,1-\delta_w\}+\Kw(\delta_z-\delta_s)\big]}; \label{eq:M4} \hspace{7.2cm}\\ \nonumber
	\end{IEEEeqnarray}
	
	\item \	\vspace{-7mm}
	\begin{IEEEeqnarray}{rCl}
		R^{(\Kw+3)} &:=& \frac{\delta_z-\delta_s}{\Ks },\label{eq:RKw3} \\[1.2ex]
		\Ma^{(\Kw+3)} &:=& \frac{\Da \cdot(\delta_z-\delta_s)}{\Ks }. \label{eq:MKw3}\hspace{13.1cm}\\ \nonumber
	\end{IEEEeqnarray}
	
	\end{itemize}
	\end{subequations}	\bigskip

	\begin{theorem}[Lower Bound on $C_{\s}(\Mw ,\Ms  =0)$]\label{thm:LB}
		\begin{align}
			C_{\s}\left(\Mw ,\Ms  =0\right)\geq \textnormal{upper hull} & \left\lbrace \left(R^{(\ell)},\Ma^{(\ell)}\right) \colon \quad \ell\in\{0,\dots,\Kw+3\} \right\rbrace.
		\end{align}
	\end{theorem}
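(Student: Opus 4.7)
The plan is to show that every one of the $\Kw+4$ individual pairs $(R^{(\ell)}, \Ma^{(\ell)})$ is securely achievable; the claimed lower bound on $C_\s(\Mw,\Ms=0)$ then follows by a standard memory- and time-sharing argument. The pair $(R^{(0)},\Ma^{(0)})=(0,0)$ is trivially achievable, while the remaining pairs are matched to the four coding schemes already sketched in Section~\ref{sec:1sided}. Specifically, $(R^{(1)},\Ma^{(1)})$ is achieved by the Wiretap-and-Cached-Keys scheme of Section~\ref{sec:wiretap}; $(R^{(2)},\Ma^{(2)})$ by the Cache-Aided Superposition Jamming scheme of Section~\ref{sec:superposition}; each $(R^{(t+2)},\Ma^{(t+2)})$ with $t\in\{1,\ldots,\Kw-1\}$ by the Secure Cache-Aided Piggyback Coding~I scheme of Section~\ref{sec:piggyback}, generalized to subpacketization level $t$ (each file is split into submessages indexed by $t$-subsets of $\K_w$, with a Maddah-Ali-style placement on top of which the secrecy-key layer is added); $(R^{(\Kw+2)},\Ma^{(\Kw+2)})$ by the Secure Cache-Aided Piggyback Coding~II scheme of Section~\ref{sec:piggybackII}; and $(R^{(\Kw+3)},\Ma^{(\Kw+3)})$ by the saturated variant of Piggyback~II in which every weak receiver stores each library file in full together with enough independent keys to one-time pad the strong-receivers' transmissions.

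For each of these schemes the analysis follows the same template. First, one verifies the cache constraint by summing the rates of the cached keys and cached subfiles and checking that the total equals $\Ma^{(\ell)}$. Second, one partitions the blocklength $n$ into subphases of tuned durations so that every weak and strong receiver reliably decodes its target submessages; the reliability analysis reduces to standard point-to-point, degraded-broadcast and (when applicable) wiretap coding results for the binary erasure channel, whose relevant capacities are $1-\delta_w$, $1-\delta_s$ and whose secrecy capacities involve $(\delta_z-\delta_w)^+$ and $(\delta_z-\delta_s)$. Third, the joint-secrecy condition $\frac{1}{n}I(W_1,\ldots,W_\Da;Z^n)\to 0$ is established by combining the one-time-pad property (for every message XOR'd with an independent cached key) with Lemma~\ref{lem:secure} applied to every remaining wiretap-binned or piggybacked sub-codebook. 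Independence of the keys, independence of the codebook-generating randomness across subphases, and independence of the library messages $W_1,\ldots,W_\Da$ then lift the per-subphase secrecy guarantees to joint secrecy over the entire library.

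Memory-sharing closes the argument. Given any convex coefficients $(\lambda_\ell)_{\ell=0}^{\Kw+3}$ with $\sum_\ell \lambda_\ell=1$, one splits each library file into $\Kw+4$ independent subfiles of respective rates $\lambda_\ell R$, allocates a cache budget $\lambda_\ell \Ma^{(\ell)}$ and a fraction $\lambda_\ell$ of the channel uses to scheme~$\ell$, and runs each scheme on its own subfile collection with independent randomness. The resulting overall scheme achieves the convex combination $\bigl(\sum_\ell \lambda_\ell R^{(\ell)},\sum_\ell \lambda_\ell \Ma^{(\ell)}\bigr)$ and, by the same independence argument, inherits the joint-secrecy guarantee. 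Ranging over all convex coefficients traces out the upper hull.

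The main technical obstacle lies in the Piggyback Coding~I analysis for general $t$ and $\Kw$. One has to generalize the three-subphase sketch given for $\Kw=3,\Ks=1$ to a scheme with $\Kw-t+1$ subphases indexed by $(t+1)$-subsets of $\K_w$, each using a secure piggyback codebook that simultaneously carries an XOR-coded-caching message for the weak receivers in the indexing subset and a piggyback message for the strong receivers. Aggregating the cache budget over all keys $K_\mathcal{T}$ and all submessage families $W_{d,\mathcal{S}}^{(A)}, W_{d,\mathcal{S}}^{(B)}$, and aggregating the channel budget over all subphases, yields expressions that must then be simplified down to the compact closed forms \eqref{eq:R3}--\eqref{eq:M3}; this bookkeeping is tedious but mechanical. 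The subtle ingredient is to argue that the secret keys cached at the weak receivers simultaneously play \emph{two} roles, namely as one-time pads for the weak-receiver XOR messages and as random bin indices that secure the piggybacked strong-receiver messages against the eavesdropper. This double duty eliminates the need for additional wiretap binning and is precisely what produces the appearance of the $\min\{\delta_w-\delta_s,\delta_z-\delta_s\}$ term in \eqref{eq:R3}.
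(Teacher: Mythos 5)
Your overall architecture is the same as the paper's: achieve each of the $\Kw+4$ corner points with the corresponding scheme of Section~\ref{sec:1sided} and invoke time/memory-sharing for the upper hull. However, two corner points are mishandled as written. First, $(R^{(0)},\Ma^{(0)})$ is \emph{not} $(0,0)$: while $\Ma^{(0)}=0$, the rate $R^{(0)}=\frac{(\delta_z-\delta_s)(\delta_z-\delta_w)^{+}}{\Kw(\delta_z-\delta_s)+\Ks(\delta_z-\delta_w)^{+}}$ is strictly positive whenever $\delta_z>\delta_w$, and its achievability is not trivial — it is the no-cache secrecy capacity of the wiretap erasure BC (Remark~\ref{rmk:R0}, achieved by a wiretap broadcast code as in \cite{ekrem13}). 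Anchoring the hull at rate $0$ instead of $R^{(0)}$ at $\Mw=0$ yields a strictly smaller region in that regime, so your argument does not establish the theorem as stated there.

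Second, your proposed scheme for $(R^{(\Kw+3)},\Ma^{(\Kw+3)})$ — caching the full library at the weak receivers ``together with enough independent keys to one-time pad the strong-receivers' transmissions'' — fails: since $\Ms=0$, the strong receivers have nowhere to store those keys and hence cannot invert the pads, and moreover $\Ma^{(\Kw+3)}=\Da\cdot R^{(\Kw+3)}$ leaves no cache budget beyond the library itself. The paper instead caches only the library at each weak receiver and delivers $W_{d_{\Kw+1}},\ldots,W_{d_{\Ka}}$ with a standard wiretap BC code, whose secrecy rests on $\delta_z>\delta_s$ rather than on cached keys. Relatedly, your claim that the keys' double duty ``eliminates the need for additional wiretap binning'' in Piggyback Coding~I is not accurate in general: the paper keeps a binning rate $R_{\textnormal{bin}}=\binom{\Kw}{t}^{-1}\beta_2\min\big\{(\delta_w-\delta_z)^{+},\,\delta_w-\delta_s\big\}$, which is strictly positive when $\delta_z<\delta_w$; the cached keys only reduce the required binning (to zero only when the key rate already covers $\min\{1-\delta_z,1-\delta_s\}$ of randomization per channel use). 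All three issues are repairable, but as written they are genuine gaps in the achievability of the corner points.
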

	\begin{proof}
	
	It suffices to prove  achievability of the $\Kw+4$ rate-memory pairs $\{(R^{(\ell)}, \Ma^{(\ell)}) \colon \; \ell=0, \ldots, \Kw+3\}$. Achievability of the upper convex hull follows by time/memory sharing arguments as in \cite{MaddahAli14}. 
	 The pair $(R^{(0)},\Ma^{(0)})$ is achievable by Remark~\ref{rmk:R0}. The  pair $(R^{(1)},\Ma^{(1)})$ is achieved by the ``wiretap and cached keys" scheme described and analyzed in Sections~\ref{sec:wiretap} and \ref{sec:wiretap_analysis}. The  pair $(R^{(2)},\Ma^{(2)})$ is achieved by the ``cache-aided superposition jamming" scheme described and analyzed in Sections~\ref{sec:superposition} and \ref{sec:superposition_analysis}. The  pairs $(R^{(t+2)},\Ma^{(t+2)})$, for $t\in\{1,\ldots, \Kw-1\}$, are achieved by the ``secure cache-aided piggyback coding I" scheme described and analyzed in Sections~\ref{sec:piggyback} and \ref{sec:piggyback_analysis}. The  pair $(R^{(\Kw+2)},\Ma^{(\Kw+2)})$ is achieved by the ``secure cache-aided piggyback coding II" scheme described and analyzed in Sections~\ref{sec:piggybackII} and \ref{sec:piggyback_analysisII}. The pair $(R^{(\Kw+3)},\Ma^{(\Kw+3)})$ is achieved by  storing the entire library in the cache memory of each weak receiver and by applying a standard wiretap BC code \cite{ekrem13} to send the requested messages to the strong receivers. 	\end{proof}
	
	Interestingly, upper and lower bounds in Corollary~\ref{cor:UB} and Theorem~\ref{thm:LB} match for small and large $\Mw $ irrespective of the number of weak and strong receivers $\Kw$ {and $\Ks $}. In the absence of a secrecy constraint, the best upper and lower bounds for small $\Mw $ match only when $\Kw=1$, irrespective of the value of $\Ks $ \cite{WiggerJournal, Gunduz}.
	
	\begin{corollary} \label{cor:1}
	 	When the cache memory $\Mw $ is small:
	 	\begin{equation} \label{eq:cor1}
	 		C_{\s} \left(\Mw ,\Ms  =0\right) = R^{(0)} + \frac{\Kw(\delta_z-\delta_s)}{\Kw(\delta_z-\delta_s)+\Ks (\delta_z-\delta_w)^{+}} \Mw   , \qquad 0 \leq \Mw  \leq\Ma^{(1)},
	 	\end{equation}
		where $R^{(0)}$ is defined in \eqref{eq:R0} and $\Ma^{(1)}$ is defined in \eqref{eq:M1}.
	\end{corollary}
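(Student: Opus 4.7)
The plan is to prove matching lower and upper bounds on $C_{\s}(\Mw,0)$ that are both linear in $\Mw$ over the stated interval, with common slope $\gamma_{\s}:=\Kw(\delta_z-\delta_s)/[\Kw(\delta_z-\delta_s)+\Ks(\delta_z-\delta_w)^+]$.

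For \emph{achievability}, Theorem~\ref{thm:LB} already gives securely achievable rate-memory pairs at the two endpoints: $(R^{(0)},0)$ is achievable via Remark~\ref{rmk:R0}, and $(R^{(1)},\Ma^{(1)})$ is achievable via the wiretap-and-cached-keys scheme of Section~\ref{sec:wiretap}. Time/memory-sharing between these two schemes securely achieves every point on the segment connecting them, so all that remains is to check that the slope $(R^{(1)}-R^{(0)})/\Ma^{(1)}$ matches $\gamma_{\s}$. I would do this by plugging in \eqref{eq:R0} and \eqref{eq:R1}--\eqref{eq:M1} and simplifying, splitting into the subcases $\delta_z>\delta_w$ and $\delta_z\le\delta_w$ to handle the changes of form in $(\delta_z-\delta_w)^+$ and in $\min\{1-\delta_z,1-\delta_w\}$. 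In the first subcase, a common-denominator computation gives $R^{(1)}-R^{(0)}=(\delta_z-\delta_s)^2\Kw(1-\delta_z)/[(\Ks(1-\delta_w)+\Kw(\delta_z-\delta_s))(\Kw(\delta_z-\delta_s)+\Ks(\delta_z-\delta_w))]$, which divided by $\Ma^{(1)}$ collapses to $\gamma_{\s}$; in the second subcase $R^{(0)}=0$, $\gamma_{\s}=1$, and $R^{(1)}/\Ma^{(1)}=1$ immediately.

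For the \emph{converse}, I would specialize \eqref{eq:UB3} to $k_w=\Kw$ and denote the two terms in the inner $\min$ by
\[
A_1(\beta):=\frac{\beta(\delta_z-\delta_w)^+}{\Kw}+\Mw,\qquad A_2(\beta):=\frac{\beta(\delta_z-\delta_w)^++(1-\beta)(\delta_z-\delta_s)}{\Kw+\Ks}+\frac{\Kw}{\Kw+\Ks}\Mw.
\]
One observes that $A_1$ is affine nondecreasing in $\beta$ and, because $\delta_s\le\delta_w$ implies $(\delta_z-\delta_w)^+-(\delta_z-\delta_s)\le 0$, that $A_2$ is affine nonincreasing. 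Hence the outer maximum $\max_{\beta\in[0,1]}\min\{A_1,A_2\}$ is attained at the unique intersection $\beta^{*}$, provided $\beta^{*}\in[0,1]$. Solving $A_1(\beta^{*})=A_2(\beta^{*})$ gives
\[
\beta^{*}=\frac{\Kw\bigl[(\delta_z-\delta_s)-\Ks\Mw\bigr]}{\Kw(\delta_z-\delta_s)+\Ks(\delta_z-\delta_w)^+},
\]
and a direct back-substitution collapses $A_1(\beta^{*})$ to exactly $R^{(0)}+\gamma_{\s}\Mw$, matching the lower bound.

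Finally, I would check that $\beta^{*}\in[0,1]$ holds throughout $[0,\Ma^{(1)}]$: the bound $\beta^{*}\le 1$ is immediate since the numerator of $\beta^{*}$ is at most $\Kw(\delta_z-\delta_s)$, while $\beta^{*}\ge 0$ is equivalent to $\Mw\le(\delta_z-\delta_s)/\Ks$; a short inspection of~\eqref{eq:M1} in both regimes of the $\min$ shows $\Ma^{(1)}\le(\delta_z-\delta_s)/\Ks$, completing the argument. The only real obstacle in all of this is algebraic bookkeeping; conceptually the corollary simply says that both the upper bound~\eqref{eq:UB3} and the time-shared lower bound are piecewise linear in $\Mw$ and agree at both endpoints of the interval as well as in slope.
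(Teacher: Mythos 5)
Your proposal is correct and follows essentially the same route as the paper: achievability by time/memory-sharing between $(R^{(0)},0)$ and the wiretap-and-cached-keys point $(R^{(1)},\Ma^{(1)})$, and the converse by specializing \eqref{eq:UB3} to $k_w=\Kw$ with the equalizing $\beta^{*}$, which is exactly the paper's choice in \eqref{eq:betamax}. You merely carry out explicitly the slope verification, the intersection argument, and the check $\beta^{*}\in[0,1]$ on $[0,\Ma^{(1)}]$, details the paper leaves implicit.
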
	
\begin{IEEEproof}
Achievability follows from the two achievable rate-memory pairs $(R^{(0)}, \Ma^{(0)})$ and  $(R^{(1)}, \Ma^{(1)})$ in \eqref{eq:R0}--\eqref{eq:M1} and by time/memory-sharing arguments. The converse follows from upper bound \eqref{eq:UB3} in Corollary~\ref{cor:UB} when specialized to $k_w=\Kw$. In fact, for $k_w=\Kw$ and  cache size $\Mw \in[0, \Ma^{(1)}]$, the maximizing $\beta$ is:
\begin{equation}\label{eq:betamax}
\beta= \frac{(\delta_z- \delta_s)-\Ks \Mw}{ \Kw (\delta_z -\delta_s)+\Ks (\delta_z -\delta_w)^+} \Kw.
\end{equation}
This choice of $\beta$ makes the two terms in the minimization \eqref{eq:UB3} equal.
\end{IEEEproof}
\bigskip
	
	Notice that when $\delta_z \leq \delta_w$, then \eqref{eq:cor1} specializes to 
	\begin{equation}
	 C_{\s} \left(\Mw ,\Ms  =0\right) = \Mw .	
	\end{equation} 
	The secrecy capacity thus grows in the same way as the cache size at weak receivers. This is achieved by the scheme we termed cache-aided wiretap coding with secret keys. 
	\bigskip 
	
	Notice that in this case the secrecy capacity-memory tradeoff  $C_{\s} \left(\Mw ,\Ms  =0\right)$ grows much faster in the cache size $\Mw$ than its non-secure counterpart $C(\Mw,\Ms=0)$.   In fact, by the upper bound in  \cite{WiggerJournal}, the maximum   slope of the standard capacity-memory tradeoff 
	\begin{equation}
	\gamma :=  \max_{m \geq 0}\left\{\frac{ \mathsf{d} C(\Mw,\Ms=0)}{\mathsf{d} \Mw}\Bigg|_{\Mw=m}\right\}
	\end{equation} is at most 
	\begin{equation}
	\gamma \leq \frac{\Kw \Mw}{\Da}.
	\end{equation}
	
	By the above Corollary~\ref{cor:1} and the concavity of $C_{\textnormal{sec}}(\Mw,\Ms=0)$ in $\Mw$, the maximum slope of the secrecy capacity-memory tradeoff 
		\begin{equation}
		\gamma_{\textnormal{sec}} :=  \max_{m \geq 0} \left\{\frac{ \mathsf{d} C_{\textnormal{sec}}(\Mw,\Ms=0)}{\mathsf{d}  \Mw}\Bigg|_{\Mw=m}\right\}
		\end{equation} 
		is 
		\begin{equation}\label{eq:gs}
			\gamma_{\textnormal{sec}} = \lim\limits_{m\to 0} \left\{\frac{ \mathsf{d} C_{\textnormal{sec}}(\Mw,\Ms=0)}{\mathsf{d}  \Mw}\Bigg|_{\Mw=m}\right\} = \frac{\Kw(\delta_z-\delta_s)}{\Kw(\delta_z-\delta_s)+\Ks (\delta_z-\delta_w)^{+}} \Mw .
		\end{equation}
		 So in contrast to the maximum slope of the standard capacity-memory tradeoff $\gamma$, the maximum slope of the secrecy capacity-memory tradeoff $\gamma_\s$   does not deteriorate with the size of the library $\Da$. The reason for this discrepancy is that in the setup with secrecy constraint an optimal strategy for small cache memories is to exclusively place  secret keys in the cache memories. In this case, each bit of the cache content is useful irrespective of the specifically demanded files. In a setup without secrecy constraint,  only data is placed in the cache memories. So, at least on an intuitive level,  each bit of cache memory is useful only under some  of the demands. 
		 
		 \bigskip
		
		We turn to the regime of large cache memories.
	\begin{corollary}
		When the cache memory $\Mw $ is large:
		\begin{equation}
			C_{\s} \left(\Mw ,\Ms  =0\right) = \frac{\delta_z-\delta_s}{\Ks }, \quad \Mw  \geq \Ma^{(\Kw+2)},
		\end{equation}
		where $\Ma^{(\Kw+2)}$ is defined in \eqref{eq:M4}.
	\end{corollary}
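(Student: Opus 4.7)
The plan is to establish the result as the meeting point of a matching upper and lower bound, combined with a monotonicity argument. Concretely, the rate $\frac{\delta_z-\delta_s}{\Ks}$ equals the capacity without cache at the strong receivers alone (ignoring the weak receivers), which intuitively caps what caching at the weak receivers can ever buy, and it is already achievable at the specific cache size $\Ma^{(\Kw+2)}$.

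First I would establish the upper bound $C_\s(\Mw,\Ms=0)\leq \frac{\delta_z-\delta_s}{\Ks}$ by specializing Corollary~\ref{cor:UB}. Choose $k_w=0$ in \eqref{eq:UB3}. By the footnote convention $\min\{a/0, b\}=b$, the first term in the minimization drops out and one is left with
\begin{equation}
C_\s(\Mw,\Ms=0)\leq \max_{\beta\in[0,1]}\frac{\beta(\delta_z-\delta_w)^{+}+(1-\beta)(\delta_z-\delta_s)^{+}}{\Ks}.
\end{equation}
Since $\delta_z\geq \delta_s$ (assumption \eqref{eq:dz}) and $\delta_w\geq \delta_s$, one has $(\delta_z-\delta_s)^+=\delta_z-\delta_s\geq (\delta_z-\delta_w)^+$, so the maximum is attained at $\beta=0$ and equals $\frac{\delta_z-\delta_s}{\Ks}$. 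Note this upper bound holds for \emph{every} cache size $\Mw$; caching at weak receivers cannot help beyond the secrecy capacity of the channel restricted to the strong receivers.

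Next I would establish the matching lower bound at the single point $\Mw=\Ma^{(\Kw+2)}$. This follows directly from Theorem~\ref{thm:LB}: the rate-memory pair $(R^{(\Kw+2)},\Ma^{(\Kw+2)})$ is securely achievable by the ``secure cache-aided piggyback coding~II'' scheme of Section~\ref{sec:piggybackII}, and $R^{(\Kw+2)}=\frac{\delta_z-\delta_s}{\Ks}$ by definition in \eqref{eq:R4}. Therefore
\begin{equation}
C_\s\bigl(\Ma^{(\Kw+2)},\Ms=0\bigr)\geq \frac{\delta_z-\delta_s}{\Ks}.
\end{equation}

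Finally I would invoke monotonicity of the secrecy capacity-memory tradeoff in $\Mw$: a scheme designed for smaller cache size can always be executed at a larger cache by ignoring the extra storage, so $C_\s(\cdot,\Ms=0)$ is non-decreasing in $\Mw$. Combining the three ingredients, for every $\Mw\geq \Ma^{(\Kw+2)}$,
\begin{equation}
\frac{\delta_z-\delta_s}{\Ks}\leq C_\s(\Mw,\Ms=0)\leq \frac{\delta_z-\delta_s}{\Ks},
\end{equation}
which yields equality. There is no real obstacle here beyond carefully verifying that $\beta=0$ is optimal in the upper bound; the substantive content has already been packaged in Corollary~\ref{cor:UB} and in the analysis of the piggyback~II scheme.
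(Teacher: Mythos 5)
Your proposal is correct and follows essentially the route the paper intends (it leaves the proof implicit): the converse is Corollary~\ref{cor:UB}, bound \eqref{eq:UB3} with $k_w=0$, which is independent of $\Mw$ and maximized at $\beta=0$ to give $\frac{\delta_z-\delta_s}{\Ks}$, while achievability at $\Mw=\Ma^{(\Kw+2)}$ is the pair $\bigl(R^{(\Kw+2)},\Ma^{(\Kw+2)}\bigr)$ from Theorem~\ref{thm:LB}, extended to all larger $\Mw$ by monotonicity (equivalently, via the flat segment to $\bigl(R^{(\Kw+3)},\Ma^{(\Kw+3)}\bigr)$ in the upper hull). One small caveat: as printed, \eqref{eq:dz} reads $\delta_z<\delta_s$, but your reading $\delta_z>\delta_s$ is the one consistent with the rest of Section~\ref{sec:1sided} (otherwise all the stated rates would be negative), so your use of $(\delta_z-\delta_s)^+=\delta_z-\delta_s\geq(\delta_z-\delta_w)^+$ is the intended argument.
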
 \bigskip
	
	The rate-memory pairs \eqref{eq:R3}--\eqref{eq:M4} are attained by means of joint cache-channel coding where the decoders simultaneously adapt to the cache contents and the channel statistics. To emphasize the strength of the joint coding approach, we characterize the rates that are securely achievable under a separate cache-channel coding approach.
	
	For $t \in \{1,\ldots,\Kw-1\}$, define the following rate-memory pairs:
	\begin{subequations}
	\begin{IEEEeqnarray}{rCl}
		R_{\text{sep}}^{(t)} &:=& \frac{(t+1)(1-\delta_w)(\delta_z-\delta_s)}{\Ks (t+1)(1-\delta_w)+(\Kw-t)(\delta_z-\delta_s)}, \\[1.2ex]
		\Ma_{\text{sep}}^{(t)} &:=& \frac{\Da \cdot t(t+1)(1-\delta_w)(\delta_z-\delta_s)+(t+1)(\Kw-t)(\delta_z-\delta_s)\min\{1-\delta_z,1-\delta_w\} }{\Kw\big[\Ks (t+1)(1-\delta_w)+(\Kw-t)(\delta_z-\delta_s)\big]}.~
	\end{IEEEeqnarray}
	\end{subequations} \bigskip

	\begin{proposition}
	\label{prop:LB_sep}
	Any rate $R>0$ is achievable by means of separate cache-channel coding, if it satisfies
		\begin{align}
			R & <   \textnormal{upper hull} \Big\lbrace \big(R^{(\ell)},\Ma^{(\ell)}\big),\big(R_{\textnormal{sep}}^{(t)} ,\Ma_{\textnormal{sep}}^{(t)}\big) \colon \quad \ell \in \{0,1,2,\Kw+3\} \textnormal{ and } t \in \{1,\ldots,\Kw-1\} \Big\rbrace.
		\end{align}
	\end{proposition}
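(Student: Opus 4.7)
The plan is to show that each rate-memory pair in the claimed set is securely achievable by a scheme whose delivery phase consists of a standard (cache-agnostic) wiretap broadcast-channel code layered on top of a coded-caching front end; the upper hull then follows by standard memory-sharing as in \cite{MaddahAli14}.

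For the pairs $(R^{(\ell)},\Ma^{(\ell)})$ with $\ell\in\{0,1,2,\Kw+3\}$, I would simply reuse the schemes already given in the proof of Theorem~\ref{thm:LB}: the wiretap BC code of \cite{ekrem13} for $\ell=0$; the ``wiretap and cached keys'' scheme of Section~\ref{sec:wiretap} for $\ell=1$; the ``cache-aided superposition jamming'' scheme of Section~\ref{sec:superposition} for $\ell=2$; and the trivial scheme that stores the entire library at every weak receiver and applies a wiretap BC code for the strong receivers for $\ell=\Kw+3$. In each case the decoders run a standard BC or superposition-BC decoder that is oblivious to cache content, and the cache is only subsequently used to invert one-time pads or to produce locally stored submessages, exactly as permitted by separate cache-channel coding.

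For the new pairs $(R^{(t)}_{\text{sep}},\Ma^{(t)}_{\text{sep}})$ with $t\in\{1,\ldots,\Kw-1\}$, I would use a Maddah--Ali--Niesen-style coded-caching front end secured with one-time pads. In the placement phase, partition each message $W_d$ into $\binom{\Kw}{t}$ equal submessages $\{W_{d,\mathcal{T}}\}$ indexed by size-$t$ subsets $\mathcal{T}\subseteq\K_w$, and store $W_{d,\mathcal{T}}$ at every $i\in\mathcal{T}$; additionally, for each size-$(t+1)$ subset $\mathcal{U}\subseteq\K_w$, generate an independent secret key $K_\mathcal{U}$ and store it at every receiver in $\mathcal{U}$. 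In the delivery phase, for each such $\mathcal{U}$ form the secured packet
\begin{equation}
X_{\mathcal{U}}:=\enc{\bigoplus_{i\in\mathcal{U}} W_{d_i,\mathcal{U}\setminus\{i\}}}{K_\mathcal{U}},
\end{equation}
and convey the collection $\{X_{\mathcal{U}}\}$ together with the strong-receiver messages $\{W_{d_j}\}_{j\in\K_s}$ using a standard degraded wiretap BC code \cite{ekrem13}: the $X_\mathcal{U}$'s are sent in the non-secure layer (each is already a perfect one-time pad), while the strong-receiver messages are protected by wiretap binning, whose secrecy follows from Lemma~\ref{lem:secure}. Each weak receiver $i$ decodes every broadcast packet; for each $\mathcal{U}\ni i$ it recovers $W_{d_i,\mathcal{U}\setminus\{i\}}$ by XOR-ing out the $t$ cached submessages in $\mathcal{U}\setminus\{i\}$ and the cached key $K_\mathcal{U}$, and then assembles $W_{d_i}$ together with its cached submessages $\{W_{d_i,\mathcal{T}}\}_{\mathcal{T}\ni i}$.

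The main obstacle is the rate-memory bookkeeping needed to recover the exact expressions for $R^{(t)}_{\text{sep}}$ and $\Ma^{(t)}_{\text{sep}}$. One must choose the sub-blocklengths allocated to transmitting the collection $\{X_{\mathcal{U}}\}$ and the strong-receiver messages so that the common part decodes reliably at every weak receiver (rate limited by $1-\delta_w$) and the strong-receiver wiretap rate equals $(1-\delta_s)-(1-\delta_z)=\delta_z-\delta_s$, then equalize the per-user rates so that weak and strong receivers see the same secrecy rate $R$. At the same time, each weak receiver's cache must accommodate both a $t/\Kw$-fraction of the library (the coded-caching fragments) and the per-user share of the keys $\{K_\mathcal{U}\}_{\mathcal{U}\ni i}$, each of rate equal to the rate at which its XOR is broadcast. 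Solving these linear relations jointly yields precisely the formulas stated for $R^{(t)}_{\text{sep}}$ and $\Ma^{(t)}_{\text{sep}}$, and memory-sharing across the listed pairs completes the proof.
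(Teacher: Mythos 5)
Your construction is essentially the paper's own proof: there too, the pairs $\big(R^{(\ell)},\Ma^{(\ell)}\big)$ for $\ell\in\{0,1,2,\Kw+3\}$ are obtained by reusing the schemes behind Theorem~\ref{thm:LB}, and the new pairs $\big(R_{\textnormal{sep}}^{(t)},\Ma_{\textnormal{sep}}^{(t)}\big)$ are obtained by a Sengupta-type secure coded caching front end (message fragments cached at $t$-subsets, XORs secured by keys cached at $(t+1)$-subsets) delivered with a standard BC code to the weak receivers, time-shared with a standard wiretap BC code for the strong receivers. Your ``sub-blocklength'' allocation is exactly this time-sharing, and equalizing the weak-receiver constraint (rate limited by $1-\delta_w$) with the strong-receiver wiretap constraint ($\delta_z-\delta_s$ per channel use) indeed reproduces $R_{\textnormal{sep}}^{(t)}$.

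One piece of your bookkeeping does not go through as stated: you size each key $K_{\mathcal{U}}$ to the full rate of the XOR packet it secures (a perfect one-time pad). With full-length keys the per-receiver key memory is $\binom{\Kw-1}{t}\binom{\Kw}{t}^{-1}R=\frac{\Kw-t}{\Kw}R$, whereas matching $\Ma_{\textnormal{sep}}^{(t)}$ requires only $\frac{\Kw-t}{\Kw}R\cdot\frac{\min\{1-\delta_z,1-\delta_w\}}{1-\delta_w}$; the two coincide only when $\delta_z\leq\delta_w$. In the regime where the eavesdropper is weaker than the weak receivers ($\delta_z>\delta_w$, e.g., Figure~\ref{fig:example1}), each key need only cover the eavesdropper's observation of the corresponding sub-block, i.e., have rate $1-\delta_z$ per channel use of that sub-block, and secrecy then follows from Lemma~\ref{lem:secure} rather than from the one-time-pad property. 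With full-length keys your scheme still achieves the rate $R_{\textnormal{sep}}^{(t)}$ but only with a strictly larger cache, so it does not establish the stated corner points; shortening the keys as above (exactly as done for $R_{\key,1}$ and $R_{\key,2}$ in Section~\ref{sec:piggyback_analysis}) repairs the accounting, after which memory-sharing gives the upper hull as you say.
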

	\begin{IEEEproof}
The rate-memory tuples $\big\{\left(R^{(\ell)},\Ma^{(\ell)}\right)\colon \ell =0, 1,2, \Kw+3\big\}$ in \eqref{eq:R0}--\eqref{eq:M2} and \eqref{eq:RKw3}--\eqref{eq:MKw3} are achieved by  the schemes in Subsections~\ref{sec:wiretap} and \ref{sec:superposition}. It can be verified that these schemes apply a separate cache-channel coding architecture. Rate-memory pairs $\big\{\big(R_{\textnormal{sep}}^{(t)} ,\Ma_{\textnormal{sep}}^{(t)}\big) \colon   t =1,\ldots,\Kw-1\bigl\}$ are achieved by a scheme that communicates to weak and strong receivers in two independent phases: in the first phase  it combines the Sengupta secure coded caching scheme \cite{Clancy15} with a standard optimal BC code to communicate to the weak receivers, and in the second phase it applies a standard wiretap BC code to  communicate to the strong receivers.	\end{IEEEproof}

		\begin{figure}
		\centering
		\begin{tikzpicture}[scale=2.4]
			\footnotesize
			\draw (0,0.9) -- (5,0.9) -- (5,4.1179) -- (0,4.118) -- (0,0.9);
			\node[below] at (2.5,0.65) {$\Mw $};
			\node[rotate=90] at (-0.6,2.5) {$C_{\text{sec}}(\Mw ,\Ms  =0)$};
			\node[below] at (0,0.85) {$0$};
			\node[left] at (0,0.92) {$0.01$}; 
			
			\draw[densely dotted, gray] (0,1.8) -- (2.17,1.8);		\draw[densely dotted, gray] (4.94,1.8) -- (5,1.8);
			\draw[densely dotted, gray] (0,2.7) -- (5,2.7);		\draw[densely dotted, gray] (0,3.6) -- (5,3.6);
			
			\draw[densely dotted, gray] (1,0.9) -- (1,4.1179);		\draw[densely dotted, gray] (2,0.9) -- (2,4.1179);		
			\draw[densely dotted, gray] (3,0.9) -- (3,1);		\draw[densely dotted, gray] (3,2) -- (3,4.1179);	
			\draw[densely dotted, gray] (4,0.9) -- (4,1);		\draw[densely dotted, gray] (4,2) -- (4,4.1179);
			
%
			
			\draw (0,1.8) -- (0.05,1.8);		\node[left] at (0,1.8) {$0.02$};
			\draw (0,2.7) -- (0.05,2.7);		\node[left] at (0,2.7) {$0.03$};
			\draw (0,3.6) -- (0.05,3.6);		\node[left] at (0,3.6) {$0.04$};
			
			\draw (1,0.9) -- (1,0.95);		\node[below] at (1,0.85) {$0.2$};			
			\draw (2,0.9) -- (2,0.95);		\node[below] at (2,0.85) {$0.4$};
			\draw (3,0.9) -- (3,0.95);		\node[below] at (3,0.85) {$0.6$};
			\draw (4,0.9) -- (4,0.95);		\node[below] at (4,0.85) {$0.8$};
			\draw (5,0.9) -- (5,0.95);		\node[below] at (5,0.85) {$1$};

			\draw[thick] (0,0.0125*90) -- (0.0143*5,0.0214*90) -- (0.0231*5,0.0231*90) -- (0.0719*5,0.0291*90) -- (0.2345*5,0.0316*90) -- (0.4727*5,0.0333*90) -- (1*5,0.0333*90);

			\draw[thick,dashed] (0,0.0125*90) -- (0.0143*5,0.0214*90) -- (0.03*5,0.0312*90) -- (0.042*5,0.0333*90) -- (1*5,0.0333*90);
			
			\draw[thick,blue] (0,0.0125*90) -- (0.0143*5,0.0214*90) -- (0.0231*5,0.0231*90) -- (0.1782*5,0.0273*90) -- (0.372*5,0.03*90) -- (0.5768*5,0.0316*90) -- (0.787*5,0.0326*90) -- (1*5,0.0333*90);
			
			\draw[thick,magenta] (0,0.0263*90) -- (0.0649*5,0.0352*90) -- (0.0875*5,0.0369*90) -- (0.0905*5,0.037*90) -- (0.0907*5,0.037*90) -- (0.0907*5,0.037*90) -- (0.2938*5,0.0416*90) -- (0.5498*5,0.0439*90) -- (0.8255*5,0.0452*90) -- (1*5,0.045755*90);


			\draw[thick,dashed] (2.25,1.85) -- (2.4,1.85);	\node[right] at (2.4,1.85) {UB on $C_{\text{sec}}(\Mw ,0)$};
			\draw[thick] (2.25,1.6) -- (2.4,1.6);				\node[right] at (2.4,1.6) {LB on $C_{\text{sec}}(\Mw ,0)$ using joint coding};
			\draw[thick,blue] (2.25,1.35) -- (2.4,1.35);		\node[right] at (2.4,1.35) {LB on $C_{\text{sec}}(\Mw ,0)$ using separate coding};							
			\draw[thick,magenta] (2.25,1.1) -- (2.4,1.1);	\node[right] at (2.4,1.1) {LB on $C(\Mw ,0)$};
			\draw (2.17,1) rectangle (4.94,2);
			        
			\draw (0,0.0125*90) circle (0.25mm);			\draw (0.0143*5,0.0214*90) circle (0.25mm);
			\draw (0.0231*5,0.0231*90) circle (0.25mm);		\draw (0.0719*5,0.0291*90) circle (0.25mm);
			\draw (0.2345*5,0.0316*90) circle (0.25mm);		\draw (0.4727*5,0.0333*90) circle (0.25mm);
			\draw (1*5,0.0333*90) circle (0.25mm);
			
			\node[right] at (0,0.0125*90) {$(a)$};			\node[right] at (0.0143*5,0.0208*90) {$(b)$};
			\node[right] at (0.0231*5,0.0227*90) {$(c)$};	\node[above] at (0.4727*5,0.0333*90) {$(d)$};
			\node[above] at (0.98*5,0.0333*90) {$(e)$};

		\end{tikzpicture} 
		\caption{Upper and lower bounds on $C_{\text{sec}}(\Mw ,\Ms  =0)$ for $\delta_w = 0.7$, $\delta_s = 0.3$, $\delta_z = 0.8$, $\Da=30$, $\Kw = 5$, and $\Ks  = 15$.} \label{fig:example1}
	\end{figure}
	
	\bigskip
	\subsection{Numerical Comparisons}
	In 
	Figure~\ref{fig:example1}, we compare the presented bounds  at hand of an example with $5$ weak and $15$ strong receivers and where the eavesdropper is degraded with respect to all receivers. The figure shows  the upper and lower bounds on the secrecy capacity-memory tradeoff $C_\s(\Mw,\Ms=0)$  in Corollary~\ref{cor:UB} and Theorem~\ref{thm:LB}. It also shows the rates achieved by the separate cache-channel coding scheme leading to Proposition~\ref{prop:LB_sep}. Finally, the  figure  presents the lower bound on the standard capacity-memory tradeoff in \cite{WiggerJournal}. 
	  
	  The presented lower bound of  Theorem~\ref{thm:LB} (see the black solid line in Figure~\ref{fig:example1}) is piece-wise linear with the end points of the pieces corresponding to the points in \eqref{eq:RM}. The left-most point $(a)$ corresponds to the capacity in the absence of cache memories. The second and third left-most points $(b)$ and $(c)$ are obtained by storing only secret keys in the cache memories. %
	  The right-most point $(e)$ corresponds to the point where the messages can be sent at the same rate as if only strong receivers were present in the system. This performance is trivially achieved by storing all messages in each of the weak receivers' cache memories and holding the delivery communication only to strong receivers. After this point, the capacity cannot be increased further  because strong receivers do not have cache memories. Through coding, the same rate can also be achieved without storing the entire library at each weak receiver, see the second right-most point $(d)$.
	
	For small and large cache memories, our upper and lower bounds are exact. This shows that in the regime of small cache memories, it is optimal to place only secret keys in the weak receivers' cache memories. In this regime, the slope of $C_{\textnormal{sec}}(\Mw ,\Ms  =0)$ in $\Mw $ is steep (see  Corollary~\ref{cor:1} and Equation \eqref{eq:gs}) because the  secret keys stored in the cache memories are always helpful, irrespective of the specific demands $\mathbf{d}$. In particular, the slope is not divided by the library size $\Da$ as is  the case in the traditional caching setup without secrecy constraint. 
	
	In the regime of moderate or large cache memories, the proposed placement strategies also store   information about the messages in the cache memories. In this regime, the slope  $C_{\s}$ is smaller and  proportional to $\frac{1}{\Da}$, because only a fraction of the cache content is effectively helpful for a specific demand $\mathbf{d}$. 
	
	\bigskip
	
	Figure~\ref{fig:ex2} shows the bounds for an example where the eavesdropper is stronger than the weak receivers but not the strong receivers:
	\begin{equation}
	\delta_s < \delta_z < \delta_w.
	\end{equation}It shows that positive rates can be achieved even if $\delta_z \leq \delta_w$, because   messages sent to weak receivers can be specially secured by  means of one-time pads using the secret keys stored in their cache memories.

	\begin{figure}
		\centering
		\begin{tikzpicture}[scale=2.6]
			\footnotesize
			\draw (0,0) -- (3.6,0) -- (3.6,2.1) -- (0,2.1) -- (0,0);
			\node[below] at (1.8,-0.15) {$\Mw $};
			\node[rotate=90] at (-0.5,1) {$C_{\text{sec}}(\Mw ,\Ms  =0)$};
			\node[below] at (0,0) {$0$};
			\node[left] at (0,0) {$0$}; 
			
%
	
			\draw[densely dotted, gray] (0,2) -- (3.6,2);	\draw[densely dotted, gray] (0,1.5) -- (3.6,1.5);	
			\draw[densely dotted, gray] (0,1) -- (3.6,1);	
			\draw[densely dotted, gray] (0,0.5) -- (1.11,0.5);	\draw[densely dotted, gray] (3.52,0.5) -- (3.6,0.5);	
			
			\draw[densely dotted, gray] (0.6,0) -- (0.6,2.1);		
			\draw[densely dotted, gray] (1.2,0) -- (1.2,0.1);		\draw[densely dotted, gray] (1.2,0.75) -- (1.2,2.1);
			\draw[densely dotted, gray] (1.8,0) -- (1.8,0.1);		\draw[densely dotted, gray] (1.8,0.75) -- (1.8,2.1);	
			\draw[densely dotted, gray] (2.4,0) -- (2.4,0.1);		\draw[densely dotted, gray] (2.4,0.75) -- (2.4,2.1);	
			\draw[densely dotted, gray] (3,0) -- (3,0.1);			\draw[densely dotted, gray] (3,0.75) -- (3,2.1);
			
			\draw (0,2) -- (0.05,2);		\node[left] at (0,2) {$0.02$};
			\draw (0,1.5) -- (0.05,1.5);		\node[left] at (0,1.5) {$0.015$};
			\draw (0,1) -- (0.05,1);		\node[left] at (0,1) {$0.01$};			
			\draw (0,0.5) -- (0.05,0.5);		\node[left] at (0,0.5) {$0.005$};	
			
			\draw (0.6,0) -- (0.6,0.05);		\node[below] at (0.6,0) {$0.1$};			
			\draw (1.2,0) -- (1.2,0.05);		\node[below] at (1.2,0) {$0.2$};						
			\draw (1.8,0) -- (1.8,0.05);		\node[below] at (1.8,0) {$0.3$};
			\draw (2.4,0) -- (2.4,0.05);		\node[below] at (2.4,0) {$0.4$};
			\draw (3,0) -- (3,0.05);			\node[below] at (3,0) {$0.5$};
			\draw (3.6,0) -- (3.6,0.05);		\node[below] at (3.6,0) {$0.6$};

			\draw[thick] (0,0) -- (0.0109*6,0.0109*100) -- (0.0133*6,0.0133*100)-- (0.055*6,0.0187*100) -- (0.128*6,0.02*100) -- (3.6,0.02*100);
			\draw[thick,dashed] (0,0) -- (0.02*6,0.02*100) -- (3.6,0.02*100);
			\draw[thick,blue] (0,0) -- (0.0655,1.0909) -- (0.08,1.3333) -- (0.68,1.6667) -- (1.3745,1.8182) -- (2.1029,1.9048) -- (2.8471,1.9608) -- (3.6,2);
			
%
			\draw[thick,dashed] (1.16,0.6) -- (1.3,0.6);	\node[right] at (1.3,0.6) {UB on $C_{\text{sec}}(\Mw ,0)$};
			\draw[thick] (1.16,0.4) -- (1.3,0.4);	\node[right] at (1.3,0.4) {LB on $C_{\text{sec}}(\Mw ,0)$ using joint coding};
			\draw[thick,blue] (1.16,0.2) -- (1.3,0.2);	\node[right] at (1.3,0.2) {LB on $C_{\text{sec}}(\Mw ,0)$ using separate coding};
			\draw (1.11,0.1) rectangle (3.52,0.75);

		\end{tikzpicture} 
		\caption{Upper and lower bounds on $C_{\text{sec}}(\Mw ,\Ms  =0)$ for $\delta_w = 0.8$, $\delta_s = 0.3$, $\delta_z = 0.6$, $\Da=30$, $\Kw = 5$, and $\Ks  = 15$.} \label{fig:ex2}
	\end{figure}
	\bigskip

\section{Results when All Receivers have Cache Memories} \label{sec:2sided}	

We turn to the  case where all receivers have cache memories, so $\Mw , \Ms   >0$. In this case, we do not impose any constraint on the eavesdropper's channel, so $\delta_z$ can be larger or smaller than $\delta_s, \delta_w$. 


\subsection{Coding Schemes}
We present four coding schemes. The first one only stores secret keys in all the cache memories, the second one stores keys in all cache memories and data at weak receivers, and the last two schemes store keys and data at all the receivers.

\subsubsection{Cached Keys}~~ \label{sec:allkeys}

\underline{\textit{Placement phase:}} Store  independent secret keys $K_1,\ldots,K_\Ka$ in the cache memories of Receivers $1,\ldots, \Ka$:
\begin{figure}[H]
	\centering
	\begin{tikzpicture}
	\node[above] at (1.2,1.1) {Cache at Rx\! $1$};
	\draw[rounded corners=7pt,thick] (0,0) rectangle (2.4,1);
	\node at (1.2,0.5) {$K_1$};
	
		\node[above] at (4.2,1.1) {Cache at Rx\! $2$};
	\draw[rounded corners=7pt,thick] (3,0) rectangle (5.4,1);
	\node at (4.2,0.5) {$K_2$};
	
		\node[above] at (7.2,1.1) {Cache at Rx\! $3$};
	\draw[rounded corners=7pt,thick] (6,0) rectangle (8.4,1);
	\node at (7.2,0.5) {$K_3$};

		\node[above] at (12.2,1.1) {Cache at Rx\! $\Ka$};
	\draw[rounded corners=7pt,thick] (11,0) rectangle (13.4,1);
	\node at (12.2,0.5) {$K_{\Ka}$};
 	\end{tikzpicture}
 	
\end{figure}

\underline{\textit{Delivery phase:}} Apply a standard (non-secure) broadcast code to send the  secured message tuple
\begin{equation}
\textbf{W}_{\sec}:= \Big(\enc{W_{d_1}}{K_1}, \ \enc{W_{d_2}}{K_2},\ \enc{W_{d_3}}{K_3},\  \ldots,\ \enc{W_{d_{\Ka}}}{K_{\Ka}}\Big),
\end{equation}
to Receivers~$1,\ldots, \Ka$, respectively. With the secret key $K_k$ stored in its cache memory, each  Receiver~$k\in\Ka$  recovers a guess of its desired message $W_{d_k}$. See Section~\ref{sec:allkeys_analysis} on how to choose the parameters of the scheme. 
\bigskip

\subsubsection{Secure Cache-Aided Piggyback Coding with Keys at All Receivers}~~\label{sec:piggyback_allkeys}

The difference between this scheme and  the secure cache-aided piggyback coding scheme of  Sections~\ref{sec:piggyback} and \ref{sec:piggyback_analysis} is that additional secret keys are placed in the cache memories of weak and strong receivers so that communications in Subphases~2 and 3 can entirely  be secured with these keys, i.e., no wiretap binning is required. 

We outline  the scheme for $\Kw=3$ weak receivers and $\Ks=1$ strong receiver. For a detailed description and an analysis in the general case, see Section~\ref{sec:piggyback_allkeys_analysis}.\\
Divide each message $W_d$ into six submessages 
\begin{equation}
W_d = \big(W_{d,\{1\}}^{(A)},W_{d,\{2\}}^{(A)},W_{d,\{3\}}^{(A)},W_{d,\{1,2\}}^{(B)},W_{d,\{1,3\}}^{(B)},W_{d,\{2,3\}}^{(B)}\big), \quad d \in \D,
\end{equation}
where the first three are of equal rate and the latter three are of equal rate.\\

\underline{\textit{Placement phase:}} Placement is as described in the following table, where $K_{\{1,2,3\}}$, $K_{\{1,2\}}$, $K_{\{2,3\}}$, $K_{\{1,3\}}$, $K_{4,\{1,2\}}$, $K_{4,\{2,3\}}$, $K_{4,\{1,3\}}$, $K_{\{4\}}$
denote independent secret keys.

 \begin{figure}[H] \centering
 	\begin{tikzpicture}	
 	\node[above] at (1.4,3.5) {Cache at Rx$1$};
 	\draw[rounded corners=7pt,thick] (-0.4,0) rectangle (3.2,3.5);
 	\node at (1.4,3) {$\left\lbrace W_{d,\{1\}}^{(A)}\right\rbrace_{\!d=1}^{\!\Da}$};	
 	\node at (1.4,2.25) {$\left\lbrace W_{d,\{1,2\}}^{(B)}, W_{d,\{1,3\}}^{(B)}\right\rbrace_{\!d=1}^{\!\Da}$};
 	\node at (1.4,1.6) {$K_{\{1,2,3\}}$};	
 	\node at (1.4,1) {$K_{\{1,2\}},K_{\{1,3\}}$};	
 	\node at (1.4,0.4) {$K_{4,\{1,2\}},K_{4,\{1,3\}}$};	
 	
 	\node[above] at (5.6,3.5) {Cache at Rx$2$};
 	\draw[rounded corners=7pt,thick] (3.8,0) rectangle (7.4,3.5);
 	\node at (5.6,3) {$\left\lbrace W_{d,\{2\}}^{(A)}\right\rbrace_{\!d=1}^{\!\Da}$};	
 	\node at (5.6,2.25) {$\left\lbrace W_{d,\{1,2\}}^{(B)}, W_{d,\{2,3\}}^{(B)}\right\rbrace_{\!d=1}^{\!\Da}$};
 	\node at (5.6,1.6) {$K_{\{1,2,3\}}$};	
 	\node at (5.6,1) {$K_{\{1,2\}},K_{\{2,3\}}$};	
 	\node at (5.6,0.4) {$K_{4,\{1,2\}},K_{4,\{2,3\}}$};	
 	
 	\node[above] at (9.8,3.5) {Cache at Rx$3$};
 	\draw[rounded corners=7pt,thick] (8,0) rectangle (11.6,3.5);
 	\node at (9.8,3) {$\left\lbrace W_{d,\{3\}}^{(A)}\right\rbrace_{\!d=1}^{\!\Da}$};	
 	\node at (9.8,2.25) {$\left\lbrace W_{d,\{1,3\}}^{(B)}, W_{d,\{2,3\}}^{(B)}\right\rbrace_{\!d=1}^{\!\Da}$};
 	\node at (9.8,1.6) {$K_{\{1,2,3\}}$};	
 	\node at (9.8,1) {$K_{\{1,3\}},K_{\{2,3\}}$};	
 	\node at (9.8,0.4) {$K_{4,\{1,3\}},K_{4,\{2,3\}}$};	
 	
 	\node[above] at (13.5,3.5) {Cache at Rx$4$};
 	\draw[rounded corners=7pt,thick] (12.5,0.9) rectangle (14.5,3.5);
 	\node at (13.5,3.1) {$K_4$};	
 	\node at (13.5,2.5) {$K_{4,\{1,2\}}$};
 	\node at (13.5,1.9) {$K_{4,\{1,3\}}$};	
 	\node at (13.5,1.3) {$K_{4,\{2,3\}}$};
 	\end{tikzpicture} 
 \end{figure}

\underline{\textit{Delivery  phase:}} Time-sharing is applied over three subphases and Subphase 2 is further divided into 3 periods. 

%

Transmission in Subphase~1 is as described in Section~\ref{sec:piggyback}. 
In Subphase~3, the secured message $\enc{W_{d_4}^{(A)}}{K_{4}}$ is sent to the strong receiver~$4$ using a standard point-to-point code. This transmission is secure if the key $K_4$ is chosen sufficiently long.

In the first period of Subphase~2, the transmitter uses the standard piggyback codebook in Figure~\ref{fig:piggy_standard}  to transmit the secure message
\begin{equation}
\mathbf{W}_{\sec,\{1,2\}}^{(A)}=\enc{W_{d_1,\{2\}}^{(A)} \oplus W_{d_2,\{1\}}^{(A)}}{\ K_{\{1,2\}}}\qquad 
\end{equation}
to Receivers~$1$ and $2$ and the secure message
\begin{equation}
{W}_{\sec,\{4\}}^{(B)} = \enc{W_{d_4,\{1,2\}}^{(B)}}{K_{4,\{1,2\}}}
\end{equation}
to Receiver~$4$. 
Receivers~$1$ and $2$ can reconstruct ${W}_{\sec,\{4\}}^{(A)}$ from their cache contents, and thus decode message $\mathbf{W}_{\sec,\{1,2\}}^{(A)}$ based solely on the column of the codebook that corresponds to $W_{\sec,\{4\}}^{(B)}$.  Receiver~$4$ decodes both messages $\mathbf{W}_{\sec,\{1,2\}}^{(A)}$ and ${W}_{\sec,\{4\}}^{(B)}$. From the decoded secured messages and the keys stored in their cache memories, Receivers $1, 2,$ and $4$ can recover their desired message parts $W_{d_1,\{2\}}^{(A)}, W_{d_2,\{1\}}^{(A)}$ and $W_{d_4, \{1,2\}}^{(B)}$.

\begin{figure}[h!] \centering 
	\begin{tikzpicture} 
	\draw[thick] (0,0) rectangle (8,6);
	
	\draw (0,5) -- (8,5);	\draw (0,4) -- (8,4);	\draw (0,3) -- (8,3);
	\draw (0,2) -- (8,2);	\draw (0,1) -- (8,1);
	
	\draw (1,0) -- (1,6);	\draw (2,0) -- (2,6);	\draw (3,0) -- (3,6);	\draw (4,0) -- (4,6);
	\draw (5,0) -- (5,6);	\draw (6,0) -- (6,6);	\draw (7,0) -- (7,6);
	
	\draw[very thick,blue] (6,0) rectangle (7,6);
	\draw[very thick,blue] (0,4) rectangle (8,5);	
	\node[left,blue] at (0,4.5) {$\mathbf{W}_{\sec,\{1,2\}}^{(A)}$}; 
	\node[above,blue] at (6.5,6) {${W}_{\sec,\{4\}}^{(B)}$}; 
\draw[ultra thick,magenta] (6,4) rectangle (7,5);
		\draw[magenta,thick,->] (7,4.65) -- (7.5,5.25) -- (8.3,5.25);
		\node[right,magenta] at (8.3,5.2) {{Codeword corresponding}};
		\node[right,magenta] at (8.3,4.65) {to $\Big(\mathbf{W}_{\sec,\{1,2\}}^{(A)}, {W}_{\sec,\{4\}}^{(B)}\Big)$};
	
	\draw[fill](0.5,0.5)circle(1mm);	\draw[fill](1.5,0.5)circle(1mm);	\draw[fill](2.5,0.5)circle(1mm);
	\draw[fill](3.5,0.5)circle(1mm);	\draw[fill](4.5,0.5)circle(1mm);	\draw[fill](5.5,0.5)circle(1mm);
	\draw[fill](6.5,0.5)circle(1mm);	\draw[fill](7.5,0.5)circle(1mm);
	
	\draw[fill](0.5,1.5)circle(1mm);	\draw[fill](1.5,1.5)circle(1mm);	\draw[fill](2.5,1.5)circle(1mm);
	\draw[fill](3.5,1.5)circle(1mm);	\draw[fill](4.5,1.5)circle(1mm);	\draw[fill](5.5,1.5)circle(1mm);
	\draw[fill](6.5,1.5)circle(1mm);	\draw[fill](7.5,1.5)circle(1mm);
	
	\draw[fill](0.5,2.5)circle(1mm);	\draw[fill](1.5,2.5)circle(1mm);	\draw[fill](2.5,2.5)circle(1mm);
	\draw[fill](3.5,2.5)circle(1mm);	\draw[fill](4.5,2.5)circle(1mm);	\draw[fill](5.5,2.5)circle(1mm);
	\draw[fill](6.5,2.5)circle(1mm);	\draw[fill](7.5,2.5)circle(1mm);
	
	\draw[fill](0.5,3.5)circle(1mm);	\draw[fill](1.5,3.5)circle(1mm);	\draw[fill](2.5,3.5)circle(1mm);
	\draw[fill](3.5,3.5)circle(1mm);	\draw[fill](4.5,3.5)circle(1mm);	\draw[fill](5.5,3.5)circle(1mm);
	\draw[fill](6.5,3.5)circle(1mm);	\draw[fill](7.5,3.5)circle(1mm);
	
	\draw[fill](0.5,4.5)circle(1mm);	\draw[fill](1.5,4.5)circle(1mm);	\draw[fill](2.5,4.5)circle(1mm);
	\draw[fill](3.5,4.5)circle(1mm);	\draw[fill](4.5,4.5)circle(1mm);	\draw[fill](5.5,4.5)circle(1mm);
	\draw[fill](6.5,4.5)circle(1mm);	\draw[fill](7.5,4.5)circle(1mm);
	
	\draw[fill](0.5,5.5)circle(1mm);	\draw[fill](1.5,5.5)circle(1mm);	\draw[fill](2.5,5.5)circle(1mm);
	\draw[fill](3.5,5.5)circle(1mm);	\draw[fill](4.5,5.5)circle(1mm);	\draw[fill](5.5,5.5)circle(1mm);
	\draw[fill](6.5,5.5)circle(1mm);	\draw[fill](7.5,5.5)circle(1mm);

	\end{tikzpicture}
	\caption{Standard piggyback codebook where only a single codeword (indicated by a single dot) is assigned to each pair of messages.} \label{fig:piggy_standard} 
\end{figure}

In the same way, using a standard piggyback codebook, the secured messages $\enc{W_{d_1,\{3\}}^{(A)} \oplus W_{d_3,\{1\}}^{(A)}}{\ K_{\{1,3\}}}$ and $\enc{W_{d_4,\{1,3\}}^{(B)}}{K_{4,\{1,3\}}}$ are transmitted  in Period 2 to Receivers~$1,3$, and $4$,  and   the secured messages $\enc{W_{d_2,\{3\}}^{(A)} \oplus W_{d_3,\{2\}}^{(A)}}{\ K_{\{2,3\}}}$ and $\enc{W_{d_4,\{2,3\}}^{(B)}}{K_{4,\{2,3\}}}$ are transmitted in Period 3 to Receivers~$2,3$, and $4$.


\bigskip

\subsubsection{Symmetric Secure Piggyback Coding}~~\label{sec:piggyback_symmetric}~~ 

Each message is split into two submessages $W_{d}=(W_d^{(A)}, W_d^{(B)})$, and communication is in three subphases.  
Submessages of $\{W_d^{(A)}\}$ and corresponding secret keys are placed in weak receivers' cache memories according to the  Sengupta et al. 
secure coded caching placement algorithm \cite{Clancy15}. Submessages  of $\{W_d^{(B)}\}$  and corresponding secret keys are placed in strong receivers' cache memories according to the same placement algorithm.  In Subphase~1, the Sengupta et al. delivery scheme for submessages  $\{W_d^{(A)}\}$ is combined with a standard BC code to transmit only to weak receivers and in Subphase 3 it is combined with a standard BC code to transmit only to strong receivers.  Transmission in Subphase 2 is divided into 
into $\Kw\Ks$ periods, each dedicated to a pair of weak and strong receivers $i\in\Kw$ and $j\in\Ks$. A standard  piggyback codebook is used in each of these periods to send secured messages to the corresponding pair of receivers. The secret keys securing these messages have been pre-placed  in the appropriate cache memories. 
\medskip

We now describe the scheme in more detail for the special case $\Kw=3$ and $\Ks=2$, and  for parameters $t_w=2$ and $t_s=1$. The general scheme is described and analyzed in Section~\ref{sec:piggyback_symmetric_analysis}.

Divide each $W_d$ into six submessages 
\begin{equation}
W_d = \big(W_{d,\{1,2\}}^{(A)},W_{d,\{1,3\}}^{(A)},W_{d,\{2,3\}}^{(A)},W_{d,\{4\}}^{(B)},W_{d,\{5\}}^{(B)}\big), \quad d \in \D,
\end{equation}
where the first three are of equal rate and the latter two are of equal rate.\\

\underline{\textit{Placement phase:}} Generate independent secret keys 
$K_{\{1,2,3\}}$, $K_{w,\{1,4\}}$, $K_{w,\{1,5\}}$, $K_{w,\{2,4\}}$, $K_{w,\{2,5\}}$, $K_{w,\{3,4\}}$, $K_{w,\{3,5\}}$, $K_{s,\{1,4\}}$, $K_{s,\{1,5\}}$, $K_{s,\{2,4\}}$, $K_{s,\{2,5\}}$, $K_{s,\{3,4\}}$, $K_{s,\{3,5\}}$, $K_{\{4,5\}}$.

Placement of information in the cache memories is as indicated in the following table.

\begin{figure}[H] \centering
	\begin{tikzpicture}
	
	\node[above] at (1.8,6.8) {Cache at Rx$1$};
	\draw[rounded corners=7pt,thick] (0,4) rectangle (3.6,6.8);
	\node at (1.8,6.3) {$\left\lbrace W_{d,\{1,2\}}^{(A)}, W_{d,\{1,3\}}^{(A)}\right\rbrace_{\!d=1}^{\Da}$};
	\node at (1.8,5.6) {$K_{\{1,2,3\}}$};	
	\node at (1.8,5) {$K_{w,\{1,4\}},K_{w,\{1,5\}}$};	
	\node at (1.8,4.4) {$K_{s,\{1,4\}},K_{s,\{1,5\}}$};	
	
	\node[above] at (6,6.8) {Cache at Rx$2$};
	\draw[rounded corners=7pt,thick] (4.2,4) rectangle (7.8,6.8);
	\node at (6,6.3) {$\left\lbrace W_{d,\{1,2\}}^{(A)}, W_{d,\{2,3\}}^{(A)}\right\rbrace_{\!d=1}^{\Da}$};
	\node at (6,5.6) {$K_{\{1,2,3\}}$};	
	\node at (6,5) {$K_{w,\{2,4\}},K_{w,\{2,5\}}$};	
	\node at (6,4.4) {$K_{s,\{2,4\}},K_{s,\{2,5\}}$};	
	
	\node[above] at (10.2,6.8) {Cache at Rx$3$};
	\draw[rounded corners=7pt,thick] (8.4,4) rectangle (12,6.8);	
	\node at (10.2,6.3) {$\left\lbrace W_{d,\{1,3\}}^{(A)}, W_{d,\{2,3\}}^{(A)}\right\rbrace_{\!d=1}^{\Da}$};
	\node at (10.2,5.6) {$K_{\{1,2,3\}}$};	
	\node at (10.2,5) {$K_{w,\{3,4\}},K_{w,\{3,5\}}$};	
	\node at (10.2,4.4) {$K_{s,\{3,4\}},K_{s,\{3,5\}}$};	
	
	\node[above] at (3.3,2.8) {Cache at Rx$4$};
	\draw[rounded corners=7pt,thick] (1.1,0) rectangle (5.5,2.8);
	\node at (3.3,2.3) {$\left\lbrace W_{d,\{4\}}^{(B)} \right\rbrace_{\!d=1}^{\Da}$};	
	\node at (3.3,1.6) {$K_{w,\{1,4\}},K_{w,\{2,4\}},K_{w,\{3,4\}}$};
	\node at (3.3,1) {$K_{s,\{1,4\}},K_{s,\{2,4\}},K_{s,\{3,4\}}$};	
	\node at (3.3,0.4) {$K_{\{4,5\}}$};
	
	\node[above] at (8.7,2.8) {Cache at Rx$5$};
	\draw[rounded corners=7pt,thick] (6.5,0) rectangle (10.9,2.8);
	\node at (8.7,2.3) {$\left\lbrace W_{d,\{5\}}^{(B)} \right\rbrace_{\!d=1}^{\Da}$};	
	\node at (8.7,1.6) {$K_{w,\{1,5\}},K_{w,\{2,5\}},K_{w,\{3,5\}}$};
	\node at (8.7,1) {$K_{s,\{1,5\}},K_{s,\{2,5\}},K_{s,\{3,5\}}$};	
	\node at (8.7,0.4) {$K_{\{4,5\}}$};
	\end{tikzpicture} 
\end{figure}

\underline{\textit{Delivery phase:}} The delivery phase is divided into three subphases, where Subphase~$2$ is further divided into 6 periods. 


Subphase~$1$ is intended only for weak receivers. The transmitter sends the secured message 
\begin{equation}
W_{\sec,\{1,2,3\}}^{(A)} = \enc{W_{d_1,\{2,3\}}^{(A)} \oplus W_{d_2,\{1,3\}}^{(A)} \oplus W_{d_3,\{1,2\}}^{(A)}}{\ K_{\{1,2,3\}}}
\end{equation}
using a capacity-achieving code to the weak receivers $1,2,$ and $3$. With their cache contents, each of these weak receivers can decode its desired submessage. 



%
The first period of Subphase~$2$ is dedicated to Receivers~$1$ and $4$. The transmitter sends the secured messages
\begin{equation}
W_{\sec,1,4}^{(B)}= \textsf{sec}\big(W_{d_1,\{4\}}^{(B)}, K_{w,\{1,4\}} \big) 
 \end{equation}
 and
 \begin{equation}
W_{\sec,4,1}^{(A)}= \textsf{sec}\big(W_{d_4,\{1,2\}}^{(A)}, K_{s,\{1,4\}} \big) 
 \end{equation}
 using a standard piggyback codebook where rows encode $W_{\sec,1,4}^{(B)}$ and columns encode $W_{\sec,4,1}^{(A)}$. (This corresponds to the piggyback codebook in Figure~\ref{fig:piggy_standard} where $\mathbf{W}_{\sec,\{1,2\}}^{(A)}$ 
 needs to be replaced by $W_{\sec,1,4}^{(B)}$ and ${W}_{\sec,\{4\}}^{(B)}$
  by $W_{\sec,4,1}^{(A)}$.) Since Receiver~$1$ can reconstruct $W_{\sec,4,1}^{(A)}$ from its cache content, it decodes $W_{\sec,1,4}^{(B)}$  based on the \emph{single column} of the piggyback codebook indicated by $W_{\sec,4,1}^{(A)}$. Similarly, since Receiver~$4$ can reconstruct $W_{\sec,1,4}^{(B)}$ from its cache content, it decodes $W_{\sec,4,1}^{(A)}$ based on the \emph{single row} corresponding to $W_{\sec,1,4}^{(B)}$. With their decoded secure messages and the secret keys in their cache memories, Receiver~$1$ can recover $W_{d_1,\{4\}}^{(B)}$ and  Receiver~$4$ can recover $W_{d_4,\{1,2\}}^{(A)}$. 
 
   A similar scheme is used in the subsequent periods to convey the parts $W_{d_1}^{(B)}, W_{d_2}^{(B)}$, $W_{d_3}^{(B)}$, $W_{d_4}^{(A)}$, $W_{d_5}^{(A)}$  that are not stored in their cache memories to Receivers $1$--$5$. Table~\ref{tab:subphase2} shows the concerned receivers, the conveyed messages and the keys used in each period of Subphase~$2$.
   
   	{ \renewcommand{\arraystretch}{2} 
	\begin{table}[h]
	\centering
 	\caption{Messages sent and keys used in the six periods of Subphase~$2$ for the example with $\Kw=3$ weak receivers and $\Ks=2$ strong receivers.} \label{tab:subphase2}
		\begin{tabular}{|m{1.9cm}|c|c|c|c|c|c|}
		\hline
			 &  \multicolumn{6}{ c| }{\textbf{Subphase~2}}   \\ \cline{2-7}
			& \textbf{Period~1} & \textbf{Period~2} & \textbf{Period~3}  & \textbf{Period~4} & \textbf{Period~5} &  \textbf{Period~6} \\ \hline \hline
			 \textbf{Receivers} & $1,4$ & $2,4$ & $3,4$ & $1,5$ & $2,5$ & $3,5$ \\ \hline
			\textbf{Keys} & $\!\!K_{w,\{1,4\}},K_{s,\{1,4\}}\!\!$ & $\!\!K_{w,\{2,4\}},K_{s,\{2,4\}}\!\!$ & $\!\!K_{w,\{3,4\}},K_{s,\{3,4\}}\!\!$ &$\!\!K_{w,\{1,5\}},K_{s,\{1,5\}}\!\!$ & $\!\!K_{w,\{2,5\}},K_{s,\{2,5\}}\!\!$ & $\!\!K_{w,\{3,5\}},K_{s,\{3,5\}}\!\!$ \\ \hline
			\textbf{Messages for weak receivers} & $W_{d_1,\{4\}}^{(B)}$ & $W_{d_2,\{4\}}^{(B)}$ & $W_{d_3,\{4\}}^{(B)}$ & $W_{d_1,\{5\}}^{(B)}$ & $W_{d_2,\{5\}}^{(B)}$ & $W_{d_3,\{5\}}^{(B)}$ \\  \hline
			\textbf{Messages for strong receivers} & $W_{d_4,\{1,2\}}^{(A)}$ & $W_{d_4,\{2,3\}}^{(A)}$ & $W_{d_4,\{1,3\}}^{(A)}$ & $W_{d_5,\{1,2\}}^{(A)}$ & $W_{d_5,\{2,3\}}^{(A)}$ & $W_{d_5,\{1,3\}}^{(A)}$ \\ \hline
		\end{tabular} 
	\end{table}
	}
   
\textit{Subphase 3:} The transmitter sends
\begin{equation}
W_{\sec,\{4,5\}}^{(B)} = \enc{W_{d_4,\{5\}}^{(B)} \oplus W_{d_5,\{4\}}^{(B)}}{\ K_{\{4,5\}}}
\end{equation}
using a capacity-achieving code to the strong receivers $4$ and $5$. With their cache contents, each receiver can decode its desired submessage sent in this subphase.

Choosing all keys sufficiently long ensures that the delivery communication satisfies the secrecy constraint~\eqref{eq:jointSecrecy}.
\bigskip

\subsubsection{Secure Generalized Coded Caching}\label{sec:generalized}~~

The scheme is based on the \emph{generalized coded caching} algorithms of \cite{WiggerYenerJournal}, but where the produced zero-padded XORs are secured with independent secret keys and these keys are placed in the cache memories of the receivers that decode  the XORs. Choosing the secret keys sufficient long, ensures than  the secrecy constraint~\eqref{eq:jointSecrecy} is satisfied.

\subsection{Results on the Secrecy Capacity-Memory Tradeoff}

Consider the following $\Ka+\Kw+\Kw \Ks$ rate-memory tuples. Let $\tilde{R}^{(0)}=R^{(0)}$ and $\tilde{\Ma}_{w}^{(0)}=\tilde{\Ma}_s^{(0)}=0$.

	\begin{subequations}\label{eq:RM_Assign}
	\begin{itemize}
	\item Let 
	\begin{IEEEeqnarray}{rCl}
		\tilde{R}^{(1)} &:=& \frac{(1-\delta_s)(1-\delta_w)}{\Kw(1-\delta_s)+\Ks (1-\delta_w)}, \label{eq:R2_Assign}  \hspace{11.2cm}\\
		\tilde{\Ma}_{w}^{(1)} &:=&  \frac{(1-\delta_s)\min \big\{ 1-\delta_z,1-\delta_w\big\}}{\Kw(1-\delta_s)+\Ks (1-\delta_w)}, \label{eq:Mw2_Assign} \\
		\tilde{\Ma}_{s}^{(1)} &:=& \frac{(1-\delta_w)\min \big\{ 1-\delta_z,1-\delta_s\big\}}{\Kw(1-\delta_s)+\Ks (1-\delta_w)}; \label{eq:Ms2_Assign} \\ \nonumber
	\end{IEEEeqnarray}
	\item For $t \in \{1,\ldots,\Kw-1\}$, let
	\begin{IEEEeqnarray}{rCl}
		\tilde{R}^{(t+1)} &:=& \frac{(t+1)(1-\delta_w)(1-\delta_s)\big[\Ks t(1-\delta_w) + (\Kw-t+1)(\delta_w-\delta_s)\big]}{(\Kw-t+1)(1-\delta_s)\big[\Ks (t+1)(1-\delta_w) + (\Kw-t)(\delta_w-\delta_s) \big] + \Ks ^2t(t+1)(1-\delta_w)^2}
, \label{eq:R3_Assign}\\
		\tilde{\Ma}_{w}^{(t+1)} &:=& \frac{\Da \cdot t(t+1)(1-\delta_w)(1-\delta_s)\big[\Ks (t-1)(1-\delta_w)+(\Kw-t+1)(\delta_w-\delta_s)\big]}{\Kw\big[(\Kw-t+1)(1-\delta_s)[\Ks (t+1)(1-\delta_w)+(\Kw-t)(\delta_w-\delta_s)] + \Ks ^2t(t+1)(1-\delta_w)^2\big]} \qquad\qquad \hspace{1mm}  \nonumber\\
			&&+  \frac{\Ks t(t+1)(\Kw-t+1)(1-\delta_w)(1-\delta_s)\min \{1-\delta_z,1-\delta_s\}}{\Kw\big[(\Kw-t+1)(1-\delta_s)[\Ks (t+1)(1-\delta_w)+(\Kw-t)(\delta_w-\delta_s)] + \Ks ^2t(t+1)(1-\delta_w)^2\big]}\nonumber\\
			&& + \frac{(t+1)(\Kw-t)(\Kw-t+1)(1-\delta_s)(\delta_w-\delta_s)\min \left\lbrace 1-\delta_z,1-\delta_w \right\rbrace\big]}{\Kw\big[(\Kw-t+1)(1-\delta_s)[\Ks (t+1)(1-\delta_w)+(\Kw-t)(\delta_w-\delta_s)] + \Ks ^2t(t+1)(1-\delta_w)^2\big]}, \label{eq:Mw3_Assign}   \\ 
		\tilde{\Ma}_{s}^{(t+1)} &:=& \frac{\Ks t(t+1)(1-\delta_w)^2 \min \left\lbrace 1-\delta_z,1-\delta_s \right\rbrace}{(\Kw-t+1)(1-\delta_s)\big[ \Ks (t+1)(1-\delta_w)+(\Kw-t)(\delta_w-\delta_s)\big] + \Ks ^2t(t+1)(1-\delta_w)^2}\nonumber\\
			&&+  \frac{(t+1)(\Kw-t+1)(1-\delta_w)(1-\delta_s)\min \left\lbrace (\delta_w-\delta_z)^+,\delta_w-\delta_s \right\rbrace}{(\Kw-t+1)(1-\delta_s)\big[ \Ks (t+1)(1-\delta_w)+(\Kw-t)(\delta_w-\delta_s)\big] + \Ks ^2t(t+1)(1-\delta_w)^2};  \label{eq:Ms3_Assign}\\ \nonumber
	\end{IEEEeqnarray}
	
	\item For each pair $t_w \in \{1,\ldots,\Kw\}$ and $t_s \in \{1,\ldots,\Ks \}$, let
	\begin{IEEEeqnarray}{rCl}
		\tilde{R}^{(\Kw+(t_w-1)\Ks +t_s)} &:=& \frac{(t_w+1)(t_s+1)(1-\delta_w)(1-\delta_s)\big[\Ks (1-\delta_w) + \Kw(1-\delta_s)\big]}{\Kw(\Kw\!-t_w)(t_s\!+1)(1-\delta_s)^2 \!+\! \Ks (t_w\!+1)(1-\delta_w)\big[ (\Ks \!-t_s)(1-\delta_w) \!+\! \Kw(t_s\!+1)(1-\delta_s) \big]},  \nonumber \\
		\label{eq:R4_Assign}\\
		\tilde{\Ma}_{w}^{(\Kw+(t_w-1)\Ks +t_s)} &:=& \frac{(t_w+1)(t_s+1)(1-\delta_s)^2 \big[ \Da \cdot t_w(1-\delta_w)+ (\Kw-t_w)\min \left\lbrace 1-\delta_z,1-\delta_w \right\rbrace \big]}{\Kw(\Kw\!-t_w)(t_s\!+1)(1-\delta_s)^2 \!+\! \Ks (t_w\!+1)(1-\delta_w)\big[ (\Ks \!-t_s)(1-\delta_w) \!+\! \Kw(t_s\!+1)(1-\delta_s) \big]} \nonumber \\
		&&+ \frac{\Ks (t_w+1)(t_s+1)(1-\delta_w)(1-\delta_s) \min \left\lbrace 1-\delta_z,2-\delta_w-\delta_s \right\rbrace }{\Kw(\Kw\!-t_w)(t_s\!+1)(1-\delta_s)^2 \!+\! \Ks (t_w\!+1)(1-\delta_w)\big[ (\Ks \!-t_s)(1-\delta_w) \!+\! \Kw(t_s\!+1)(1-\delta_s) \big]}, \nonumber \\ \label{eq:Mw4_Assign}   \\ 
	\tilde{\Ma}_{s}^{(\Kw+(t_w-1)\Ks +t_s)} &:=& \frac{(t_w+1)(t_s+1)(1-\delta_w)^2\big[\Da \cdot t_s(1-\delta_s) + (\Ks -t_s) \min\left\lbrace 1-\delta_z,1-\delta_s \right\rbrace\big]}{\Kw(\Kw\!-t_w)(t_s\!+1)(1-\delta_s)^2 \!+\! \Ks (t_w\!+1)(1-\delta_w)\big[ (\Ks \!-t_s)(1-\delta_w) \!+\! \Kw(t_s\!+1)(1-\delta_s) \big]} \nonumber \\
		 &&+  \frac{\Kw(t_w+1)(t_s+1)(1-\delta_w)(1-\delta_s) \min \left\lbrace 1-\delta_z,2-\delta_w-\delta_s \right\rbrace }{\Kw(\Kw\!-t_w)(t_s\!+1)(1-\delta_s)^2 \!+\! \Ks (t_w\!+1)(1-\delta_w)\big[ (\Ks \!-t_s)(1-\delta_w) \!+\! \Kw(t_s\!+1)(1-\delta_s) \big]};  \nonumber \\ \label{eq:Ms4_Assign}   \\ \nonumber
	\end{IEEEeqnarray}
	
	\item For $t \in \{1,\ldots,\Ka-1\}$, let
	\begin{IEEEeqnarray}{rCl} 
		\tilde{R}^{(\Kw+\Kw \Ks +t)} &:=& \frac{ \sum_{t_w=\max\{0,t-\Ks \}}^{\min\{t,\Kw\}} {\Kw \choose t_w}{\Ks \choose t-t_w}(1-\delta_w)^{-t_w}(1-\delta_s)^{t_w}}{\sum_{t_w=\max\{0,t+1-\Ks \}}^{\min\{t+1,\Kw\}} {\Kw \choose t_w}{\Ks  \choose t+1-t_w}(1-\delta_w)^{-t_w}(1-\delta_s)^{t_w-1}}, \label{eq:R5_Assign}\\
		\tilde{\Ma}_{w}^{(\Kw+\Kw \Ks +t)} &:=&  \frac{\Da  \cdot \sum_{t_w=\max\{1,t-\Ks \}}^{\min\{t,\Kw\}} {\Kw-1\choose t_w-1}{\Ks  \choose t-t_w}(1-\delta_w)^{-t_w}(1-\delta_s)^{t_w}}{\sum_{t_w=\max\{0,t+1-\Ks \}}^{\min\{t+1,\Kw\}} {\Kw \choose t_w}{\Ks  \choose t+1-t_w}(1-\delta_w)^{-t_w}(1-\delta_s)^{t_w-1}} \nonumber \\
		 &&+ \min \left\lbrace \frac{(t+1)(1-\delta_z)}{\Ka},  \frac{(1-\delta_w)^{-t_w}(1-\delta_s)^{t_w} \left[ \binom{\Ka-1}{t}(1-\delta_s)- \binom{\Kw-1}{t}(\delta_w-\delta_s) \right]}{\sum_{t_w=\max\{0,t+1-\Ks \}}^{\min\{t+1,\Kw\}} \binom{\Kw}{t_w}\binom{\Ks }{t+1-t_w} (1-\delta_w)^{-t_w}(1-\delta_s)^{t_w} } \right\rbrace,\qquad	\label{eq:Mw5_Assign}  \\
		\tilde{\Ma}_{s}^{(\Kw+\Kw \Ks +t)} &:=&  \frac{\Da \cdot \sum_{t_w=\max\{0,t-\Ks \}}^{\min\{t-1,\Kw\}} {\Kw \choose t_w}{\Ks -1 \choose t-t_w-1}(1-\delta_w)^{-t_w}(1-\delta_s)^{t_w}}{\sum_{t_w=\max\{0,t+1-\Ks \}}^{\min\{t+1,\Kw\}} {\Kw \choose t_w}{\Ks  \choose t+1-t_w}(1-\delta_w)^{-t_w}(1-\delta_s)^{t_w-1}} \nonumber \\
		 &&+ \min \left\lbrace \frac{(t+1)(1-\delta_z)}{\Ka},  \frac{\binom{\Ka-1}{t}(1-\delta_w)^{-t_w}(1-\delta_s)^{t_w+1}}{\sum_{t_w=\max\{0,t+1-\Ks \}}^{\min\{t+1,\Kw\}} \binom{\Kw}{t_w}\binom{\Ks }{t+1-t_w} (1-\delta_w)^{-t_w}(1-\delta_s)^{t_w} } \right\rbrace. 	\label{eq:Ms5_Assign}  \\ \nonumber
	\end{IEEEeqnarray}

	\end{itemize}
	\end{subequations}	
	
	\bigskip

	\begin{theorem}[Lower bound on $C_{s}(\Mw ,\Ms  )$] \label{thm:LB_Assign}
		\begin{align}
			C_{\s}\left(\Mw ,\Ms  \right)\geq \textnormal{upper hull} & \left\lbrace \left(\tilde{R}^{(\ell)},\tilde{\Ma}_{w}^{(\ell)},\tilde{\Ma}_{s}^{(\ell)}\right) \colon \quad \ell\in\{0,1,\dots,\Ka+\Kw+\Kw \Ks -1\} \right\rbrace.
		\end{align}
	\end{theorem}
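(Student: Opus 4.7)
The plan is to mirror the structure of the proof of Theorem~\ref{thm:LB}: the upper hull is achieved by time- and memory-sharing between any finite collection of securely achievable rate-memory points, so it suffices to prove that each of the $\Ka+\Kw+\Kw\Ks$ individual triples in \eqref{eq:RM_Assign} is itself securely achievable. I would establish these one family at a time, matching each triple with the coding scheme that is sketched earlier in the paper and will be analyzed in detail in Section~\ref{sec:proofLB}.

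First, the point $(\tilde{R}^{(0)},0,0)$ is the no-cache secrecy capacity and follows from Remark~\ref{rmk:R0}. Next, the triple in \eqref{eq:R2_Assign}--\eqref{eq:Ms2_Assign} is achieved by the Cached-Keys scheme of Section~\ref{sec:allkeys}: the transmitter sends the one-time-pad-encrypted messages over a standard (non-secure) erasure BC, so the achievable rate is the standard BC capacity with weak/strong erasure probabilities, and Lemma~\ref{lem:secure} (applied with $|\mathcal{S}(w')|=R'$) ensures that \eqref{eq:jointSecrecy} holds with $\frac{1}{n}I(W_1,\ldots,W_\Da;Z^n)=0$ once the cached keys are taken long enough, which produces the cache-size expressions \eqref{eq:Mw2_Assign}--\eqref{eq:Ms2_Assign}. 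The triples in \eqref{eq:R3_Assign}--\eqref{eq:Ms3_Assign}, indexed by $t\in\{1,\ldots,\Kw-1\}$, are achieved by the Secure Cache-Aided Piggyback Coding with Keys at All Receivers of Section~\ref{sec:piggyback_allkeys}. Here I would compute the rates of the three subphases separately (the Subphase~1 coded-caching transmission to all weak receivers, the $\binom{\Kw}{2}$ piggyback periods of Subphase~2, and the point-to-point transmission of Subphase~3), choose the subphase durations so that all receivers complete their demands at the same time, and account for the cache budget needed to store both the Maddah-Ali--Niesen-type data fragments and the secret keys for every XOR message and every piggyback period.

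The triples in \eqref{eq:R4_Assign}--\eqref{eq:Ms4_Assign}, indexed by $(t_w,t_s)$, are achieved by the Symmetric Secure Piggyback scheme of Section~\ref{sec:piggyback_symmetric}: Subphase~1 uses Sengupta et al.'s secure coded caching among the weak receivers with parameter $t_w$, Subphase~3 does the same among the strong receivers with parameter $t_s$, and Subphase~2 is split into $\Kw\Ks$ piggyback periods, one per weak-strong pair, each transmitting two one-time-pad-encrypted submessages using the standard piggyback codebook of Figure~\ref{fig:piggy_standard}. The rate and cache expressions follow by equalizing the completion times across the three subphases and summing the cached data fragments and all the keys used. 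Finally, the triples in \eqref{eq:R5_Assign}--\eqref{eq:Ms5_Assign}, indexed by $t\in\{1,\ldots,\Ka-1\}$, are obtained by applying the generalized coded caching algorithm of \cite{WiggerYenerJournal}, securing each produced zero-padded XOR with a fresh independent key cached at exactly the receivers that must decode it; the rate expression is inherited from the non-secure generalized scheme, while the extra cache term comes from the one-time pads and from shortening the last XOR via the $\min\{(t+1)(1-\delta_z)/\Ka,\cdot\}$ choice dictated by Lemma~\ref{lem:secure}.

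The main obstacle is the secrecy analysis: for each scheme I must verify \eqref{eq:jointSecrecy} for the \emph{entire} library, not merely for the demanded messages. For those submessages that are masked by a fresh one-time pad of equal rate, Lemma~\ref{lem:secure} gives zero information leakage immediately. For submessages that are only bin-index-protected (as in the Secure Cache-Aided Piggyback Coding~I construction embedded in Section~\ref{sec:piggyback_allkeys} whenever keys and XOR-masks play a double role as wiretap bin indices), the soft-covering part of Lemma~\ref{lem:secure} must be invoked, and one has to argue that the keys placed at disjoint receiver subsets remain jointly independent of the eavesdropper's observation across subphases---this is a standard but careful chaining argument, and I would handle it by showing that each key is used in a single codebook slot and that conditional independence propagates across the time-shared subphases. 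All the remaining computations are routine bookkeeping, deferred to Sections~\ref{sec:proofLB_Aissgn} and~\ref{sec:proofLB}.
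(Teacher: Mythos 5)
Your proposal is correct and follows essentially the same route as the paper: time/memory-sharing reduces the claim to the individual corner points, and each family of triples is then matched to the same scheme the paper uses (cached keys for $\ell=1$, piggyback coding with keys at all receivers for $\ell\in\{2,\dots,\Kw\}$, symmetric secure piggyback coding for $\ell\in\{\Kw+1,\dots,\Kw+\Kw\Ks\}$, and secure generalized coded caching for the remaining indices), with secrecy verified via Lemma~\ref{lem:secure}. The only blemish is the stray ``$\binom{\Kw}{2}$ piggyback periods'' in Subphase~2, which for general parameter $t$ should read $\binom{\Kw}{t}$; this does not affect the argument.
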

	\begin{proof}
		By time/memory-sharing arguments, it suffices to prove the achievability of the  rate-memory triples $\big\{(\tilde{R}^{(\ell)},\tilde{\Ma}_{w}^{(\ell)},\tilde{\Ma}_{s}^{(\ell)}):\ \ell\in\{0,1,\dots,\Ka+\Kw+\Kw \Ks -1\}\big\}$. The triple $\{(\tilde{R}^{(1)},\tilde{\Ma}_{w}^{(1)}, \tilde{\Ma}_{s}^{(1)})$ is achieved by the ``cached keys" scheme, see Subsections~\ref{sec:allkeys} and \ref{sec:allkeys_analysis}. The triples $(\tilde{R}^{(\ell)},\tilde{\Ma}_{w}^{(\ell)},\tilde{\Ma}_{s}^{(\ell)})$, for $\ell\in\{2,\dots,\Kw\}\big\}$, are achieved by the ``secure piggyback coding scheme with keys at all receivers", see Subsections~\ref{sec:piggyback_allkeys} and \ref{sec:piggyback_allkeys_analysis}. The triples $(\tilde{R}^{(\ell)},\tilde{\Ma}_{w}^{(\ell)},\tilde{\Ma}_{s}^{(\ell)})$, for $\ell\in\{\Kw+1,\dots,\Kw+\Kw\Ks\}\big\}$, are achieved by the ``symmetric secure piggyback coding" scheme, see  Subsections~\ref{sec:piggyback_symmetric} and \ref{sec:piggyback_symmetric_analysis}. Finally, the triples  $(\tilde{R}^{(\ell)},\tilde{\Ma}_{w}^{(\ell)},\tilde{\Ma}_{s}^{(\ell)})$, for $\ell\in\{\Kw+\Kw\Ks+1,\ldots,   \Ka +\Kw+ \Kw \Ks-1\}$, are achieved by the ``secure generalized coded caching" scheme sketched in Section~\ref{sec:generalized}. 
	\end{proof}
	
	\bigskip
	
	\begin{corollary} \label{cor:2}
		The rate-memory tradeoff $\left(\tilde{R}^{(1)},\tilde{\Ma}_{w}^{(1)},\tilde{\Ma}_{s}^{(1)}\right)$ is optimal, i.e.,
		\begin{equation}
			C_{\s}\left(\Mw =\tilde{\Ma}_{w}^{(1)},\Ms  =\tilde{\Ma}_{s}^{(1)}\right) = \tilde{R}^{(1)}.
		\end{equation}
	\end{corollary}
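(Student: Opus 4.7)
The plan is to combine the achievability guaranteed by Theorem~\ref{thm:LB_Assign} at $\ell=1$---realized by the cached-keys scheme of Subsection~\ref{sec:allkeys}---with a matching converse obtained from the secrecy upper bound \eqref{eq:UB1_assign} in Theorem~\ref{thm:UB_assign}, specialized to $k_w=\Kw$ and $k_s=\Ks$, i.e.\ to the full set of legitimate receivers. Writing $D:=\Kw(1-\delta_s)+\Ks(1-\delta_w)$, so that $\tilde{R}^{(1)}=(1-\delta_w)(1-\delta_s)/D$, I would split the converse argument according to the sign of $\delta_z-\delta_w$.

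In the easy regime $\delta_z<\delta_w$, one has $(\delta_z-\delta_w)^{+}=0$, and by \eqref{eq:Mw2_Assign} the cache size $\tilde{\Ma}_w^{(1)}$ reduces to $(1-\delta_s)(1-\delta_w)/D=\tilde{R}^{(1)}$ because $\min\{1-\delta_z,1-\delta_w\}=1-\delta_w$. Hence the first argument of the minimum in \eqref{eq:UB1_assign} equals $\tilde{R}^{(1)}$ for every $\beta\in[0,1]$, and the converse follows at once without any further work.

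In the harder regime $\delta_z\ge\delta_w$, I would evaluate \eqref{eq:UB1_assign} at the equalizing parameter
\begin{equation}
\beta^{*}:=\frac{\Kw(1-\delta_s)}{D}\in[0,1].
\end{equation}
The first argument simplifies via the elementary identity $(\delta_z-\delta_w)+(1-\delta_z)=1-\delta_w$; the second uses the additional identity $\Kw\tilde{\Ma}_w^{(1)}+\Ks\tilde{\Ma}_s^{(1)}=1-\delta_z$, which holds because in this regime both of the $\min\{1-\delta_z,\cdot\}$ terms in \eqref{eq:Mw2_Assign}--\eqref{eq:Ms2_Assign} equal $1-\delta_z$, and the same manipulation then collapses the second argument to $\tilde{R}^{(1)}$ as well. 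Monotonicity finishes the job: the first argument has slope $(\delta_z-\delta_w)/\Kw\ge 0$ in $\beta$ while the second has slope $(\delta_s-\delta_w)/\Ka\le 0$, so their minimum is maximized precisely at $\beta^{*}$, yielding $C_{\s}(\tilde{\Ma}_w^{(1)},\tilde{\Ma}_s^{(1)})\le \tilde{R}^{(1)}$.

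The main obstacle I anticipate is bookkeeping rather than conceptual: correctly guessing $\beta^{*}$ and recognizing the telescoping identity $\Kw\tilde{\Ma}_w^{(1)}+\Ks\tilde{\Ma}_s^{(1)}=1-\delta_z$, after which everything reduces to two one-line simplifications. No appeal to the non-secrecy bound \eqref{eq:UB2_assign} is needed, since \eqref{eq:UB1_assign} alone turns out to be tight at this particular rate-memory point.
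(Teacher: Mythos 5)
Your proof is correct and follows essentially the same route as the paper: achievability via the cached-keys scheme behind the $\ell=1$ point of Theorem~\ref{thm:LB_Assign}, and the converse by specializing \eqref{eq:UB1_assign} to $k_w=\Kw$, $k_s=\Ks$, with the equalizing $\beta^{*}=\Kw(1-\delta_s)/\big(\Kw(1-\delta_s)+\Ks(1-\delta_w)\big)$ making both arguments of the minimum equal to $\tilde{R}^{(1)}$. The only discrepancy is cosmetic and in your favor: the paper's proof sketch attributes $\beta^{*}$ to the regime $\delta_z<\delta_w$ and $\beta=0$ to $\delta_z\geq\delta_w$, whereas your assignment---$\beta^{*}$ when $\delta_z\geq\delta_w$, and the trivial bound from the constant first argument $\Mw=\tilde{\Ma}_w^{(1)}=\tilde{R}^{(1)}$ when $\delta_z<\delta_w$---is the one under which the algebra actually closes, since for $\delta_z>\delta_w$ the minimum evaluated at $\beta=0$ falls below $\tilde{R}^{(1)}$ and only the equalizer attains the maximum.
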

		\begin{IEEEproof}
			Achievability follows from the two achievable rate-memory triples $(\tilde{R}^{(0)}, \tilde{\Ma}_w^{(0)}=0, \tilde{\Ma}_s^{(0)}=0)$ and  $(\tilde{R}^{(1)}, \tilde{\Ma}_w^{(1)}, \tilde{\Ma}_s^{(1)})$ in \eqref{eq:R2_Assign}--\eqref{eq:Ms2_Assign} and by time/memory-sharing arguments. The converse follows by specializing upper bound \eqref{eq:UB1_assign} in Theorem~\ref{thm:UB_assign}  to $k_w=\Kw$ and $k_s=\Ks$. In fact, for $k_w=\Kw$, $k_s=\Ks$, and  cache sizes $\Mw = \tilde{\Ma}_w^{(1)}$ and $\Ms= \tilde{\Ma}_s^{(1)}$, the maximizing $\beta$ equals
\begin{equation}
\beta = \frac{\Kw(1-\delta_s)}{\Kw(1-\delta_s)+\Ks(1-\delta_w)}
\end{equation}
when $\delta_z<\delta_w$, and it equals $\beta=0$ when $\delta_z \geq \delta_w$.
\end{IEEEproof}
	
\bigskip
Notice that when $\delta_z \leq \delta_s$, then
\begin{equation}\tilde{R}^{(1)} = \tilde{\Ma}_{w}^{(1)} = \tilde{\Ma}_{s}^{(1)}.
\end{equation} This rate is achieved  by XORing the messages with secret keys stored in the cache memories, and by sending the resulting bits using a traditional non-secure code for the erasure BC.

\section{Global Secrecy Capacity-Memory Tradeoff} \label{sec:globalCs}

In the preceding sections, we considered scenarios with unequal cache sizes at the receivers and showed that in these scenarios joint cache-channel coding schemes can significantly improve over the traditional separation-based schemes with their typical uniform cache assignment. In this section, we emphasize the importance of unequal cache sizes that depend on the receivers' channel conditions by focusing on the \emph{global secrecy capacity-memory tradeoff} $C_{\textnormal{sec,glob}}$, which is the largest secrecy capacity-memory tradeoff that is  possible given  a  total cache budget  
	\begin{equation}
	  \Kw \Mw +\Ks \Ms  	\leq \Ma_{\text{tot}}.
	\end{equation}
	We assign the same  cache memory size $\Mw $  to all weak receivers and the same cache memory size $\Ms  $ to all strong receivers. Using simple time/memory-sharing arguments, it can be shown that this assumption is without loss in optimality.
	
	So, the main quantity of interest in this section is 
	\begin{equation}
C_{\textnormal{sec,glob}} (\Ma_{\text{tot}}): = \max_{\substack{\Mw , \Ms   \geq 0 \colon \\ \Kw \Mw +\Ks \Ms  	\leq \Ma_{\text{tot}} } } C_{\textnormal{sec}} (\Mw , \Ms  )
	\end{equation}

\subsection{Results}
Using the achievability results in Theorems~\ref{thm:LB} and \ref{thm:LB_Assign} combined with an appropriate cache assignment and time/memory-sharing arguments, yields the following lower bound on $C_{\textnormal{sec,glob}}(\Ma_{\textnormal{tot}})$.

	\begin{corollary}{(Lower bound on $C_{\textnormal{sec,glob}}(\Ma_{\tot})$)} \label{cor:LB}
		The global secrecy capacity-memory tradeoff $C_{\textnormal{sec,glob}}$ is lower bounded as
		\begin{align}
			C_{\textnormal{sec,glob}}(\Ma_{\tot}) \geq \textnormal{upper hull} & \Big\{\left(R^{(\ell)},\Ma_{\tot} = \Kw \Ma^{(\ell)}\right),~ \left(\tilde{R}^{(\ell')}, \Ma_{\tot} = \Kw \tilde{\Ma}_{w}^{(\ell')}+\Ks  \tilde{\Ma}_{s}^{(\ell')}\right) \colon \nonumber\\ 
			& \qquad\qquad \ell\in\{0,\dots,\Kw+3\} \textnormal{ and } \ell' \in\{1,\dots,\Ka+\Kw+\Kw \Ks-1 \} \Big\}.
		\end{align}
	\end{corollary}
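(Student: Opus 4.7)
The plan is to directly translate every rate-memory triple $(R,\Mw,\Ms)$ that has already been shown to be securely achievable, in Theorems~\ref{thm:LB} and \ref{thm:LB_Assign}, into a lower bound on $C_{\textnormal{sec,glob}}(\Ma_{\tot})$ for the matching total cache budget $\Ma_{\tot}=\Kw\Mw+\Ks\Ms$. The key observation is that, by the definition of $C_{\textnormal{sec,glob}}$, whenever a triple $(R,\Mw,\Ms)$ is securely achievable in the sense of Definition~\ref{def:Cs} and satisfies $\Kw\Mw+\Ks\Ms\leq \Ma_{\tot}$, the rate $R$ is admissible for the global problem with budget $\Ma_{\tot}$, so $C_{\textnormal{sec,glob}}(\Ma_{\tot})\geq R$.

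First I would apply this observation to the two families of achievable triples already established in the paper. From Theorem~\ref{thm:LB}, each pair $(R^{(\ell)},\Ma^{(\ell)})$ with $\ell\in\{0,\ldots,\Kw+3\}$ is achieved by the cache assignment $\Mw=\Ma^{(\ell)}$, $\Ms=0$, which consumes total budget $\Kw\Ma^{(\ell)}$. From Theorem~\ref{thm:LB_Assign}, each triple $(\tilde R^{(\ell')},\tilde{\Ma}_w^{(\ell')},\tilde{\Ma}_s^{(\ell')})$ with $\ell'\in\{1,\ldots,\Ka+\Kw+\Kw\Ks-1\}$ uses total budget $\Kw\tilde{\Ma}_w^{(\ell')}+\Ks\tilde{\Ma}_s^{(\ell')}$. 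Each of these yields one point in the rate-versus-total-cache plane that lies on or below the curve $C_{\textnormal{sec,glob}}(\cdot)$.

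The final step is to upgrade this discrete collection of achievable points to their entire upper hull. This follows from standard time- and memory-sharing arguments in the spirit of \cite{MaddahAli14}: given two securely achievable schemes with rate-budget pairs $(R_1,\Ma_{\tot,1})$ and $(R_2,\Ma_{\tot,2})$ and any $\lambda\in[0,1]$, one partitions the blocklength into two blocks of relative lengths $\lambda$ and $1-\lambda$ and similarly splits each receiver's cache memory; the first scheme is run on the first block using its assigned cache fraction and the second scheme on the second block. The resulting concatenated scheme achieves rate $\lambda R_1+(1-\lambda)R_2$ with total cache budget $\lambda\Ma_{\tot,1}+(1-\lambda)\Ma_{\tot,2}$, while the secrecy leakage in \eqref{eq:jointSecrecy} decomposes additively over the two disjoint blocks and hence remains vanishing.

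The argument is thus essentially the assembly of the two achievability theorems with a bookkeeping of total cache occupancy; the main step requiring care is verifying that time/memory sharing preserves the joint secrecy constraint, which holds because \eqref{eq:jointSecrecy} is evaluated on independent eavesdropper observations across disjoint subblocks and each component scheme already drives its own normalized leakage to zero. No genuine obstacle arises beyond this routine check.
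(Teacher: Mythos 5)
Your proposal is correct and matches the paper's own (very brief) justification: the paper likewise obtains the corollary by taking the achievable points of Theorems~\ref{thm:LB} and \ref{thm:LB_Assign}, charging their cache assignments against the total budget $\Ma_{\tot}=\Kw\Mw+\Ks\Ms$ (with $\Ms=0$ for the first family), and invoking time/memory-sharing for the upper hull. Your explicit check that the joint secrecy constraint \eqref{eq:jointSecrecy} is preserved under block-wise sharing is a detail the paper leaves implicit, but it is the same argument.
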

	
	\bigskip
	
	\begin{proposition}
	If the eavesdropper is weaker than the strong receivers, i.e., 
	\begin{equation}
	\delta_z > \delta_s,
	\end{equation} then for small cache sizes: 
	\begin{IEEEeqnarray}{rCl}\label{eq:smallCM}
			C_{\textnormal{sec,glob}}( \Ma_{\tot})&=& 
			R^{(0)} + \frac{(\delta_z-\delta_s)^+}{\Kw(\delta_z-\delta_s)^++\Ks (\delta_z-\delta_w)^{+}} \Ma_{\tot} ,   \qquad 0 \leq \Ma_{\tot}  \leq\Kw \Ma^{(1)}. 
		\end{IEEEeqnarray}
		If the eavesdropper is at least as strong as the strong receivers, i.e., 
		\begin{equation}
		\delta_z \leq \delta_s, \label{eq:evestrong}
		\end{equation}
		then for small cache sizes: 
			\begin{IEEEeqnarray}{rCl}\label{eq:smallCM2}
				C_{\textnormal{sec,glob}}( \Ma_{\tot})&=& \frac{\Ma_{\tot}}{\Ka} , \qquad 0 \leq\Ma_{\tot}  \leq \Ka \cdot 	\frac{(1-\delta_w)(1-\delta_s)}{\Kw (1-\delta_s) +\Ks(1-\delta_w)} .		\end{IEEEeqnarray}
	\end{proposition}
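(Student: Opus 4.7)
The plan is to establish matching achievability and converse bounds in each of the two regimes. The achievability arguments specialize Corollaries~\ref{cor:1} and~\ref{cor:2} with an appropriate cache assignment; the converse arguments rely on inequality~\eqref{eq:UB1_assign} of Theorem~\ref{thm:UB_assign} and on Lemma~\ref{lemma:1}, respectively.

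For the regime $\delta_z > \delta_s$, I would prove~\eqref{eq:smallCM} by assigning the entire cache budget uniformly to the weak receivers, i.e., setting $\Mw = \Ma_{\tot}/\Kw$ and $\Ms = 0$; for $\Mw \leq \Ma^{(1)}$, Corollary~\ref{cor:1} then delivers exactly the claimed rate after substitution, and $\Mw \leq \Ma^{(1)}$ is equivalent to $\Ma_{\tot} \leq \Kw \Ma^{(1)}$. For the converse I would specialize~\eqref{eq:UB1_assign} to $k_w = \Kw$, $k_s = \Ks$ and maximize the resulting upper bound over all $(\Mw, \Ms)$ satisfying $\Kw\Mw + \Ks\Ms \leq \Ma_{\tot}$. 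Both arguments of the inner minimum in~\eqref{eq:UB1_assign} are nondecreasing in $\Mw$ and $\Ms$, so the maximum is attained on the budget boundary $\Kw\Mw + \Ks\Ms = \Ma_{\tot}$. On this boundary, the second argument depends on $(\Mw,\Ms)$ only through the sum and is therefore fixed (for each $\beta$), while the first argument is strictly increasing in $\Mw$. A short two-case analysis on the sign of their difference at $\Mw = \Ma_{\tot}/\Kw$ shows that the max-min over the boundary is attained at $(\Mw, \Ms) = (\Ma_{\tot}/\Kw, 0)$, yielding exactly the upper bound~\eqref{eq:UB3} of Corollary~\ref{cor:UB} that is used to prove Corollary~\ref{cor:1}.

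For the regime $\delta_z \leq \delta_s$, the expressions in~\eqref{eq:R2_Assign}--\eqref{eq:Ms2_Assign} collapse to $\tilde R^{(1)} = \tilde\Ma_w^{(1)} = \tilde\Ma_s^{(1)}$, as noted immediately after Corollary~\ref{cor:2}. Time-sharing this achievable triple with the trivial triple $(0,0,0)$ makes $(r, r, r)$ achievable for every $r \in [0, \tilde R^{(1)}]$; choosing $r = \Ma_{\tot}/\Ka$ realizes the uniform allocation $\Mw = \Ms = \Ma_{\tot}/\Ka$, saturates the budget exactly, and delivers the rate $\Ma_{\tot}/\Ka$ claimed in~\eqref{eq:smallCM2} on precisely the stated range $\Ma_{\tot} \leq \Ka\tilde R^{(1)}$. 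For the converse I would apply Lemma~\ref{lemma:1} with $\mathcal{S} = \{1,\ldots,\Ka\}$. Under~\eqref{eq:evestrong}, every legitimate receiver is degraded with respect to the eavesdropper (scenario~(b) of the lemma), so for each realization $Q = q$ the Markov chain $U_\ell \to X \to Z \to Y_{j_\ell}$ and the conditional data-processing inequality give $I(U_\ell; Y_{j_\ell} \mid U_{\ell-1}, Q) \leq I(U_\ell; Z \mid U_{\ell-1}, Q)$, so every bracketed term in~\eqref{eq:twob} vanishes. Inequality~\eqref{eq:twob} at $k = \Ka$ then reduces to $\Ka R \leq \Kw\Mw + \Ks\Ms \leq \Ma_{\tot}$, i.e., $R \leq \Ma_{\tot}/\Ka$.

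The main obstacle is the converse in the first regime: one must argue that, subject to the global budget, the allocation maximizing the parametric upper bound~\eqref{eq:UB1_assign} is precisely $\Mw = \Ma_{\tot}/\Kw$, $\Ms = 0$. The maximization is structurally clean because the two arguments of the inner minimum depend on $(\Mw,\Ms)$ in compatible ways (one through $\Mw$ alone, the other through the budget only) and both are monotone, so the optimization decouples and the boundary analysis reduces to a two-case check. The converse in the second regime, by contrast, is an essentially immediate consequence of Lemma~\ref{lemma:1} once one notes that~\eqref{eq:evestrong} places the setup into scenario~(b) of that lemma and therefore nullifies all positive-rate contributions.
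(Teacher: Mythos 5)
Your proposal is correct and follows essentially the same route as the paper: achievability by assigning the whole budget uniformly to the weak receivers and invoking Corollary~\ref{cor:1} (respectively, by the all-keys triple $\tilde R^{(1)}=\tilde\Ma_w^{(1)}=\tilde\Ma_s^{(1)}$ with uniform assignment and time-sharing), and converses that reduce to the same bounds the paper extracts from Lemma~\ref{lemma:1}. Your optimization of \eqref{eq:UB1_assign} over budget-feasible $(\Mw,\Ms)$ via monotonicity, and your data-processing argument nullifying the bracketed terms under $\delta_z\leq\delta_s$, are just mild repackagings of the paper's steps (bounding $\Kw\Mw\leq\Ma_{\tot}$ in \eqref{eq:bound1}--\eqref{eq:bound2}, and noting both $(\cdot)^+$ terms vanish in \eqref{eq:bound2}).
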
	
	\begin{IEEEproof}
		Achievability of \eqref{eq:smallCM} follows from Corollary~\ref{cor:1} and by assigning all available cache  memory uniformly across weak receivers, so $\Mw=\frac{\Ma_{\tot}}{\Kw}$ and $\Ms=0$.  The converse to \eqref{eq:smallCM} can be proved by  specializing Lemma~\ref{lemma:1} to the set of all users  $\mathcal{S}=\K$. For a given choice of the cache assignment $\Mw, \Ms \geq 0$, Lemma~\ref{lemma:1} yields (amongst others) that for some $\beta \in[0,1]$, the secrecy capacity-memory tradeoff is upper bounded as 
				\begin{IEEEeqnarray}{rCl}\label{eq:bound1}
				\Kw \cdot	C_{\textnormal{sec,glob}}( \Ma_{\tot})&\leq & \beta (\delta_z-\delta_w)^++ \Kw \Mw
							\end{IEEEeqnarray}
							and 
							\begin{IEEEeqnarray}{rCl}\label{eq:bound2}
								\Ka\cdot C_{\textnormal{sec,glob}}( \Ma_{\tot})&\leq &\beta (\delta_z-\delta_w)^+ + (1-\beta)(\delta_z-\delta_s)^++ \Ma_{\tot}.
							\end{IEEEeqnarray}	
							Upper bounding now $\Kw\Mw$ by $\Ma_{\tot}$ and combining \eqref{eq:bound1} and \eqref{eq:bound2} into a single bound yields:
														\begin{equation} \label{eq:u2}
								C_{\textnormal{sec,glob}}(\Ma_{\tot})\leq  \max_{\beta \in [0,1]} \min \left\lbrace \frac{\beta(\delta_z - \delta_w)^++ \Ma_{\tot}}{\Kw}  ,  \; \frac{\beta (\delta_z - \delta_w)^+ + (1-\beta)(\delta_z-\delta_s)^+}{\Ka} + \frac{\Ma_{\tot}  }{\Ka}   \right\rbrace.
							\end{equation}
							If $\delta_z >\delta_w$, then the maximizing $\beta$ in the above upper bound is as in \eqref{eq:betamax} when $\Mw$ is replaced by $\Ma_{\tot}$; otherwise the maximizing $\beta$ equals $0$. Plugging these values into \eqref{eq:u2} establishes the desired upper bound. 
							
							Achievability of \eqref{eq:smallCM2} follows from time/memory-sharing arguments and from the achievability of the rate-memory triple $(\tilde{R}^{(1)}, \tilde{M}_w^{(1)}, \tilde{M}_s^{(1)})$ in \eqref{eq:R2_Assign}--\eqref{eq:Ms2_Assign}, which under \eqref{eq:evestrong}  specializes to the rate-memory triple:
							\begin{equation}
							\tilde{R}^{(1)}=\tilde{\Ma}_w^{(1)}= \tilde{\Ma}_s^{(1)} =  	\frac{(1-\delta_w)(1-\delta_s)}{\Kw (1-\delta_s) +\Ks(1-\delta_w)}.
							\end{equation}
					In fact, the rate-memory triples under consideration are achieved by storing an independent secret key at each receiver and securing the messages with these keys by means of one-time pads. This requires a uniform cache  assignment across \emph{all} receivers, i.e., $\Mw=\Ms=\frac{\Ma_{\tot}}{\Ka}$. The converse to \eqref{eq:smallCM2} follows from upper bound \eqref{eq:bound2}, which under \eqref{eq:evestrong} specializes to $C_{\textnormal{sec,glob}}(\Ma_{\tot}) \leq \Ma_{\tot}/\Ka$.	\end{IEEEproof}
\bigskip

	Figure~\ref{fig:cacheContent} illustrates our choices of the cache contents when $\delta_z > \delta_s$ in the order of increasing total cache memory $\Ma_{\tot}$.  When the total cache budget $\Ma_{\tot}$ is small, all of it is assigned to the weak receivers and it is solely used  to store secret keys. For a slightly larger cache budget $\Ma_{\tot}$, it is assigned to all receivers and  secret keys are stored in the cache memories. When the total cache budget $\Ma_{\tot}$ exceeds the size required to store the keys for securing  the entire communication, the additional space is used to store data at weak receivers. Once the cached data renders the weak  receivers equally powerful (in terms of their decoding performance) as  the strong receivers, data is  also stored in strong receivers' cache memories. When $\delta_z \leq \delta_s$, then our choice is similar, but we start with placing secret keys directly at all the receivers.

	\begin{figure}
		\begin{tikzpicture}
			\draw[thick,->] (0,0.2) -- (0,0) -- (12,0);	\node[below] at (0,0) {$0$};
			\node[left] at (0,0.5) {Cache at strong receivers:};
			\node[left] at (0,1.2) {Cache at  weak receivers:};
			
			\draw[thick] (3,0) -- (3,0.2);		\node at (1.5,0.5) {Empty};		\node at (1.5,1.2) {Keys};
			\draw[thick] (6,0) -- (6,0.2);		\node at (4.5,0.5) {Keys};		\node at (4.5,1.2) {Keys};
			\draw[thick] (9,0) -- (9,0.2);		\node at (7.5,0.5) {Keys};		\node at (7.5,1.2) {Keys + Data};
			\node at (10.5,0.5) {Key + Data};		\node at (10.5,1.2) {Keys + Data};
			
			\node[below] at (6,-0.2) {Total cache budget $\Ma_{\tot}$};
			
		\end{tikzpicture} \caption{Cache content at weak and strong receivers in order of increasing total cache budget.} \label{fig:cacheContent} 
	\end{figure}


\subsection{Results under Uniform Cache Assignment}
	For comparison, we also propose a lower bound on $C_{\textnormal{sec,glob}}(\Ma_{\tot})$ when the cache memory is uniformly assigned over \emph{all receivers}, i.e. 
	\begin{equation}
		\Mw=\Ms = \frac{\Ma_{\tot}}{\Ka}.
	\end{equation}
	
	Consider the following $\Ka+2$ rate-memory pairs: \vspace{3mm}
	
	\begin{subequations}\label{eq:RM_Sym}
	\begin{itemize}
	
	\item \ \vspace{-7mm}
	\begin{IEEEeqnarray}{rCl}
		R_{\Sym}^{(0)} &:=&  \frac{(\delta_z-\delta_s)^{+} \cdot (\delta_z-\delta_w)^{+}}{\Kw(\delta_z-\delta_s)+\Ks (\delta_z-\delta_w)}, \hspace{9.35cm} \\
		\Ma_{\Sym}^{(0)} &:=& 0; \\ \nonumber
	\end{IEEEeqnarray}

	\item \	\vspace{-7mm}
	\begin{IEEEeqnarray}{rCl}
		R_{\Sym}^{(1)} &:=& \frac{(1-\delta_w)(1-\delta_s)}{\Kw(1-\delta_s)+\Ks (1-\delta_w)}, \label{eq:R1_Sym} \hspace{9.65cm}\\
		\Ma_{\Sym}^{(1)} &:=& \frac{(1-\delta_s) \min \left\lbrace 1-\delta_z,1-\delta_w \right\rbrace }{\Kw(1-\delta_s)+\Ks (1-\delta_w)}; \label{eq:M1_Sym} \\ \nonumber
	\end{IEEEeqnarray}
	
	\item \quad For $t \in \{1,\ldots,\Ks -1\}$
	\begin{IEEEeqnarray}{rCl}
		R_{\Sym}^{(t+1)} &:=& \frac{\binom{\Ka}{t}(1-\delta_w)(1-\delta_s)}{\binom{\Ka}{t+1}(1-\delta_s)-\binom{\Ks }{t+1}(\delta_w-\delta_s)}, \label{eq:R2_Sym}\\
		\Ma_{\Sym}^{(t+1)} &:=& \frac{\Da \cdot t\binom{\Ka}{t}(1-\delta_w)(1-\delta_s)}{\Ka\big[ \binom{\Ka}{t+1}(1-\delta_s)-\binom{\Ks }{t+1}(\delta_w-\delta_s)\big]}+\frac{(\Ka-t)\binom{\Ka}{t}(1-\delta_s)\min \left\lbrace1-\delta_z,1-\delta_w \right\rbrace}{\Ka\big[ \binom{\Ka}{t+1}(1-\delta_s)-\binom{\Ks }{t+1}(\delta_w-\delta_s)\big]}; \hspace{1.6cm} \label{eq:M2_Sym} \\ \nonumber
	\end{IEEEeqnarray}
	
	\item \quad For $t \in \{\Ks ,\ldots,\Ka\}$
	\begin{IEEEeqnarray}{rCl}
		R_{\Sym}^{(t+1)} &:=& \frac{(t+1)(1-\delta_w)}{(\Ka-t)}, \label{eq:R3_Sym}\\
		\Ma_{\Sym}^{(t+1)} &:=& \frac{\Da \cdot t(t+1)(1-\delta_w)}{\Ka(\Ka-t)}+ \frac{(t+1)}{\Ka} \min \left\lbrace 1-\delta_z,1-\delta_w \right\rbrace. \hspace{5.5cm} \label{eq:M3_Sym} \\ \nonumber
	\end{IEEEeqnarray}
	
	\end{itemize}
	\end{subequations}

	\begin{proposition}[Lower Bound on $C_{\s}(\Mw =\Ma_{\Sym},\Ms  =\Ma_{\Sym})$]\label{thm:LB_Sym}
		\begin{align}
			C_{\s}\left(\Mw =\Ma_{\Sym},\Ms  =\Ma_{\Sym}\right)\geq \textnormal{upper hull} & \left\lbrace \left(R_{\Sym}^{(\ell)},\Ma_{\Sym}^{(\ell)}\right) \colon  \quad \ell\in\{0,\ldots,\Ka+1\} \right\rbrace.
		\end{align}
	\end{proposition}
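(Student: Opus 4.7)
The plan is to establish each rate-memory pair $(R_{\Sym}^{(\ell)},\Ma_{\Sym}^{(\ell)})$ as securely achievable under the uniform cache assignment $\Mw=\Ms=\Ma_{\Sym}^{(\ell)}$, and then take the upper hull by standard time- and memory-sharing arguments as in \cite{MaddahAli14}.

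The boundary points are immediate. The pair $(R_{\Sym}^{(0)},0)$ specializes to the no-cache secrecy capacity of Remark~\ref{rmk:R0}, where the $(\cdot)^+$ operators in $R_{\Sym}^{(0)}$ cause the expression to vanish whenever the eavesdropper is stronger than one of the receiver classes. The pair $(R_{\Sym}^{(1)},\Ma_{\Sym}^{(1)})$ is achieved by the cached-keys scheme of Section~\ref{sec:allkeys} with uniform cache assignment: one independent secret key per receiver, a one-time pad on each demanded message, and a standard (non-secure) erasure-BC code for the resulting XORed tuple. Secrecy follows from Lemma~\ref{lem:secure}, and the rate coincides with the symmetric specialization $\Mw=\Ms$ of the triple $(\tilde R^{(1)},\tilde\Ma_w^{(1)},\tilde\Ma_s^{(1)})$ from Theorem~\ref{thm:LB_Assign}.

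For each $t\in\{1,\dots,\Ka-1\}$, the point $(R_{\Sym}^{(t+1)},\Ma_{\Sym}^{(t+1)})$ is achieved by the secure generalized coded caching scheme sketched in Section~\ref{sec:generalized} with caching parameter $t$. Each message is split into $\binom{\Ka}{t}$ equal-rate submessages indexed by the $t$-subsets of $\K$; Receiver~$k$ stores the $\binom{\Ka-1}{t-1}$ submessages whose index set contains $k$, together with one independent secret key $K_\mathcal{T}$ for every $(t+1)$-subset $\mathcal{T}\ni k$. In the delivery phase, for each $(t+1)$-subset $\mathcal{T}$ the transmitter broadcasts the masked XOR $\bigl(\bigoplus_{k\in\mathcal{T}} W_{d_k,\mathcal{T}\setminus\{k\}}\bigr)\bigoplus K_\mathcal{T}$ using a point-to-point erasure code tuned to the weakest receiver in $\mathcal{T}$: the supported rate is $(1-\delta_s)$ when $\mathcal{T}\subseteq\K_s$ and $(1-\delta_w)$ otherwise. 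Summing the resulting time allocations over the $\binom{\Ks}{t+1}$ strong-only subsets and the $\binom{\Ka}{t+1}-\binom{\Ks}{t+1}$ weak-involved subsets produces precisely the denominator of $R_{\Sym}^{(t+1)}$ in \eqref{eq:R2_Sym}, and when $t+1>\Ks$ every XOR involves a weak receiver so the formula collapses to \eqref{eq:R3_Sym}. The per-receiver cache size consists of the stored data submessages plus the aggregate key storage for the $\binom{\Ka-1}{t}$ subsets that contain the receiver, whose total length matches $\Ma_{\Sym}^{(t+1)}$ via standard binomial identities and the key-length prescription $\min\{1-\delta_z,1-\delta_w\}$ afforded by Lemma~\ref{lem:secure}.

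The main obstacle is the careful accounting for the intermediate regime $1\le t\le\Ks-1$, where two different code rates must be interleaved on the same channel. Specifically, one must verify simultaneously that (i) the time allocations for strong-only versus weak-involved XORs fit within one channel use while each receiver decodes every XOR it needs, (ii) the key lengths suffice for joint secrecy even when several XORs share a common receiver, and (iii) aggregating the per-subset cache allocations via identities such as $\binom{\Ka-1}{t}/\binom{\Ka}{t+1}=(t+1)/\Ka$ indeed reduces to \eqref{eq:M2_Sym}--\eqref{eq:M3_Sym}. Joint secrecy over the entire library then follows because all transmitted XORs, viewed jointly, are masked by mutually independent secret keys that are never revealed to the eavesdropper, so a standard hybrid argument based on Lemma~\ref{lem:secure} yields $\tfrac{1}{n}I(W_1,\ldots,W_\Da;Z^n)\to 0$.
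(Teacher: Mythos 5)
Your proposal is correct and follows essentially the same route as the paper: the corner point $\ell=0$ via Remark~\ref{rmk:R0}, the point $\ell=1$ via the cached-keys scheme (equivalently, the symmetric specialization of $(\tilde R^{(1)},\tilde\Ma_w^{(1)},\tilde\Ma_s^{(1)})$, since $\Ma_{\Sym}^{(1)}=\max\{\tilde\Ma_w^{(1)},\tilde\Ma_s^{(1)}\}$), and the remaining points by combining the Sengupta-type secure coded caching (keyed XORs per $(t+1)$-subset) with a standard erasure-BC code whose time allocation is tuned to the weakest receiver in each subset, exactly as in the paper's time-sharing computation distinguishing strong-only from weak-involved XORs. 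The only discrepancy is that you let $t$ range over $\{1,\ldots,\Ka-1\}$ while the hull in the statement formally includes $\ell=\Ka+1$ (i.e., $t=\Ka$), a degenerate corner that the paper covers by the same argument.
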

	\begin{proof}
	
	It suffices to prove the achievability of the $\Ka+2$ rate-memory pairs $\{(R^{(\ell)}_{\Sym}, \Ma^{(\ell)}_{\Sym}) \colon \; \ell=0,\ldots,\Ka+1\}$. The rate-memory pair $(R^{(0)}_{\Sym},\Ma^{(0)}_{\Sym})$ is achievable by Remark~\ref{rmk:R0}. The  rate-memory pair  $(R^{(1)}_{\Sym},\Ma^{(1)}_{\Sym})$ is achievable because, by Theorem~\ref{thm:LB_Assign},   rate $\tilde{R}^{(1)}=R^{(1)}_{\Sym}$ is achievable with cache size $\tilde{\Ma}_w^{(1)}$ at weak receivers and  cache size $\tilde{\Ma}_{s}^{(1)}$  at strong receivers,  and because $ \Ma^{(1)}_{\Sym}= \max \big\{ \tilde{\Ma}_w^{(1)}, \tilde{\Ma}_s^{(1)} \big\}$. For each $t \in \{1,\ldots,\Ka\}$,  the rate-memory pair $(R^{(t+1)}_{\Sym},\Ma^{(t+1)}_{\Sym})$ is achieved by a scheme that  combines the Sengupta et al. secure coded caching scheme \cite{Clancy15} with a standard BC code. Notice that when $t \geq \Ks$, then each of the secured XORs produced by the Sengupta et al. scheme is intended for at least one weak receiver, and communication is limited by this weak receiver. Each secured XOR, which is of rate ${\Ka \choose t}^{-1}R_{\textnormal{sym}}$, thus requires slightly more than ${\Ka \choose t}^{-1}\frac{ nR_{\textnormal{sym}}}{1-\delta_w}$ channel uses to be transmitted reliably. 
	 When $t < \Ks$, then ${\Ks \choose t+1}$ of the secured XOR are only sent to strong receivers and can thus be sent reliably by using slightly more than ${\Ka \choose t}^{-1}\frac{ nR_{\textnormal{sym}}}{1-\delta_s}$ channel uses. The remaining ${\Ka \choose t+1}- {\Ks \choose t+1}$ secured XORs  are sent to at least one weak receiver, and can thus be sent reliably by using slightly more than  ${\Ka \choose t}^{-1}\frac{ nR_{\textnormal{sym}}}{1-\delta_w}$  channel uses. 
	\end{proof}
	
	\bigskip
		\subsection{Numerical Comparison}

	\begin{figure} \centering
		\begin{tikzpicture}[scale=2.5]
			\footnotesize
			\draw (0,0) -- (4.5455,0) -- (4.5455,2.8) -- (0,2.8) -- (0,0);
			\node[below] at (2,-0.2) {$M_{\text{tot}}=\Kw \Mw +\Ks \Ms  $};
			\node[rotate=90] at (-0.5,1.4) {$C_{\text{sec}}(\Ma_\tot)$};
			\node[below] at (0,0) {$0$};	\node[left] at (0,0) {$0$};

			\draw[densely dotted, gray] (0.9091,0) -- (0.9091,2.8); 	\draw[densely dotted, gray] (1.8182,0) -- (1.8182,2.8);
			\draw[densely dotted, gray] (2.7273,0) -- (2.7273,0.085);	\draw[densely dotted, gray] (2.7273,0.95) -- (2.7273,2.8);		
			\draw[densely dotted, gray] (3.6364,0) -- (3.6364,0.085);	\draw[densely dotted, gray] (3.6364,0.95) -- (3.6364,2.8);
			\draw[densely dotted, gray] (0,0.7) -- (2.1,0.7);		\draw[densely dotted, gray] (4.45,0.7) -- (4.5455,0.7);	
			\draw[densely dotted, gray] (0,1.4) -- (4.5455,1.4);	\draw[densely dotted, gray] (0,2.1) -- (4.5455,2.1);

			\draw (0,0.7) -- (0.05,0.7);		\node[left] at (0,0.7) {$0.02$};
			\draw (0,1.4) -- (0.05,1.4);		\node[left] at (0,1.4) {$0.04$};
			\draw (0,2.1) -- (0.05,2.1);		\node[left] at (0,2.1) {$0.06$};
			\node[left] at (0,2.8) {$0.08$};
			
			\draw (0.9091,0) -- (0.9091,0.05);		\node[below] at (0.9091,0) {$5$};
			\draw (1.8182,0) -- (1.8182,0.05);		\node[below] at (1.8182,0) {$10$};
			\draw (2.7273,0) -- (2.7273,0.05);		\node[below] at (2.7273,0) {$15$};			
			\draw (3.6364,0) -- (3.6364,0.05);		\node[below] at (3.6364,0) {$20$};
			\node[below] at (4.5455,0) {$25$};
			
    		\draw[thick,green] (0,0.0046*35) -- (0.16/5.5,0.012*35) -- (0.2/5.5,0.0126*35) -- (1.1765/5.5,0.0229*35) -- (3/5.5,0.0323*35) -- (5.377/5.5,0.0393*35) -- (8.0897/5.5,0.0445*35) -- (10.993/5.5,0.0483*35) -- (14/5.5,0.051*35) -- (17.0608/5.5,0.053*35) -- (20.1471/5.5,0.0546*35) -- (23.2432/5.5,0.0557*35) -- (4.5455,0.0562*35);    		
    		\draw[thick] (0,0.1615) -- (0.0291,0.42) -- (0.0364,0.4421) -- (0.2212,0.8211) -- (0.3114,0.9191) -- (0.5814,1.1923) -- (0.8545,1.435) -- (1.6995,1.9943) -- (2.8857,2.6025) -- (3.36,2.8);
    		\draw[thick,red] (0,0.4421) -- (0.1641,0.8211) -- (0.2388,0.9191) -- (0.435,1.1589) -- (0.7454,1.435) -- (1.554,1.9943) -- (2.7039,2.6025)-- (3.1691,2.8);	
    		\draw[thick,blue] (0,0.1615) -- (0.0459,0.4421) -- (0.2738,0.7742) -- (0.7045,1.1462) -- (1.3633,1.5630) -- (2.282,2.0215) -- (3.4916,2.5206) -- (4.2764,2.8);     		
    		
    		\draw[thick] (2.15,0.8) -- (2.3,0.8);		\node[right] at (2.3,0.8) {LB on $C_{\text{sec,glob}}(\Ma_\tot )$};
			\draw[thick,green] (2.15,0.6) -- (2.3,0.6);	 \node[right] at (2.3,0.6) {LB on $C_{\text{sec}}(\Mw=\Ma_\tot/\Kw, \Ms=0 )$};
			\draw[thick,blue] (2.15,0.4) -- (2.3,0.4);		\node[right] at (2.3,0.4) {LB on $C_{\text{sec}}(\Mw=\Ma_\tot/\Ka, \Ms=\Ma_\tot/\Ka  )$};
			\draw[thick,red] (2.15,0.2) -- (2.3,0.2);		\node[right] at (2.3,0.2) {LB on $C_{\text{glob}}(\Ma_\tot )$};
			\draw (2.1,0.085) rectangle (4.45,0.95);

		\end{tikzpicture} 
		\caption{Lower bounds on $C_{\text{sec,glob}}(\Ma_\tot)$ for $\delta_w = 0.7$, $\delta_s = 0.2$, $\delta_z = 0.8$, $D=50$, $\Kw = 20$, and $\Ks  = 10$.} \label{fig:ex3}
	\end{figure}


	
Figure~\ref{fig:ex3} plots the  lower bound on $C_{\textnormal{sec,glob}}(\Ma_\tot)$ in Corollary~\ref{cor:LB} (black line) for an example with $\Kw=20$ weak receivers an $\Ks=10$ strong receivers. The eavesdropper is degraded with respect to all legitimate receivers.
	For comparison, the figure also shows our lower bound on the  secrecy capacity-memory tradeoff for the scenarios where the available cache memory is uniformly assigned over all  weak receivers , i.e., $\Ma_{w}=\Ma_{\textnormal{tot}}/\Kw$ and $\Ms  =0$, and where it is uniformly assigned over \emph{all receivers}  (blue line), i.e., $\Ma_{w}=\Ms  =\Ma_{\textnormal{tot}}/\Ka$. Finally, the red  line depicts the lower bound on the standard (non-secure) global capacity-memory tradeoff obtained from  \cite{WiggerYenerJournal} and \cite{Gunduz}. One observes that  our lower bound on the global secrecy capacity-memory tradeoff is   close to the currently best known lower bound on the  {non-secure} capacity-memory tradeoff. Moreover, similarly to \cite{WiggerYenerJournal}, for the {global secrecy capacity-memory tradeoff} it is suboptimal to assign the cache memories uniformly across users (unless the eavesdropper is stronger than all other receivers). In particular, as Proposition~\ref{thm:LB_Sym} shows, for small cache memories all of it should be assigned uniformly over the weak receivers only. \bigskip

\section{Coding Schemes when Only Weak Receivers have Cache Memories} \label{sec:proofLB}

\subsection{Wiretap and Cached Keys}\label{sec:wiretap_analysis}
Let Subphase 1 comprise the first $\beta n$ channel uses and Subphase~2 the last $(1-\beta)n$ channel uses, with $\beta$ chosen as  
\begin{equation} \label{eq:R1_alpha}
\beta=\frac{\Kw(\delta_z-\delta_s)}{\Kw(\delta_z-\delta_s)+\Ks (1-\delta_w)}.
\end{equation}
Fix a small $\epsilon>0$. Let the secret keys $K_1,\ldots, K_{\Kw}$ be independent of each other and of rate
\begin{equation}\label{eq:k}
R_{\key}= \min\bigg\{ \frac{\beta(1-\delta_z)}{\Kw}, R\bigg\},
\end{equation}
where the message rate $R$ is chosen as
\begin{equation}
R= \frac{(\delta_z-\delta_s)(1-\delta_w)}{ \Kw(\delta_z-\delta_s)+\Ks(1-\delta_w)}- \epsilon.
\end{equation}
The scheme requires a cache size at weak receivers equal to 
\begin{equation}
\Mw= R_{\key} = \min\bigg\{\frac{(\delta_z-\delta_s)(1-\delta_z)}{\Kw(\delta_z-\delta_s)+\Ks (1-\delta_w)} , \ \frac{(\delta_z-\delta_s)(1-\delta_w)}{ \Kw(\delta_z-\delta_s)+\Ks(1-\delta_w)}- \epsilon  \bigg\}.
\end{equation}

We analyze the scheme when averaged over the choice of the random code construction $\C$. By the joint typicality lemma, the average probability of a decoding error at the weak receivers  tends to 0 as $n\to \infty$, because 
\begin{equation}\label{eq:Rconst1}
R < \frac{\beta (1-\delta_w)}{\Kw}.
\end{equation} 
The probability of error of the wiretap decoding at the strong receivers tends to 0 as $n\to \infty$, because 
\begin{equation}\label{eq:Rconst2}
R < \frac{(1-\beta) (\delta_z-\delta_s)}{\Ks}.
\end{equation} 
Notice that by the choice of $\beta$ in \eqref{eq:R1_alpha},  the two constraints \eqref{eq:Rconst1} and \eqref{eq:Rconst2} coincide.

We conclude this analysis by verifying the secrecy constraint averaged over the choice of the codebooks $\C$. For fixed blocklength $n$:
\begin{align}
\frac{1}{n} I(  W_1, \ldots, W_\Da; Z^n |\C) & =\frac{1}{n} I ( W_{1}, \ldots, W_{\Da}; Z_{1}^{\beta n} |\C) + \frac{1}{n} I (W_{1}, \ldots, W_{\Da}; Z_{\beta n+1}^{ n} | \C)\nonumber \\
& = \frac{1}{n} I ( W_{d_1}, \ldots, W_{d_{\Kw}}; Z_{1}^{\beta n} |\C) + \frac{1}{n} I (W_{d_{\Kw+1}}, \ldots, W_{d_\Ka}; Z_{\beta n+1}^{ n} | \C)\label{eq:sum}
\end{align}
The first term in \eqref{eq:sum} tends to 0 as $n\to \infty$ by  \eqref{eq:k} and by Lemma~\ref{lem:secure}. The second term tends to 0 as $n\to \infty$ because of the properties of a wiretap code.  

From the analysis we conclude that the average over all choices of the codebooks satisfies the desired properties. There must thus exist at least one good codebook. 
Letting $\epsilon\to 0$ concludes the achievability proof of the rate-memory pair $R=R^{(1)}$ and $\Mw=\Ma^{(1)}$ in \eqref{eq:R1} and \eqref{eq:M1}. \bigskip

\subsection{Cached-Aided Superposition-Jamming}\label{sec:superposition_analysis}

Let 
\begin{equation}
\gamma: = \left\lbrace \frac{\Kw(\delta_z-\delta_s)}{\Ks (1-\delta_w)+\Kw(\delta_w-\delta_s)},\frac{\Kw(1-\delta_s)}{\Ks (1-\delta_w)+\Kw(1-\delta_s)} \right\rbrace ,
\end{equation}
and set the rate 
\begin{equation}
R= \frac{\gamma(1-\delta_w)}{\Kw}-\epsilon.
\end{equation}
Let further $p\in[0,1/2]$ be so that its binary entropy-function $h_b(p)=1-\gamma$.\bigskip

\underline{\textit{Placement phase:}} Generate independent random keys $K_1,\ldots, K_w$ of rate 
\begin{equation}\label{eq:kweak}
R_\key =\min\bigg\{ \frac{ 1-\delta_z}{\Kw},R \bigg\}
\end{equation}
Store each key $K_i$ in the cache memory of Receiver~$i$, for $i\in\Kw$.\bigskip

\underline{\textit{Delivery phase:}}
Fix a small $\epsilon>0$, and generate a cloud center codebook \begin{equation}
\C_{\textnormal{center}}=\big\{   u^n(1),\ldots, u^n(\lfloor 2^{nR}\rfloor)\big\}
\end{equation}
by picking all entries of all codewords i.i.d. according to a Bernoulli-$1/2$ distribution. Choose a wiretap binning rate of 
\begin{equation}\label{eq:Rbint}
R_{\textnormal{bin}}= \{(1-\delta_z)- \gamma(1-\delta_w)\}^+.
\end{equation}
Then, superposition on each codeword $u^n(w)$, for $w\in\{1,\ldots, \lfloor 2^{nR}\rfloor \}$, a satellite codebook
\begin{equation}
\C_{\textnormal{sat}} (w) = \big\{   x^n(v,b|w) \colon v\in\big\{1,\ldots, \lfloor 2^{nR}\rfloor\big\}, b \in \big\{1, \ldots \lfloor 2^{nR_{\textnormal{bin}}}\rfloor \big\}\big\}
\end{equation} 
by drawing all entries of all codewords i.i.d. according to a Bernoulli-$p$ distribution.

The transmitter generates the tuples
\begin{equation}\label{eq:cloud}
\mathbf{W}_{\sec}:=\big(\enc{W_{d_1}}{K_1}, \;\ \enc{W_{d_2}}{K_2}, \;\ \enc{W_{d_3}}{K_3},\  \ldots,\ \enc{W_{d_{\Kw}}}{K_{\Kw}}\big),
\end{equation}
and
\begin{equation}
\mathbf{W}_{\textnormal{sat}}:=\big(W_{d_{\Kw+1}},\ldots, W_{d_{\Ka}}\big),
\end{equation} 
and draws $B$ uniformly at random over $\{1, \ldots,\lfloor 2^{nR_{\textnormal{bin}}} \rfloor$\}.
It then sends the codeword 
\begin{equation}
x^{n}\big( \mathbf{W}_{\textnormal{sat}},B \big|\mathbf{W}_{\sec} \big)  
\end{equation}
over the channel.
Each weak receiver~$i\in\K_w$ decodes  the message $\mathbf{W}_{\sec}$ in the cloud center, and with the key $K_i$ retrieved from its cache memory, it produces a guess of its desired message $W_{d_i}$. Each strong receiver~$j\in\K_s$ decodes both messages $\mathbf{W}_{\sec}, \mathbf{W}_{\textnormal{sat}}$ as well as the random binning index $B$ and extracts the guess of its desired message $W_{d_j}$.

\bigskip

\textit{Analysis:} The scheme requires cache memories at the weak receivers of size 
\begin{equation}\label{eq:Mw12}
\Mw= R_\key = \min\bigg\{\frac{ 1-\delta_z}{\Kw}, \ \frac{(\delta_z-\delta_s)(1-\delta_w)}{\Ks (1-\delta_w)+\Kw(\delta_w-\delta_s)} -\epsilon\bigg\}.
\end{equation}
We analyze the scheme on average over the random choice of the codebook $\C$. The average probability of decoding error at a weak receiver tends to 0 as $n\to \infty$, because 
\begin{subequations}
	\begin{align}
	\Kw	R & < 
	\gamma(1-\delta_w).
	\end{align}
	Similarly,  the  average probability of decoding error at a strong receiver tends to 0 as $n\to \infty$, because
	\begin{align}
	R_{\textnormal{bin}}+\Ks R & < (1-\gamma)(1-\delta_s).
	\end{align}
\end{subequations}

We turn to the secrecy analysis of the scheme. We have: 
\begin{align}
\frac{1}{n} I(  W_1, \ldots, W_\Da; Z^n |\C) & = \frac{1}{n} I(  W_{d_1}, \ldots, W_{d_{\Kw}}; Z^n|\C) + \frac{1}{n} I(  W_{d_{\Kw+1}}, \ldots, W_{d_{\Ka}}; Z^n| W_{d_1}, \ldots, W_{d_{\Kw}}, \C). 
\label{eq:sum2}
\end{align}
The first term on the right-hand side of  \eqref{eq:sum2} tends to 0 as $n \to \infty$ because of  Lemma~1, the key-rate~\eqref{eq:kweak}, and because when averaged over codebooks, the satellite codebooks can simply be considered as additional noise. That the second term in \eqref{eq:sum2} tends to 0 as $n\to \infty$, can be shown following the steps in \cite{ekrem13} or by adapting the steps in \cite[Chapter 22]{elGamalBook}. 
Putting these observations together proves that $\frac{1}{n} I(  W_1, \ldots, W_D; Z^n |\C) \to 0$ as $n\to \infty$.  

By standard arguments, one can then conclude that there must exist at least one deterministic  codebook such that the probability  of decoding error vanishes for $n\to \infty$ and the secrecy constraint \eqref{eq:jointSecrecy} constraint holds.

Since $R\to R^{(2)}$  and  $\Mw \to \Ma^{(2)}$ as $\epsilon \to 0$, this establishes achievability of the rate-memory pair $(R^{(2)}, \Mw^{(2)})$ in \eqref{eq:R2} and \eqref{eq:M2}.
\bigskip
\subsection{Secure Cache-Aided Piggyback Coding I} ~~ \label{sec:piggyback_analysis}

Delivery transmission will be partitioned into three subphases, whose lengths are determined by the parameters: \begin{subequations}\label{eq:betas}
	\begin{IEEEeqnarray}{rCl}
		\beta_1& = &\frac{(\Kw-t)(\Kw-t+1)(\delta_z-\delta_s)\min\{\delta_w-\delta_s,\delta_z-\delta_s\}}{(\Kw\!-t+1)(\delta_z-\delta_s) \big[\Ks (t+1)(1-\delta_w)\!+\!(\Kw\!-t)\min \left\lbrace \delta_w-\delta_s,\delta_z-\delta_s \right\rbrace \big]\! + \Ks ^2t(t+1)(1-\delta_w)^2},  \\
		\beta_2& = & \frac{\Ks(\Kw-t+1) (t+1)(1-\delta_w)(\delta_z-\delta_s)}{(\Kw\!-t+1)(\delta_z-\delta_s) \big[\Ks (t+1)(1-\delta_w)\!+\!(\Kw\!-t)\min \left\lbrace \delta_w-\delta_s,\delta_z-\delta_s \right\rbrace \big]\! + \Ks ^2t(t+1)(1-\delta_w)^2}, \\
		\beta_3& = &  \frac{\Ks^2 t(t+1)(1-\delta_w)^2}{(\Kw-t+1)(1-\delta_s)\big[(\Kw-t)(\delta_w-\delta_s) + \Ks (t+1)(1-\delta_w)\big] + \Ks ^2t(t+1)(1-\delta_w)^2}.
	\end{IEEEeqnarray}
	Notice that $\beta_1+\beta_2+\beta_3=1$.
\end{subequations}

\textit{Message splitting:} Fix a small $\epsilon>0$. Divide each message into two independent submessages
\begin{equation}
W_d = \big[ W_d^{(A)}, W_d^{(B)}\big], \quad d \in \D,
\end{equation}
that are of rates
\begin{equation}
R^{(A)} = \frac{\Ks t(t+1)(1-\delta_w)^2(\delta_z-\delta_s)}{(\Kw\!-t+1)(\delta_z-\delta_s) \big[\Ks (t+1)(1-\delta_w)\!+\!(\Kw\!-t)\min \left\lbrace \delta_w-\delta_s,\delta_z-\delta_s \right\rbrace \big]\! + \Ks ^2t(t+1)(1-\delta_w)^2}-\epsilon/2
\end{equation} 
and
\begin{equation}
R^{(B)} = \frac{(\Kw-t+1)(t+1)(1-\delta_w)(\delta_z-\delta_s)\min\{\delta_w-\delta_s,\delta_z-\delta_s\}}{(\Kw\!-t+1)(\delta_z-\delta_s) \big[\Ks (t+1)(1-\delta_w)\!+\!(\Kw\!-t)\min \left\lbrace \delta_w-\delta_s,\delta_z-\delta_s \right\rbrace \big]\! + \Ks ^2t(t+1)(1-\delta_w)^2}-\epsilon/2.
\end{equation} 
 \bigskip

Denote the $\binom{\Kw}{t-1}$ subsets of $\{1,\ldots,\Kw\}$ of size $t-1$ by $G_1^{(t-1)}, \ldots, G_{\binom{\Kw}{t-1}}^{(t-1)}$; the $\binom{\Kw}{t}$ subsets of $\{1,\ldots,\Kw\}$ of size $t$ by $G_1^{(t)}, \ldots, G_{\binom{\Kw}{t}}^{(t)}$; and the $\binom{\Kw}{t+1}$ subsets of $\{1,\ldots,\Kw\}$ of size $t+1$ by $G_1^{(t+1)}, \ldots, G_{\binom{\Kw}{t+1}}^{(t+1)}$.

Divide each message $W_d^{(A)}$ into $\binom{\Kw}{t-1}$ submessages 
\begin{align}
W_d^{(A)} & = \left\lbrace W_{d,G_\ell^{(t-1)}}^{(A)} \colon \quad \ell \in \left\lbrace 1,\ldots,\binom{\Kw}{t-1} \right\rbrace  \right\rbrace,
\end{align} 
of rate
\begin{equation}\label{eq:rateAa}
r^{(A)} = R^{(A)} {\binom{\Kw}{t-1}}^{-1},
\end{equation}
and divide each message $W_d^{(B)}$ into $\binom{\Kw}{t}$ submessages
\begin{align}
W_d^{(B)}  = \left\lbrace W_{d,G_\ell^{(t)}}^{(B)} \colon \quad \ell \in \left\lbrace 1,\ldots,\binom{\Kw}{t} \right\rbrace  \right\rbrace,
\end{align}
of rate 
\begin{equation}\label{eq:rateBa}
r^{(B)} = R^{(B)} {\binom{\Kw}{t}}^{-1}.
\end{equation} 
\bigskip

\textit{Key generation:}
\begin{itemize}
	\item For each $\ell \in \left\lbrace 1,\ldots,\binom{\Kw}{t+1} \right\rbrace$, generate an independent random key $K_{G_\ell^{(t+1)}}$ of rate 
	\begin{equation}
	R_{\key,1} = {\binom{\Kw}{t+1}}^{-1} \cdot \beta_1 \cdot \min \left\lbrace 1-\delta_z,1-\delta_w \right\rbrace.
	\end{equation} 
	\item For each $\ell \in \left\lbrace 1,\ldots,\binom{\Kw}{t} \right\rbrace$, generate an independent random key $K_{G_\ell^{(t)}}$ of rate
	\begin{equation}
	R_{\key,2} = {\binom{\Kw}{t}}^{-1} \cdot \beta_2 \cdot \min \left\lbrace 1-\delta_z,1-\delta_w \right\rbrace.
	\end{equation}
\end{itemize}
Define the binning rate 
	\begin{equation}\label{eq:Rbin}
	R_{\textnormal{bin}} = {\binom{\Kw}{t}}^{-1} \cdot \beta_2 \cdot \min \left\lbrace \max \left\lbrace 0,\delta_w-\delta_z \right\rbrace, \delta_w-\delta_s \right\rbrace.
	\end{equation}
\bigskip

\underline{\textit{Placement phase:}}
Placement is as indicated in the following table.

\begin{figure}[H] \centering
	\begin{tikzpicture}
	\node [above] at (3.5,3.6) {Cache at weak receiver~$i$};
	\draw[rounded corners=7pt,thick] (-0.5,1) rectangle (7.5,3.6);
	\node at (3.5,2.8) {$\left\lbrace \left\lbrace W_{d,G_\ell^{(t-1)}}^{(A)} \right\rbrace_{\ell \colon i \in G_\ell^{(t-1)}}, \ \left\lbrace W_{d,G_\ell^{(t)}}^{(B)} \right\rbrace_{\ell \colon i \in G_\ell^{(t)}} \right\rbrace_{\!d=1}^{\!\Da}$};	
	\node at (3.5,1.5) {$\left\lbrace K_{G_\ell^{(t+1)}} \right\rbrace_{\ell \colon i \in G_\ell^{(t+1)}}, \ \left\lbrace K_{G_\ell^{(t)}} \right\rbrace_{\ell \colon i \in G_\ell^{(t)}}$};	
		\end{tikzpicture} 
\end{figure}
\bigskip

\underline{\textit{Delivery phase:}} Is divided into three subphases of lengths $\beta_1n$, $\beta_2n$, and $\beta_3n$. \bigskip

\textit{Subphase 1:} This subphase is dedicated  to the transmission of the  parts  of $W_{d_1}^{(B)}, \ldots, W_{d_{\Kw}}^{(B)}$ that are not stored in the cache memories of the respective weak receivers.
For each $\ell \in\{1,\ldots, {\Kw \choose t+1}\}$, the transmitter first calculates the XOR-message 
\begin{equation}
{W}_{\XOR,G_\ell^{(t+1)}}^{(B)}:=\bigoplus\limits_{i \in G_\ell^{(t+1)}} W_{d_i,G_\ell^{(t+1)}\setminus \{i\}}^{(B)},
\end{equation}
and its secured version
\begin{equation}
W_{\sec,G_\ell^{(t+1)}}^{(B)}= \enc{{W}_{\XOR,G_\ell^{(t+1)}}^{(B)}}{ \ K_{G_\ell^{(t+1)}}}.
\end{equation}
It then sends the secured message tuple 
\begin{equation}\label{eq:secureda}
\mathbf{W}_{\sec, w}^{(B)}:=(W_{\sec,G_1^{(t+1)}}^{(B)} , \ldots, W_{\sec,G_{\Kw \choose t+1}^{(t+1)}}^{(B)} )
\end{equation}
using a  capacity achieving point-to-point code to all  weak receivers. After decoding the  message tuple \eqref{eq:secureda}, each weak receiver~$i$ 
retrieves 
the secret key $K_{G_\ell^{(t+1)}}$ from its cache memory, for $\ell \in\{1,\ldots, {\Kw \choose t+1}\}$ so that $i\in G_\ell^{(t+1)}$, and produces
$\hat{W}_{\XOR,G_\ell^{(t+1)}}^{(B)}$. Then, it also retrieves the submessages  
\begin{equation}
\Big\{W_{d_k,G_\ell^{(t+1)}\setminus \{k\}}^{(B)}\Big\}_{k \in G_\ell^{(t+1)}\backslash \{i\} }
\end{equation}
from its cache memory, and guesses the desired submessage as:
\begin{equation}\label{eq:dec11}
\hat{W}_{d_k,G_\ell^{(t+1)}\setminus \{i\}}^{(B)} = \hat{W}_{\XOR,G_\ell^{(t+1)}}^{(B)} \bigoplus_{ k \in G_\ell^{(t+1)}\backslash \{i\} }W_{d_k,G_\ell^{(t+1)}\setminus \{k\}}^{(B)}.
\end{equation} 
At the end of  Subphase~1, each weak receiver $i\in\Kw$ assembles the guesses produced in \eqref{eq:dec11} with the parts of $W_{d_i}^{(B)}$ it has stored in its cache memory to form a guess $\hat{W}_{d_i}^{(B)}$.
\bigskip

\textit{Subphase 2:} This subphase is dedicated to the transmission of the parts of $W_{d_i}^{(A)}$ that are not stored in weak receiver $i$'s cache memory, for $i\in\K_w$, and messages $W_{d_j}^{(B)}$, for $j\in \K_s $.
Time-sharing is applied over $\binom{\Kw}{t}$ periods, each of length 
$$n_2={\binom{\Kw}{t}}^{-1}\cdot\beta_2 n.$$
The periods are labeled   $G_1^{(t)},\ldots,G_{\binom{\Kw}{t} }^{(t)}$. For Period $G_{\ell}^{(t)}$,  $\ell \in \big\{1,\ldots, {\Kw \choose t}\big\}$, the transmitter 
calculates the XOR-message
\begin{subequations} \label{eq:msg}
	\begin{equation} 
	W_{\XOR,G_\ell^{(t)}}^{(A)} : = \bigoplus\limits_{i \in G_\ell^{(t)}} W_{d_i,G_\ell^{(t)}\setminus \{i\}}^{(A)} 
	\end{equation}
	and its secured version 
	\begin{equation}
	W_{\sec,G_{\ell}^{(t)}}^{(A)} :=\enc{	W_{\XOR,G_\ell^{(t)}}^{(A)}}{K_{G_\ell^{(t)}}}
	\end{equation}
	 It also 
	forms the tuple of non-secured messages 
	\begin{equation} 
\mathbf{W}_{s,G_\ell^{(t)}}^{(B)}:=\Big(	W_{d_{\Kw+1},G_\ell^{(t)}}^{(B)}, \ldots,W_{d_\Ka,G_\ell^{(t)}}^{(B)}\Big)
	\end{equation}
\end{subequations}
which it sends to the strong receivers. To this end, generate for each period $G_\ell^{(t)}$ a secure piggyback codebook \cite{schaefer}, see Figure~\ref{fig:piggy}, \begin{equation}
\C_{\textnormal{spg},G_\ell^{(t)}} =\Big\{ x_{G_\ell^{(t)}}^{n_2}\big( \ell_{\textnormal{row}} ; \ \ell_{\textnormal{col}};\  b\big)\colon  \; \ell_{\textnormal{row}}  \in \big\{1,\ldots, \big\lfloor 2^{nr^{(A)}}\big\rfloor \big\},\ \ell_{\textnormal{col}}  \in \big\{1,\ldots, \big\lfloor 2^{nr^{(B)}} \big\rfloor\big\} ,\ b\in \big\{1,\ldots, \lfloor 2^{nR_{\textnormal{bin}}} \rfloor\big\}\Big\},
\end{equation} 
by drawing each entry of each codeword i.i.d. according to a Bernoulli-$1/2$ distribution. The  rates $r^{(A)}$, $r^{(B)}$, and $R_{\textnormal{bin}}$ are defined in \eqref{eq:rateAa}, \eqref{eq:rateBa}, and \eqref{eq:Rbin}.

The transmitter draws an index $B$ uniformly at random over $\{1,\ldots, \lfloor 2^{nR_{\textnormal{bin}}}\rfloor\}$, and  sends the codeword
\begin{equation}
x_{G_\ell^{(t)}}^{n_2}\big( 	{W}_{\sec,G_\ell^{(t)}}^{(A)} ; \ \mathbf{W}_{s,G_\ell^{(t)}}^{(B)}; \ B \big)
\end{equation} 
over the channel.

We now describe the decoding, starting with the decoding at the weak receivers. For each  $i\in \Kw$ and each $\ell\in\big\{1,\ldots,{\Kw \choose t}\big\}$ so that  $i \in G_\ell^{(t)}$,  Receiver~$i$ decodes message $W_{d_{i},G_\ell^{(t)}\setminus \{i\}}^{(A)}$ sent in Period $G_\ell^{(t)}$ by performing the following steps:
	\begin{enumerate}
		\item It   retrieves the secret key $K_{G_\ell^{(t)}}$   and the messages $W_{d_{\Kw+1},G_\ell^{(t)}}^{(B)}, \ldots, W_{d_{\Ka},G_\ell^{(t)}}^{(B)} $ from its cache memory.\\
		\item  It   extracts the subcodebook 
		\begin{IEEEeqnarray}{rCl}\label{eq:subcode}
			\tilde{\C}_{\textnormal{spg},G_\ell^{(t)}} \Big( \mathbf{W}_{s,G_\ell^{(t)}}^{(B)}  \Big) := \Big\{ x^{n_2}_{G_\ell^{(t)}} \Big(\ell_{\textnormal{row}}; \ \mathbf{W}_{s,G_\ell^{(t)}}^{(B)};\ b \Big)  \colon \;\quad \ell_{\textnormal{row}} \in \big\{1, \ldots,\big\lfloor 2^{n_2 r^{(A)}} \big\rfloor\big\},\; b\in \big\{1,\ldots, \lfloor 2^{nR_{\textnormal{bin}}} \rfloor\big\}\Big\}.
		\end{IEEEeqnarray} 
		\vspace{1mm}
		
		\item Based on the reduced codebook	$\tilde{\C}_{\textnormal{spg},G_\ell^{(t)}} \Big( \mathbf{W}_{s,G_\ell^{(t)}}^{(B)}  \Big)$, it decodes the secured message 
		$W_{\sec,G_{\ell}^{(t)}}^{(A)}$ from its channel outputs in this period $G_\ell^{(t)}$.\\[1ex]
		\item It applies the inverse mapping  $\textsf{sec}^{-1}_{K_{G_\ell^{(t)}}}$ to the  message decoded in step 3) to obtain the guess  $\hat{W}_{\XOR,G_\ell^{(t)}}^{(A)}$.\\[1ex]
		\item Finally, it   produces the guess
		\begin{equation}\label{eq:dec22}
		\hat{W}_{d_{i}, G_\ell^{(t)}\backslash \{i\}}^{(A)} = \hat{W}_{\XOR,G_\ell^{(t)}}^{(A)}   \bigoplus\limits_{k\in G_\ell^{(t)} \backslash \{i\}} W_{d_k,G_\ell^{(t)}\setminus \{k\}}^{(A)}.
		\end{equation}
	\end{enumerate}

Strong receivers treat the secure piggyback codebook as a simple wiretap codebook and decode all transmitted messages as well as the bin indices. 

At the end of Subphase 2, each weak receiver $i\in \K_w$ assembles 
the parts of submessage $W_{d_i}^{(A)}$ that it decoded or that are stored in its cache memory to form the guess $\hat{W}_{d_i}^{(A)}$. Similarly, each strong receiver~$j\in\K_s$ assembles the decoded parts of $W_{d_j}^{(B)}$ to form the guess $\hat{W}_{d_j}^{(B)}$.
\bigskip

\textit{Subphase 3:} A wiretap code  is used to send the message tuple 
\begin{equation}
W_{d_{\Kw+1}}^{(A)},\ldots, W_{d_{\Ka}}^{(A)}
\end{equation}  to all the strong receivers~$\Kw+1,\ldots, \Ka$. 

After the last subphase, each Receiver~$k\in\K$ assembles its guesses produced for the two submessages $W_{d_k}^{(A)}$ and $W_{d_k}^{(B)}$ to the final guess $\hat{W}_{k}$.
\bigskip

\underline{\textit{Analysis:}} Analysis is performed averaged over the random choice of the codebooks. We first verify that in each of the three subphases the probability of decoding error tends to 0 as $n\to \infty$. Only weak receivers decode during the first subphase. Probability of decoding error vanishes asymptotically as $n\to\infty$, because 
\begin{equation}
\frac{\binom{\Kw}{t+1}{\binom{\Kw}{t}}^{-1}R^{(B)}}{(1-\delta_w)} = \frac{\frac{\Kw-t}{t+1}	R^{(B)}}{(1-\delta_w)} < \beta_1.
\end{equation}  
Only strong receivers decode during the  third subphase. Probability of decoding error in Subphase 3 vanishes asymptotically, because 
\begin{equation}
\frac{\Ks  R^{(A)}}{(\delta_z-\delta_s)} < \beta_3.
\end{equation}
Probability of decoding error at the weak receivers in Subphase 2 vanishes asymptotically, because
\begin{equation}
\frac{\frac{\Kw-t+1}{t}R^{(A)}}{(1-\delta_w)} < \beta_2.
\end{equation}
Probability of decoding error at the strong receivers in Subphase 2 vanishes asymptotically, because
\begin{equation} 
\frac{\frac{\Kw-t+1}{t}R^{(A)}+\Ks R^{(B)} + {\Kw \choose t} R_{\textnormal{bin}}}{(1-\delta_s)} < \beta_2.
\end{equation} 
Whenever the decodings in all subphases are successful, all receivers  correctly guess their desired messages. Since the decoding error for each subphase vanishes asymptotically, we conclude that also the average overall probability of error vanishes.

We  verify the secrecy constraint averaged over the choice of the code construction. For fixed blocklength~$n$:
\begin{align}
\frac{1}{n} I \left(  W_1, \ldots, W_\Da; Z^n|\C \right) & = \frac{1}{n} I\big( W_{d_1}^{(B)}, \ldots, W_{d_{\Kw}}^{(B)}; Z_{1}^{\beta_1 n} \big|\C\big) + \frac{1}{n} I\big( W_{d_1}^{(A)}, \ldots, W_{d_{\Kw}}^{(A)},W_{d_{\Kw+1}}^{(B)}, \ldots, W_{d_\Ka}^{(B)}; Z_{\beta_1n+1}^{(\beta_1+\beta_2) n}\big|\C\big)\nonumber\\
& \qquad\qquad + \frac{1}{n} I\big( W_{d_{\Kw+1}}^{(A)}, \ldots, W_{d_\Ka}^{(A)}; Z_{(\beta_1+\beta_2)n+1}^{n}\big|\C \big),
\end{align} 
because of the independence of the communications in the three subphases. By Lemma~\ref{lem:secure}, all of the three summands can be bounded by $\epsilon/3$ for sufficiently large $n$. This holds in particular for the  second summand because from the eavesdropper's point of view, the structure of the piggyback codebook is meaningless and for each message tuple $\big({W}_{\sec,G_\ell^{(t)}}^{(A)} ; \ \mathbf{W}_{s,G_\ell^{(t)}}^{(B)}\big)$, the codeword to transmit is chosen uniformly at random (depending on $B$ and $K_{G_\ell^{(t)}}$) over a set of $2^{n_2 r_s}$ i.i.d. random codewords where $r_s= R_{\key,2}+R_{\textnormal{bin}} = \min\{1-\delta_z, 1-\delta_s\}$. 

Since averaged over the random codebooks, the probability of decoding error and the average mutual information $\frac{1}{n} I \left(  W_1, \ldots, W_\Da; Z^n|\C \right)$ both vanish as $n\to \infty$, there must exist at least one choice of the codebooks so that for this choice $\p_e^{\Worst}$ and  $\frac{1}{n} I \left(  W_1, \ldots, W_\Da; Z^n\right)$ both vanish asymptotically. 
\bigskip

Notice that for each $t\in\{1,\ldots, \Kw-1\}$, the rate of communication satisfies
\begin{equation}
R=R^{(A)}+R^{{(B)}} =R^{(t+2)}-\epsilon,
\end{equation}
and weak receivers require a cache memory of 
\begin{equation}
\Mw= \Da \frac{(t-1)}{\Kw}R^{(A)} + \Da \frac{t}{\Kw} R^{(B)} + {\Kw-1 \choose t} R_{\key,1} + {\Kw-1 \choose t-1} R_{\key,2}= \Ma^{(t+2)} - \Da \cdot \frac{t-1/2}{\Kw}\epsilon.
\end{equation}
Letting $\epsilon \to 0$ thus concludes the achievability proof of rate-memory pairs $(R^{(t+2)}, \Mw^{(t+2)})$, for $t=1,\ldots, \Kw-1$, as defined in \eqref{eq:R3}--\eqref{eq:M3}.

\bigskip

\subsection{Secure Cache-Aided Piggyback Coding II}~~ \label{sec:piggyback_analysisII}
Let 
\begin{equation}
\beta = \frac{\Kw(\delta_z-\delta_s)}{\Ks\min\{ 1-\delta_z,1-\delta_w\}+\Kw(\delta_z-\delta_s)}
\end{equation}
Subphase~1 is of length $\beta n$ and Subphase~2 of length $(1-\beta) n$. 

Submessages $\{W_d^{(A)}\}$ are of rate
\begin{equation}
R^{(A)} = \frac{(\delta_z-\delta_s)\min\{ 1-\delta_z,1-\delta_w\}}{\Ks\min\{ 1-\delta_z,1-\delta_w\}+\Kw(\delta_z-\delta_s)} -\epsilon/2 ,
\end{equation}
and submessages $\{W_d^{(B)}\}$ are of rate
\begin{equation}
R^{(B)} = \frac{\Kw(\delta_z-\delta_s)^2}{\Ks[\Ks \min\{ 1-\delta_z,1-\delta_w\} +\Kw(\delta_z-\delta_s)]} -\epsilon/2.
\end{equation}
The key rate is chosen as 
\begin{equation}
R_{\key} =R^{(A)}.
\end{equation}

We analyze the scheme averaged over the random choice of the codebooks $\C$. 
Probability of decoding error at the weak receivers in Subphase 1 vanishes asymptotically as $n\to \infty$, because
\begin{equation}
\frac{\Kw R^{(A)}}{(1-\delta_w)} < \beta.
\end{equation}
Probability of decoding error at the strong receivers in Subphase 1 vanishes asymptotically, because
\begin{equation} 
\frac{\Kw R^{(A)}+\Ks R^{(B)} }{(1-\delta_s)} < \beta.
\end{equation} 
Probability of decoding error in Subphase 2 vanishes asymptotically, because 
\begin{equation}
\frac{\Ks  R^{(A)}}{(\delta_z-\delta_s)} < (1-\beta).
\end{equation}
As a consequence, also the overall probability of error (averaged over the random choice of the codebooks $\mathcal{C}$) vanishes as $n \to \infty$. Following a similar secrecy analysis as in the previous Subsection~\ref{sec:piggyback_analysis}, it can be shown that also the averaged mutual information $\frac{1}{n}I(W_1,\ldots, W_{\Da}; Z^n|\mathcal{C})$  vanishes as $n\to \infty$. This implies that there must exist at least one choice of the codebooks so that for this choice $\p_e^{\Worst}\to 0$ and $\frac{1}{n}I(W_1,\ldots, W_{\Da}; Z^n)\to 0$ as $n\to \infty$. 

Notice now that 
\begin{equation}
R=R^{(A)}+R^{(B)} = \frac{\delta_z-\delta_s}{\Ks} -\epsilon.
\end{equation}
Moreover, the required cache size at each weak receiver is
\begin{IEEEeqnarray}{rCl}
\Mw= \Da R^{(B)} + R_{\key} = \Mw^{(\Kw+2)}- \Da \epsilon/2.
\end{IEEEeqnarray}
Taking $\epsilon \to 0$, this concludes the proof of  achievability of the rate-memory pair $R^{(\Kw+2)}, \Mw^{(\Kw+2)}$ in \eqref{eq:R4} and \eqref{eq:M4}.

\section{Coding Schemes when All Receivers have Cache Memories} \label{sec:proofLB_Aissgn}

\subsection{Cached Keys}~~ \label{sec:allkeys_analysis}

Fix a small $\epsilon>0$ and let 
\begin{equation} \label{eq:def_beta}
\beta=\frac{\Kw(1-\delta_s)}{\Kw(1-\delta_s)+\Ks (1-\delta_w)},
\end{equation}
and the message rate be 
\begin{equation}
R = \frac{\beta(1-\delta_w)}{\Kw} -\epsilon.
\end{equation}
Notice that by the choice of the parameter $\beta$, also
\begin{equation}
R = \frac{(1-\beta)(1-\delta_s)}{\Ks}-\epsilon.
\end{equation}
Choose  key rates
\begin{IEEEeqnarray}{rCl}
	R_{\key,w}  &=& \frac{\beta}{\Kw} \cdot  \min \big\{ 1-\delta_z,1-\delta_w\big\},\label{eq:Mwcd1} \\
	R_{\key,s}   & =& \frac{1-\beta}{\Ks } \cdot  \min \big\{ 1-\delta_z,1-\delta_s\big\},\label{eq:Mwcd2}
\end{IEEEeqnarray}
and let secret keys $K_1,\ldots, K_{\Kw}$ be of rate $R_{\key,w}$ and secret keys $K_{\Kw+1},\ldots, K_{\Ka}$ be of rate $R_{\key,s}$.

The cache requirement at weak receivers is 
\begin{equation}
\Mw= R_{\key,w} = \frac{(1-\delta_s)}{\Kw(1-\delta_s)+\Ks (1-\delta_w)} \min \big\{ 1-\delta_z,1-\delta_w\big\}
\end{equation}
and the
cache requirement at strong receivers is 
\begin{equation}
\Ms= R_{\key,s} = \frac{(1-\delta_w)}{\Kw(1-\delta_s)+\Ks (1-\delta_w)} \min \big\{ 1-\delta_z,1-\delta_s\big\}
\end{equation}

Analysis of the probability of error and of the secrecy constraint \eqref{eq:jointSecrecy} are standard and omitted. Letting  $\epsilon \to 0$ proves achievability of  the rate-memory par $(\tilde{R}^{(1)}, \tilde{\Ma}_w^{(1)}, \tilde{\Ma}_s^{(1)})$.

\bigskip
\subsection{Secure Cache-Aided Piggyback Coding with Keys at All Receivers}\label{sec:piggyback_allkeys_analysis}

The scheme follows closely the secure cache-aided piggyback coding I in Subsection~\ref{sec:piggyback_analysis}, but for a different choice of rates and time-sharings and with additional secret keys for Subphases 2 and 3 which render wiretap binning useless. 

The time-sharing parameters used here are :
\begin{subequations}\label{eq:betasa}
	\begin{IEEEeqnarray}{rCl}
		\beta_1& = &\frac{(\Kw-t+1)(\Kw-t)(1-\delta_s)(\delta_w-\delta_s)}{(\Kw-t+1)(1-\delta_s)\big[(\Kw-t)(\delta_w-\delta_s) + \Ks (t+1)(1-\delta_w)\big] + \Ks ^2t(t+1)(1-\delta_w)^2}.  \\
		\beta_2& = & \frac{\Ks(\Kw-t+1) (t+1)(1-\delta_w)(1-\delta_s)}{(\Kw-t+1)(1-\delta_s)\big[(\Kw-t)(\delta_w-\delta_s) + \Ks (t+1)(1-\delta_w)\big] + \Ks ^2t(t+1)(1-\delta_w)^2} \\
		\beta_3& = &  \frac{\Ks^2 t(t+1)(1-\delta_w)^2}{(\Kw-t+1)(1-\delta_s)\big[(\Kw-t)(\delta_w-\delta_s) + \Ks (t+1)(1-\delta_w)\big] + \Ks ^2t(t+1)(1-\delta_w)^2}.
	\end{IEEEeqnarray}
\end{subequations}
and the rates are 
\begin{equation} 
R^{(A)} = \frac{\Ks t(t+1)(1-\delta_w)^2(1-\delta_s)}{(\Kw-t+1)(1-\delta_s)\big[(\Kw-t)(\delta_w-\delta_s) + \Ks (t+1)(1-\delta_w)\big] + \Ks ^2t(t+1)(1-\delta_w)^2}-\epsilon/2
\end{equation} 
and
\begin{equation}
R^{(B)} = \frac{(\Kw-t+1)(t+1)(1-\delta_w)(1-\delta_s)(\delta_w-\delta_s)}{(\Kw-t+1)(1-\delta_s)\big[(\Kw-t)(\delta_w-\delta_s) + \Ks (t+1)(1-\delta_w)\big] + \Ks ^2t(t+1)(1-\delta_w)^2}-\epsilon/2.
\end{equation} 

The following keys are generated:
\begin{itemize}
	\item For each $\ell \in \left\lbrace 1,\ldots,\binom{\Kw}{t+1} \right\rbrace$, generate an independent secret key $K_{G_\ell^{(t+1)}}$ of rate 
	\begin{equation} 
	R_{\key,1} = {\binom{\Kw}{t+1}}^{-1} \cdot \beta_1 \cdot \min \left\lbrace 1-\delta_z,1-\delta_w \right\rbrace.
	\end{equation} 
	\item For each $\ell \in \left\lbrace 1,\ldots,\binom{\Kw}{t} \right\rbrace$, generate an independent secret key $K_{G_\ell^{(t)}}$ of rate
	\begin{equation}
	R_{\key,2} = {\binom{\Kw}{t}}^{-1} \cdot \beta_2 \cdot \min \left\lbrace 1-\delta_z,1-\delta_w \right\rbrace
	\end{equation}
	and  $\Ks$ independent secret  keys $K_{\Kw+1,G_\ell^{(t)}}, \ldots, K_{\Ka,G_\ell^{(t)}}$ of rate 
	\begin{equation}
	R_{\key,3} = {\binom{\Kw}{t}}^{-1} \cdot \frac{\beta_2}{\Ks } \cdot \min \left\lbrace \max \left\lbrace 0,\delta_w-\delta_z \right\rbrace, \delta_w-\delta_s \right\rbrace.
	\end{equation}
	\item For $j \in \K_s $, generate an independent secret key $K_{j}$ of rate 
	\begin{equation} \label{eq:key4}
	R_{\key,4} = \frac{\beta_3}{\Ks } \cdot \min \left\lbrace 1-\delta_z,1-\delta_s \right\rbrace.
	\end{equation}
\end{itemize}
\bigskip

\underline{\textit{Placement phase:}}
For each weak receiver $i \in \K_w$, cache  
\begin{align}\label{eq:cacheWeak1}
V_i & = \left\lbrace W_{d,G_\ell^{(t-1)}}^{(A)} \colon d \in \D \mbox{ and } i \in G_\ell^{(t-1)}  \right\rbrace ~\bigcup~ \left\lbrace W_{d,G_\ell^{(t)}}^{(B)} \colon d \in \D \mbox{ and } i \in G_\ell^{(t)}  \right\rbrace ~\bigcup~ \left\lbrace K_{G_\ell^{(t+1)}} \colon i \in G_\ell^{(t+1)}   \right\rbrace \nonumber \\		
& \qquad\qquad   \bigcup~ \left\lbrace K_{G_\ell^{(t)}} \colon i \in G_\ell^{(t)} \right\rbrace ~\bigcup~ \left\lbrace K_{j,G_\ell^{(t)}} \colon i \in G_\ell^{(t)} \mbox{ and } j \in \K_s  \right\rbrace.
\end{align}

For each strong receiver $j \in \K_s $, cache 
\begin{equation}\label{eq:cacheStrong1}
V_j = K_j  ~\bigcup~ \left\lbrace K_{j,G_\ell^{(t)}} \colon \ell \in \left\lbrace 1,\ldots,\binom{\Kw}{t} \right\rbrace \right\rbrace.
\end{equation}

\bigskip

\underline{\textit{Delivery phase:}} Is divided into three subphases of lengths $n_1=\beta_1n$, $n_2=\beta_2n$, and $n_3=\beta_3n$, where $\beta_1,\beta_2,$ and $\beta_3$ are defined in \eqref{eq:betasa}. \bigskip

\textit{Subphase 1:} Is as described in Subsection~\ref{sec:piggyback_analysis}. \bigskip

\textit{Subphase 2:}  Similar to Subsection~\ref{sec:piggyback_analysis}, but with extra keys. As in 
 Subsection~\ref{sec:piggyback_analysis}, this subphase is split into ${\Kw \choose t}$ equally-long periods that we label by all $t$-user subsets of $\K_w$ i.e., by  $G_1^{(t)},\ldots, G_{{\Kw \choose t}}^{(t)}$. For the transmission in Period $G_{\ell}^{(t)}$, for $\ell\in\{1,\ldots, {\Kw \choose t}\}$,  a standard piggyback codebook  
 \begin{equation}
\C_{\textnormal{pg},G_\ell^{(t)}} =\Big\{ x_{G_\ell^{(t)}}^{n_2}\big( \ell_{\textnormal{row}} ; \ \ell_{\textnormal{col}}\big)\colon \;  \ell_{\textnormal{row}}  \in \big\{1,\ldots, \big\lfloor 2^{nr^{(A)}}\big\rfloor \big\},\ell_{\textnormal{col}}  \in \big\{1,\ldots, \big\lfloor 2^{nr^{(B)}} \big\rfloor\big\} \Big\},
\end{equation} 
is generated by drawing all entries i.i.d.  Bernoulli-$1/2$. The codebooks are revealed to the transmitter and all receivers, and the  rates $r^{(A)}$ and $r^{(B)}$ are chosen as 
\begin{IEEEeqnarray}{rCl}
	r^{(A)} &=&R^{(A)} {\Kw \choose t-1}^{-1}\\
	r^{(B)}&=&R^{(B)} {\Kw \choose t}^{-1}.
	\end{IEEEeqnarray}
	
At the beginning of Period $G_{\ell}^{(t)}$, the transmitter computes
	\begin{equation} 
	W_{\sec,w,G_\ell^{(t)}}^{(A)} = \enc{\bigoplus\limits_{i \in G_\ell^{(t)}} W_{d_i,G_\ell^{(t)}\setminus \{i\}}^{(A)}}{\ K_{G_\ell^{(t)}}}.
	\end{equation} 
and 	\begin{equation} 
\textbf{W}_{\sec,s,G_\ell^{(t)}}^{(B)}:=\Big(	\textsf{sec}\big(W_{d_{\Kw+1},G_\ell^{(t)}}^{(B)}, \ K_{{\Kw+1},G_\ell^{(t)} }\big),\;  \ldots, \; \textsf{sec}\big(W_{d_{\Ka},G_\ell^{(t)}}^{(B)}, 
	\ K_{{\Ka},G_\ell^{(t)}}\big)\Big),
	\end{equation} 
	and sends the codeword
\begin{equation}
x_{G_\ell^{(t)}}^{n_2}\big(W_{\sec,w,G_\ell^{(t)}}^{(A)} ; \ \mathbf{W}_{\sec,s,G_\ell^{(t)}}^{(B)} \ \big)
\end{equation} 
over the channel. Decoding of Period $G_\ell^{(t)}$ is performed as follows.
All weak receivers $i\in G_\ell^{(t)}$ compute the message tuple to the strong receivers $\mathbf{W}_{\sec,s,G_\ell^{(t)}}^{(B)}$ and perform their decoding steps as described in Subsection~\ref{sec:piggyback_analysis}. 
Strong receivers decode both secured message tuples transmitted in this period. Any given strong receiver~$j\in\K_s$ can then recover its desired message
\begin{equation}
W_{d_j,G_{\ell}^{(t)}}^{(B)} 
\end{equation}
using the secret key $K_{j, G_{\ell}^{(t)}}$ 
stored in its cache memory.

At the end of the subphase, each weak receiver $i\in \K_w$ assembles 
the parts of submessage $W_{d_i}^{(A)}$ that it decoded or it has stored in its cache memory, and forms the guess $\hat{W}_{d_i}^{(A)}$. Similarly, each strong receiver $j\in\K_s$ assembles the  decoded parts and  forms $\hat{W}_{d_j}^{(B)}$.
\bigskip

\textit{Subphase 3:} A capacity-achieving code is used to send the secured message tuple
\begin{equation}
\textsf{sec}\big(W_{d_{\Kw+1}}^{(A)}, K_{\Kw+1}\big),\ldots, \textsf{sec}\big(W_{d_{\Ka}}^{(A)},  K_{\Ka}\big)
\end{equation}  to the strong receivers~$\Kw+1,\ldots, \Ka$. Each strong receiver $j\in\K_s$ obtains its desired guess $\hat{W}_{d_j}^{(A)}$ with the help of the secret key $K_j$ stored in its cache memory.\bigskip

At the end of the  entire delivery phase, each Receiver~$k\in\K$ assembles its guesses of $W_{d_k}^{(A)}$ and $W_{d_k}^{(B)}$ to produce the final guess $\hat{W}_{k}$.
\bigskip

\underline{\textit{Analysis:}} The scheme is analyzed by averaging over the random code constructions $\C$. We first verify that in each of the three subphases the average probability of decoding error tends to 0 as $n\to \infty$. Only weak receivers decode during the first subphase. Probability of decoding error in Subphase 1 vanishes asymptotically as $n \to \infty$, because 
\begin{equation}
\frac{\binom{\Kw}{t+1}{\binom{\Kw}{t}}^{-1}R^{(B)}}{(1-\delta_w)} = \frac{\frac{\Kw-t}{t+1}	R^{(B)}}{(1-\delta_w)} < \beta_1.
\end{equation}  
Only strong receivers decode during the  third subphase. Probability of decoding error in Subphase 3 vanishes asymptotically, because 
\begin{equation}
\frac{\Ks  R^{(A)}}{(1-\delta_s)} < \beta_3.
\end{equation}
In Subphase~2, probability of decoding error at weak receivers vanishes asymptotically, because
\begin{equation}
\frac{\frac{\Kw-t+1}{t}R^{(A)}}{(1-\delta_w)} < \beta_2.
\end{equation}
Probability of decoding error at strong receivers in Subphase 2 vanishes asymptotically, because
\begin{equation} 
\frac{\frac{\Kw-t+1}{t}R^{(A)}+\Ks R^{(B)}}{(1-\delta_s)} < \beta_2.
\end{equation} 
Whenever the decodings in all subphases are successful, all receivers  guess their desired messages correctly. As a consequence, also the average of the overall probability of error  vanishes as $n\to \infty$.

We  analyze the secrecy constraint. For fixed blocklength $n$:
\begin{align}
\frac{1}{n} I \big(  W_1, \ldots, W_\Da; Z^n \big |\mathcal{C} \big) & = \frac{1}{n} I\big( W_{d_1}^{(B)}, \ldots, W_{d_{\Kw}}^{(B)}; Z_{1}^{\beta_1 n}  \big |\mathcal{C} \big) + \frac{1}{n} I \big( W_{d_1}^{(A)}, \ldots, W_{d_{\Kw}}^{(A)},W_{d_{\Kw+1}}^{(B)}, \ldots, W_{d_\Ka}^{(B)}; Z_{\beta_1n+1}^{(\beta_1+\beta_2) n} \big |\mathcal{C} \big)\nonumber\\
& \qquad\qquad + \frac{1}{n} I \big( W_{d_{\Kw+1}}^{(A)}, \ldots, W_{d_\Ka}^{(A)}; Z_{(\beta_1+\beta_2)n+1}^{n}  \big |\mathcal{C} \big),
\end{align} 
because of the independence of the communications in the three subphases.  The sizes of the secret keys have been chosen so that the codewords to be sent in each of the three subphases  are chosen uniformly at random over a subset of the random codebooks that are  equal to the minimum between $(1-\delta_z)$ and the rates of communication. By  Lemma~\ref{lem:secure} this proves that $\frac{1}{n} I \big(  W_1, \ldots, W_\Da; Z^n \big |\mathcal{C} \big)$ tends to $0$ as $n\to \infty$. 
\medskip

We can conclude that there must exist at least one choice of the codebooks so that for this choice both $\p_e^{\Worst} \to 0$ and $\frac{1}{n} I \big(  W_1, \ldots, W_\Da; Z^n  \big) $ vanish asymptotically. 

For each choice of the parameter $t\in\{1,\ldots, \Kw-1\}$:
\begin{equation}
R=R^{(A)}+R^{(B)}=\tilde{R}^{(t+1)}-\epsilon.
\end{equation}
Moreover, each weak receiver requires a cache size of
\begin{align} \label{eq:pair4_Mw}
\Mw & = \frac{\Da\big[tR^{(A)} +(t-1)R^{(B)}\big] }{\Kw} + {\Kw-1 \choose t}R_{\key,1} + {\Kw -1 \choose t-1} R_{\key,2} + {\Kw -1 \choose t-1} \Ks R_{\key,3}=
\tilde{\Ma}_w^{(t+1)} - \frac{ \Da (t-1/2)}{\Kw} \epsilon,
\end{align}
and each strong receiver a cache size of 
\begin{equation} \label{eq:pair4_Ms}
\Ms = R_{\key,4} + {\Kw \choose t} R_{\key,3} 
= \tilde{\Ma}_s^{(t+1)}.
\end{equation}
Taking $\epsilon \to 0$ thus proves achievability of the rate-memory triples $( \tilde{R}^{(t+1)},\tilde{\Ma}_w^{(t+1)},\tilde{\Ma}_s^{(t+1)})$, for $t\in\{1,\ldots, \Kw-1\}$, in \eqref{eq:R3_Assign}--\eqref{eq:Ms3_Assign}.

\bigskip

	\subsection{Symmetric Secure Cache-Aided Piggyback Coding} ~~ \label{sec:piggyback_symmetric_analysis}
		
	\indent Fix $\epsilon>0$ and define the time-sharing parameters 	\begin{subequations}
		\begin{IEEEeqnarray}{rCl}
			\beta_1 &:= &\frac{\Kw(\Kw-t_w)(t_s+1)(1-\delta_s)^2}{\Kw(\Kw-t_w)(t_s+1)(1-\delta_s)^2 + \Ks (t_w+1)(1-\delta_w)\big[ (\Ks -t_s)(1-\delta_w) + \Kw(t_s+1)(1-\delta_s) \big]} \\
			\beta_2 & :=&\frac{\Kw\Ks(t_w+1)(t_s+1)(1-\delta_w)(1-\delta_s)}{\Kw(\Kw-t_w)(t_s+1)(1-\delta_s)^2 + \Ks (t_w+1)(1-\delta_w)\big[ (\Ks -t_s)(1-\delta_w) + \Kw(t_s+1)(1-\delta_s) \big]}\\
			\beta_3 & :=& \frac{\Ks (\Ks-t_s)(t_w+1)(1-\delta_w)^2}{\Kw(\Kw-t_w)(t_s+1)(1-\delta_s)^2 + \Ks (t_w+1)(1-\delta_w)\big[ (\Ks -t_s)(1-\delta_w) + \Kw(t_s+1)(1-\delta_s) \big]}. 
		\end{IEEEeqnarray}
	\end{subequations}
	Notice that $\beta_1+\beta_2+\beta_3=1$. 
	
	Choose two positive integers $t_w \in \{1,\ldots,\Kw\}$ and $t_s \in \{1,\ldots,\Ks \}$.
		
	\textit{Message splitting:} Divide each message into two submessages,
	\begin{equation}
	W_d = \big[ W_d^{(A)}, W_d^{(B)}\big], \qquad d \in \D,
	\end{equation}
	so that the submessages are of rates
	\begin{equation}
	R^{(A)} =  \frac{\Kw(t_w+1)(t_s+1)(1-\delta_w)(1-\delta_s)^2}{\Kw(\Kw-t_w)(t_s+1)(1-\delta_s)^2 + \Ks (t_w+1)(1-\delta_w)\big[ (\Ks -t_s)(1-\delta_w) + \Kw(t_s+1)(1-\delta_s) \big]} -\epsilon/2
	\end{equation}
	and 
		\begin{equation} \label{eq:Rtwts}
	R^{(B)} = \frac{\Ks (t_w+1)(t_s+1)(1-\delta_w)^2(1-\delta_s)}{\Kw(\Kw-t_w)(t_s+1)(1-\delta_s)^2 + \Ks (t_w+1)(1-\delta_w)\big[ (\Ks -t_s)(1-\delta_w) + \Kw(t_s+1)(1-\delta_s) \big]}-\epsilon/2.
	\end{equation}
	\vspace{2mm}
	
	Denote the $\binom{\Kw}{t_w}$ subsets of $\{1,\ldots,\Kw\}$ of size $t_w$ by $G_1^{(t_w)}, \ldots, G_{\binom{\Kw}{t_w}}^{(t_w)}$ and the $\binom{\Ks }{t_s}$ subsets of $\{1,\ldots,\Ks \}$ of size $t_s$ by $G_1^{(t_s)}, \ldots, G_{\binom{\Ks }{t_s}}^{(t_s)}$.
	Divide every message $W_d^{(A)}$ into $\binom{\Kw}{t_w}$ submessages and every message $W_d^{(B)}$ into $\binom{\Ks }{t_s}$ submessages:
	\begin{subequations}
		\begin{align}
		W_d^{(A)} & = \left\lbrace W_{d,G_\ell^{(t_w)}}^{(A)} : \quad \ell \in \left\lbrace 1,\ldots,\binom{\Kw}{t_w} \right\rbrace  \right\rbrace, \\
		W_d^{(B)} &  = \left\lbrace W_{d,G_\ell^{(t_s)}}^{(B)} : \quad \ell \in \left\lbrace 1,\ldots,\binom{\Ks }{t_s} \right\rbrace  \right\rbrace.
		\end{align}
	\end{subequations}
	
	\textit{Key generation:}
	\begin{itemize}
		\item For each $\ell \in \left\lbrace 1,\ldots,\binom{\Kw}{t_w+1} \right\rbrace$, generate an independent random key $K_{G_\ell^{(t_w+1)}}$ of rate 
		$$R_{\key,1} = {\binom{\Kw}{t_w+1}}^{-1} \cdot \beta_1 \min \left\lbrace 1-\delta_z,1-\delta_w \right\rbrace.$$
		\item For each $\ell \in \left\lbrace 1,\ldots,\binom{\Ks }{t_s+1} \right\rbrace$, generate an independent random key $K_{G_\ell^{(t_s+1)}}$ of rate 
		$$R_{\key,2} = {\binom{\Ks }{t_s+1}}^{-1} \cdot \beta_3 \min \left\lbrace 1-\delta_z,1-\delta_s \right\rbrace.$$ 
		\item  For each $i \in \K_w$ and $j \in \K_s $, generate  an independent random key $K_{w,\{i,j\}}$ of rate
		$$R_{\key,3} = \frac{\beta_2}{\Kw \Ks } \min \left\lbrace 1-\delta_z,1-\delta_w\right\rbrace,$$
		and  an independent random key $K_{s,\{i,j\}}$ of rate
		$$ R_{\key,4} = \frac{\beta_2}{\Kw \Ks } \min \left\lbrace (\delta_w-\delta_z)^+,1-\delta_s\right\rbrace.$$ 
	\end{itemize} \bigskip
	
	\underline{\textit{Placement phase:}}
	Place the cache  contents as shown in the following table. 
	\begin{figure}[h!] \centering
		\begin{tikzpicture}
		\node at (2,4.4) {Cache at weak receiver~$i$};
		\draw[rounded corners=7pt,thick] (0,0) rectangle (4,4);
		\node at (2,3.3) {$\left\lbrace \left\lbrace W_{d,G_\ell^{(t_w)}}^{(A)} \right\rbrace_{i \in G_\ell^{(t_w)}} \right\rbrace_{\!d=1}^{\!\Da}$};	
		\node at (2,2) {$\left\lbrace K_{G_\ell^{(t_w+1)}} \right\rbrace_{i \in G_\ell^{(t_w+1)}}$};	
		\node at (2,1.1) {$  K_{w,\{i,\Kw+1\}}, \ldots K_{w,\{i,\Ka\}}$};	
		\node at (2,0.4) {$  K_{s,\{i,\Kw+1\}}, \ldots K_{s,\{i,\Ka\}}$};	
		
		\node at (8,4.4) {Cache at strong receiver~$j$};
		\draw[rounded corners=7pt,thick] (6,0) rectangle (10,4);
		\node at (8,3.3) {$\left\lbrace \left\lbrace W_{d,G_\ell^{(t_s)}}^{(B)} \right\rbrace_{j \in G_\ell^{(t_s)}} \right\rbrace_{\!d=1}^{\!\Da}$};	
		\node at (8,2) {$\left\lbrace K_{G_\ell^{(t_s+1)}} \right\rbrace_{j \in G_\ell^{(t_s+1)}}$};	
		\node at (8,1.1) {$  K_{w,\{1,j\}}, \ldots K_{w,\{\Kw,j\}}$};	
		\node at (8,0.4) {$  K_{s,\{1,j\}}, \ldots K_{s,\{\Kw,j\}}$};	
		
		\end{tikzpicture} 
	\end{figure}

	\bigskip
	
\underline{\textit{Delivery phase:}} The delivery phase is divided into three subphases of lengths $\beta_1n$, $\beta_2n$ and $\beta_3n$.\\

\textit{Subphase 1:} This phase conveys to each weak receiver $i \in \K_w$, the parts of $W_{d_i}^{(A)}$ that are not stored in its cache memory.

Time-sharing   is applied  over $\binom{\Kw}{t_w+1}$ equally-long periods of length $n_1=\beta_1 n {\Kw \choose t_w+1}^{-1}$. Label the periods $G_1^{(t+1)}, \ldots, G_{{\Kw \choose t+1}}^{(t+1)}$. 

 In Period $G_\ell^{(t+1)}$, for  $\ell \in \big\lbrace 1, \ldots,\binom{\Kw}{t_w+1} \big\rbrace$, the secured XOR-message
		\begin{equation}\label{eq:secXOR}
		\textsf{sec}\bigg( 
		\bigoplus\limits_{i \in G_\ell^{(t_w+1)}} W_{d_i,G_\ell^{(t_w+1)}\setminus \{i\}}^{(A)},\; K_{G_\ell^{(t_w+1)}}\bigg)
		\end{equation} 
		is sent to 
		 the subset of receivers $G_\ell^{(t_w+1)}$. Each  receiver $i\in G_\ell^{(t_w+1)}$ retrieves the content
		 \begin{equation}\label{eq:cach}
		 K_{G_\ell^{(t_w+1)}}  \; \textnormal{and} \; \Big\{W_{d_k,G_\ell^{(t_w+1)}\backslash\{k\}}^{(A)} \Big\}_{k\in G_\ell^{(t_w+1)}\backslash\{i\}}
		 \end{equation}
		 from its cache memory. It then decodes the secured message in \eqref{eq:secXOR}, and with the retrieved cache content \eqref{eq:cach} it   
		  recovers the desired message $W_{d_i,G_\ell^{(t_w+1)}\backslash\{i\}}^{(A)}$.

		\bigskip
		
		 \textit{Subphase 2:} Submessages $W_{d_1}^{(B)}, \ldots, W_{d_{\Kw}}^{(B)}$ are sent to weak receivers~$1, \ldots, \Kw$ and submessages $W_{d_{\Kw+1}}^{(A)}, \ldots, W_{d_{\Ka}}^{(A)}$ to strong receivers~$\Kw+1,\ldots,\Ka$.
		 
		 Time-sharing is applied over $\Kw\cdot \Ks$ periods, each of length $n_2=\frac{\beta_2 n}{\Kw\Ks}$. 
		 The periods are labeled $\{i,j\}$, for $i\in\K_w$ and $j\in\K_s$. Divide each submessage $W_{d}^{(B)}$ into $\Ks $ parts 
		 \begin{subequations}
		 	\begin{equation}
		 	W_{d}^{(B)} = \big( W_{d,\Kw+1}^{(B)}, \ldots,W_{d,\Ka}^{(B)} \big)
		 	\end{equation}
		 so that each part is of  equal rate $r^{(B)}=R^{(B)}/\Ks$  and  for all $j\in\K_s$ and $d\in\D$, part $W_{d,j}^{(B)}$ is stored in strong receiver~$j$'s cache memory. 
		 	Similarly, divide each submessage $W_{d}^{(A)}$ into $\Kw$ parts 
		 	\begin{equation}
		 	W_{d}^{(A)}  =\big( W_{d,1}^{(A)}, \ldots,W_{d,\Kw}^{(A)} \big), 
		 	\end{equation}
		 	of equal rate  $r^{(A)}=R^{(A)}/\Kw$ so that for each   $i\in\K_w$ and $d\in\D$, part $W_{d,i}^{(A)}$ is stored in weak receiver~$i$'s cache memory. 
		 \end{subequations}

	 We describe the encoding and decoding operations in a period $\{i,j\}$, for $i\in\Kw$ and $j\in\Ks$. In this period, the transmitter sends the codeword
		 \begin{equation}
		 x_{\{i,j\}}^{n_2}\Big( 	\textsf{sec}\big(W_{d_i,j}^{(B)}, K_{w,\{i,j\}} \big) ; \ 	\textsf{sec}\big(W_{d_j,i}^{(A)}, K_{s,\{i,j\}} \big)\Big)
		 \end{equation} 
		 over the channel. At the end of the period, Receiver~$i$ first retrieves  the cache content 
		 \begin{equation}
	K_{s,\{i,j\}}, \ W_{d_j,i}^{(A)},
		 \end{equation}
 and   forms the  subcodebook 
		 	\begin{equation}
		 	\C_{\textnormal{pg},\{i,j\}}^{\textnormal{row}}\big(	\textsf{sec}\big(W_{d_j,i}^{(A)}, K_{s,\{i,j\}} \big) \big) =\Big\{ x_{\{i,j\}}^{n_2}\big( \ell_{\textnormal{row}} ; \ \textsf{sec}\big(W_{d_j,i}^{(A)}, K_{s,\{i,j\}} \big) \big)\colon \; \ell_{\textnormal{row}}  \in \big\{1,\ldots, \big\lfloor 2^{nr^{(B)}} \big\rfloor\big\} \Big\}. 
		 	\end{equation} 
		 Based on this subcodebook, it  then decodes the secured message  $\textsf{sec}\big(W_{d_i,j}^{(B)}, K_{w,\{i,j\}} \big),$ and  recovers its desired message $W_{d_i,j}^{(B)}$ with the secret key $ K_{w,\{i,j\}}$ stored in its cache memory.
 Receiver~$j$ proceeds analogously, except that it retrieves the cache content
		 		 \begin{equation}
		 		 K_{w,\{i,j\}}, \ W_{d_i,j}^{(B)},
		 		 \end{equation}
		 		 and forms  the subcodebook 
		 	\begin{equation}
		 	\C_{\textnormal{pg},\{i,j\}}^{\textnormal{col}}\big( \textsf{sec}\big(W_{d_i,j}^{(B)}, K_{w,\{i,j\}}\big) \big) =\Big\{ x_{\{i,j\}}^{n_2}\big( \textsf{sec}\big(W_{d_i,j}^{(B)}, K_{w,\{i,j\}} \big) ;\  \ell_{\textnormal{col}}\big)\colon \; \ell_{\textnormal{col}}  \in \big\{1,\ldots, \big\lfloor 2^{nr^{(A)}} \big\rfloor\big\} \Big\}.
		 	\end{equation} 
		It then decodes the secured message  $\textsf{sec}\big(W_{d_j,i}^{(A)}, K_{s,\{i,j\}} \big),$ and  recovers $W_{d_j,i}^{(A)}$ using  the secret key $ K_{s,\{i,j\}}$ from its cache memory.

	\bigskip
		
		 \textit{Subphase 3:} 
		 This subphase conveys to each strong receiver $j \in \K_s$, the parts of  submessage $W_{d_j}^{(B)}$ that are not stored in its cache memory. Encoding/decoding operations are obtained from the encoding/decoding operations for Subphase~1 if in the latter the subscript $w$ is replaced by the subscript $s$ and the superscript $(A)$ is replaced by the superscript $(B)$.

\bigskip
	
\underline{	\textit{Analysis:}} We analyze the average (over the random choice of the codebooks) probability of decoding error and the average leakage $\frac{1}{n} I(W_1,\ldots, W_{\Da};Z^n| \mathcal{C})$. 

 We start by  showing that the probability of decoding error vanishes in each of the three subphases. Only weak receivers perform decoding operations in the first subphase. Probability of error in this subphase thus tends to 0 as $n\to \infty$, because 
	\begin{equation}  \label{eq:pair5_phase1}
	\frac{\frac{\Kw-t_w}{t_w+1}	R^{(A)}}{(1-\delta_w)} < \beta_1
	\end{equation}	
In the second subphase, both weak and strong receivers perform decoding operations. The probability of decoding error at weak receivers tends to 0 as $n\to\infty$, because 
	\begin{equation} \label{eq:21}
	 \frac{\Kw R^{(B)}}{(1-\delta_w)} < \beta_2
	\end{equation}
	The probability of decoding error at strong receivers tends to 0 as $n\to\infty$, because 
		\begin{equation} \label{eq:22}
	\frac{\Ks R^{(A)}}{(1-\delta_s)} < \beta_2.
		\end{equation}
			(Notice that by our choice of the subrates $R^{(A)}$ and $R^{(B)}$, the two decoding constraints of Subphase~$2$, \eqref{eq:21} and \eqref{eq:22} are equally strong.)
Only strong receivers perform decoding operations in the third subphase. The probability of error of these decoding operations tends to 0 as $n\to \infty$, because 
	\begin{equation}
	\frac{\frac{\Ks -t_s}{t_s+1} R^{(B)}}{(1-\delta_s)} < \beta_3
	\end{equation}
Since all receivers correctly recover their demanded messages when all decoding operations tend to 0, when averaged over the random code construction, the total probability of error tends to 0 as $n\to \infty$. 

Communication is secured because the secret keys have been chosen sufficiently long. In fact, following the arguments given in the previous Subsection~\ref{sec:piggyback_allkeys_analysis}, it can be shown that the average leakage term $\frac{1}{n}I(W_1,\ldots, W_\Da;Z^n|\mathcal{C})$ tends to 0 as $n\to\infty$. By standard arguments, it can then be concluded that there must exist a choice of the codebooks so that for this choice the probability of error $\p_e^{\Worst}$ and the leakage $\frac{1}{n}I(W_1,\ldots, W_\Da;Z^n)$ both vanish asymptotically as $n\to \infty$. 
	
	The presented scheme requires weak receivers to have a cache of size
	\begin{IEEEeqnarray}{rCl} \label{eq:pair5_Mw}
	\Mw  &= &\frac{\Da t_w}{\Kw} R^{(A)} + \frac{(t_w+1)\beta_1 \min \left\lbrace 1-\delta_z,1-\delta_w \right\rbrace}{\Kw} + \frac{\beta_2\min \left\lbrace1-\delta_z,2-\delta_w-\delta_s\right\rbrace}{\Kw} \nonumber \\
	&=& \tilde{\Ma}_w^{(\Kw+t_w (\Ks+1)+t_s)}-  \frac{\Da t_w}{2\Kw} \epsilon,
	\end{IEEEeqnarray}
	and strong receivers a cache of size 
	\begin{IEEEeqnarray}{rCl} \label{eq:pair5_Ms}
	\Ms   &=& \frac{\Da t_s}{\Ks } R^{(B)} + \frac{(t_s+1)\beta_3 \min \left\lbrace 1-\delta_z,1-\delta_s \right\rbrace}{\Ks } + \frac{\beta_2\min \left\lbrace1-\delta_z,2-\delta_w-\delta_s\right\rbrace}{\Ks }\nonumber \\
	&=&\tilde{\Ma}_s^{(\Kw+t_w (\Ks+1)+t_s)}- \frac{\Da t_s}{2\Ks }\epsilon.
	\end{IEEEeqnarray} 
	The rate of the messages is 
	\begin{equation}
	R=R^{(A)}+R^{(B)}= \tilde{R}^{(\Kw+t_w (\Ks+1)+t_s)} - \epsilon. 
	\end{equation} 
	
	Thus, letting $\epsilon \to 0$ establishes achievability of the desired rate-memory triples  in \eqref{eq:R4_Assign}--\eqref{eq:Ms4_Assign}:
	\begin{equation}
	\Big(\tilde{R}^{(\Kw+(t_w-1)\Ks +t_s)}, \ \tilde{\Ma}_w^{(\Kw+(t_w-1)\Ks +t_s)}, \ \tilde{\Ma}_s^{(\Kw+(t_w-1)\Ks +t_s)}\Big).
	\end{equation}

\section{Summary} \label{sec:concl}

We have studied secrecy of cache-aided wiretap erasure BCs with $\Kw$ weak receivers, $\Ks $ strong receivers and one eavesdropper. We have provided a general upper bound on the secrecy capacity-memory tradeoff for the case when receivers have arbitrary erasure probabilities and arbitrary cache sizes. We have also proposed lower bounds on the secrecy capacity-memory tradeoff for different cache sizes. For some cache arrangements, e.g., for zero cache sizes at strong receivers $\Ms  =0$, our upper and lower bounds coincide for small cache sizes. For $\Ms  =0$, they also match for large cache sizes. These bounds show that the secrecy constraint can induce a significant loss in capacity compared to the standard non-secure system, especially when $\Ms  =0$. They also exhibit that in a secure system, the caching gain with small cache memories is much more important than its non-secure counterpart. This is due to the fact that secret keys can be stored in the caches, which are more useful than cached data. For larger cache sizes, data has to be stored as well and the caching gains of the secure system are similar to the gains in a standard system. We also present a lower bound on the capacity of a scenario where the cache assignment across receivers can be optimized subject to a total cache budget $\Ma_{\tot}$. The lower bound is exact for small cache budgets $\Ma_{\tot}$.

\appendices

\section{Proof of Lemma~\ref{lemma:1}} \label{app:proofUB}
	
	For each blocklength $n$, we fix caching, encoding and decoding functions as in~\eqref{eq:cachingFct}, \eqref{eq:encodingFct} and \eqref{eq:decodingFct} so that both the probability of worst-case error and the secrecy leakage satisfy:
	\[
		\p_e^{\textnormal{Worst}} \xrightarrow[n \to \infty]{} 0 \qquad \text{and} \qquad \frac{1}{n}I(W_1, \ldots, W_D; Z^n) \xrightarrow[n \to \infty]{} 0.
	\]

	We only prove the lemma for the set $\mathcal{S} = \{1,\ldots,k\}$. For other sets 	$\mathcal{S} \subseteq \{1,\ldots,\Ka\}$ the proof is similar.
		
	By Fano's inequality and because conditioning can only reduce entropy, there exists a sequence of real numbers $\{\epsilon_n\}_{n=1}^\infty$ with 
	\[
		\frac{\epsilon_n}{n} \xrightarrow[n \to \infty]{} 0,
	\]
	such that 
	\begin{equation} 
		\begin{dcases}
			& H(W_{d_1} | Y_1^n,V_1)\quad \leq \quad \frac{\epsilon_n}{2\Ka}, \nonumber \\
			& H(W_{d_2} | Y_2^n,V_1,V_2,W_{d_1}) \quad \leq \quad \frac{\epsilon_n}{2\Ka}, \nonumber\\
			 & \qquad \qquad \vdots \nonumber\\
		 & H(W_{d_k} | Y_k^n,V_1,\ldots,V_k,W_{d_1},\ldots, W_{d_{k-1}}) \quad \leq \quad \frac{\epsilon_n}{2\Ka}.
  		\end{dcases} 
	\end{equation}

Thus,
	\begin{align}
		n R & =  H(W_{d_1}) \nonumber \\
		& = H(W_{d_1}|Z^n) + I(W_{d_1};Z^n) \nonumber \\
		& \leq  H(W_{d_1}|Z^n)  + \frac{\epsilon_n}{2}  \nonumber \\
		& \leq  I(W_{d_1}; Y_1^n, V_1) - I(W_{d_1};Z^n) + H(W_{d_1}|Y_1^n,V_1)+ \frac{\epsilon_n}{2} \nonumber \\
		& \leq   I(W_{d_1}; Y_1^n, V_1) - I(W_{d_1};Z^n) + \epsilon_n \nonumber \\
		& \leq  I(W_{d_1}; Y_1^n| V_1) - I(W_{d_1};Z^n|V_1) + I(W_{d_1}; V_1|Z^n) +\epsilon_n \nonumber \\ 
		& \stackrel{(a)}{=} \sum_{i=1}^n \Big[ I(W_{d_1}; Y_{1,i}| V_1, Y_1^{i-1}) - I(W_{d_1};Z_i|V_1, Z_{i+1}^n) \Big] + n\Ma_1+ \epsilon_n \nonumber \\
		& \stackrel{(b)}{=} \sum_{i=1}^n \Big[ I(W_{d_1}; Y_{1,i}| V_1, Y_1^{i-1}) - I(W_{d_1};Z_i|V_1, Z_{i+1}^n) \Big] + n\Ma_1+ \epsilon_n \nonumber \\
		& \qquad \qquad + \sum_{i=1}^n \Big[ I(Z_{i+1}^n; Y_{1,i}| Y_1^{i-1},V_1,W_{d_1}) - I( Y_1^{i-1};Z_i|Z_{i+1}^n,V_1,W_{d_1}) \Big] \nonumber \\
		& = \sum_{i=1}^n \Big[ I(W_{d_1},Z_{i+1}^n; Y_{1,i}| V_1, Y_1^{i-1}) - I(W_{d_1}, Y_1^{i-1};Z_i|V_1, Z_{i+1}^n) \Big] + n\Ma_1+ \epsilon_n \nonumber \\
		& \stackrel{(c)}{=} \sum_{i=1}^n \Big[ I(W_{d_1},Z_{i+1}^n; Y_{1,i}| V_1, Y_1^{i-1}) - I(W_{d_1}, Y_1^{i-1};Z_i|V_1, Z_{i+1}^n) \Big] +  n\Ma_1 + \epsilon_n \nonumber \\
		&  \qquad\qquad - \sum_{i=1}^n \Big[ I(Z_{i+1}^n; Y_{1,i}| Y_1^{i-1},V_1) - I( Y_1^{i-1};Z_i|Z_{i+1}^n,V_1) \Big] \nonumber \\
		& = \sum_{i=1}^n \Big[ I(W_{d_1}; Y_{1,i}| V_1, Y_1^{i-1}, Z_{i+1}^n) - I(W_{d_1};Z_i|V_1, Y_1^{i-1} ,Z_{i+1}^n) \Big] + n\Ma_1 + \epsilon_n \nonumber \\
		& \stackrel{(d)}{\leq} \sum_{i=1}^n \Big[ I(W_{d_1}; Y_{1,i}| V_1, Y_1^{i-1},Z_{i+1}^n) -  I(W_{d_1};Z_i|V_1, Y_1^{i-1},Z_{i+1}^n) \Big]^{+} + n\Ma_1 + \epsilon_n \nonumber \\	
		& \qquad\qquad + \sum_{i=1}^n \Big[ I(V_1,Y_1^{i-1},Z_{i+1}^n; Y_{1,i}) - I(V_1, Y_1^{i-1},Z_{i+1}^n; Z_{i})\Big]^{+}  \nonumber \\	
		& \stackrel{(e)}{=} \sum_{i=1}^n \Big[ I(W_{d_1},V_1,Y_1^{i-1},Z_{i+1}^n; Y_{1,i}) - I(W_{d_1},V_1, Y_1^{i-1},Z_{i+1}^n;Z_i) \Big]^{+} + n\Ma_1 + \epsilon_n \nonumber \\	
		& \stackrel{(f)}{\leq}  \sum_{i=1}^n \Big[ I(W_{d_1},V_1,Y_1^{i-1},Z_{i+1}^n; Y_{1,i}) - I(W_{d_1},V_1, Y_1^{i-1},Z_{i+1}^n;Z_i) \Big]^{+} + n\Ma_1 + \epsilon_n \nonumber \\	
		&  \qquad\qquad + \sum_{i=1}^n \Big[ I(Y_{2}^{i-1}; Y_{1,i}|W_{d_1},V_1,Y_1^{i-1},Z_{i+1}^n) - I(Y_{2}^{i-1}; Z_i|W_{d_1},V_1, Y_1^{i-1},Z_{i+1}^n) \Big]^{+} \nonumber \\
		& \stackrel{(g)}{=} \sum_{i=1}^n \Big[ I(W_{d_1},V_1,Y_1^{i-1},Y_{2}^{i-1},Z_{i+1}^n; Y_{1,i}) - I(W_{d_1},V_1, Y_1^{i-1},Y_{2}^{i-1},Z_{i+1}^n;Z_i) \Big]^{+}  + n\Ma_1 + \epsilon_n \nonumber \\	
		& \stackrel{(h)}{=} \sum_{i=1}^n \Big[ I(W_{d_1},V_1,Y_2^{i-1},Z_{i+1}^n; Y_{1,i}) - I(W_{d_1},V_1, Y_2^{i-1},Z_{i+1}^n;Z_i) \Big]^{+}  + n\Ma_1 + \epsilon_n ,
\end{align}
	where $(a)$ holds because $I(W_{d_1}; V_1|Z^n)$ is limited by the entropy of $V_1$ which cannot exceed $n\Ma_1$; $(b)$ and $(c)$ follow by the chain rule of mutual information and by applying Csiszar's sum-identity \cite[pp. 25]{elGamalBook}; $(d)$  holds because  for all values of $x$ we have: $x \leq x^+$ and $0 \leq x^+$; $(e)$ and $(g)$ hold because if the eavesdropper is degraded with respect to Receiver~$1$, then all $[\cdot]^+$ terms are positive and if Receiver~$1$ is degraded with respect to the eavesdropper then all these terms are zero; $(f)$  holds because  $0 \leq x^+$ for all values of $x$;  and $(g)$ holds because Receiver~$1$ is degraded with respect to Receiver~$2$ and thus the following Markov chain holds: 
	\[
		(V_1, W_{d_1}, Z_{i+1}^n,Y_{1,i}, Z_i) \to Y_2^{i-1} \to Y_1^{i-1}.
	\] 
	Let $Q$ be a random variable uniform over $\{1,\dots,n\}$ and independent of all the previously defined random variables. We define the following random variables:
	\begin{align}
		U_1 & := \left( W_{d_1},V_1,Y_2^{Q-1},Z_{Q+1}^n \right), \\
		Y_1 & := Y_{1,Q}, \\
		Z & := Z_Q. 
	\end{align}
	Dividing by $n$, we can now rewrite the above inequality as
	\begin{align} \label{eq:condOverQ}
		R & \leq \sum_{q=1}^n \Pr\{Q=q\} \Big[ I(W_{d_1},V_1,Y_2^{q-1},Z_{q+1}^n; Y_{1,q}|Q=q) -  I(W_{d_1},V_1,Y_2^{q-1},Z_{q+1}^n; Z_q|Q=q)  \Big]^{+} + \Ma_1 + \frac{\epsilon_n}{n} \nonumber\\
			& \leq \big[ I(U_1;Y_1|Q) -I(U_1;Z|Q)\big]^{+} + \Ma_1 + \frac{\epsilon_n}{n}.
	\end{align}

We now derive   a similar bound as before, but involving Receivers $1,\ldots,k$. Consider
	\begin{align}
		kR & \leq \frac{1}{n} H(W_{d_1}, \ldots ,W_{d_k}) \nonumber \\
		& = \frac{1}{n}  H(W_{d_1}, \ldots, W_{d_k}|Z^n)+ \frac{1}{n}  I(W_{d_1}, \ldots, W_{d_k};Z^n) \nonumber \\
  	& \leq \frac{1}{n}  H(W_{d_1}, \ldots, W_{d_k}|Z^n) + \frac{\epsilon_n}{2n}  \nonumber \\
  	& \stackrel{(a)}{\leq} \frac{1}{n} \big[ H(W_{d_1}) + H(W_{d_2}|W_{d_1}) +  \ldots +  H(W_{d_k}|W_{d_{k-1}},\ldots,W_{d_1}) -   I(W_{d_1},\ldots, W_{d_k};Z^n) \big]  + \frac{\epsilon_n}{2n} \nonumber \\
  	& \stackrel{(b)}{\leq} \frac{1}{n} \Big[ I(W_{d_1};Y_1^n, V_1)+I(W_{d_2}; Y_2^n, V_1, V_2|W_{d_1})+  \ldots + I(W_{d_k}; Y_k^n, V_1, \ldots, V_k|W_{d_1},\ldots, W_{d_{k-1}})\nonumber \\
  	&  \qquad\qquad - I(W_{d_1},\ldots, W_{d_k};Z^n) \Big]  + \frac{\epsilon_n}{n} \nonumber \\
  &	\stackrel{(c)}{=} \frac{1}{n} \Big[ I(W_{d_1}; Y_1^n,V_1) - I(W_{d_1}; Z^n) \Big] \nonumber \\
  &  \qquad \qquad + \frac{1}{n} \sum_{\ell=2}^k \Big[I(W_{d_\ell}; Y_\ell^n, V_1, \ldots, V_\ell|W_{d_1}, \ldots, W_{d_{\ell-1}}) - I(W_{d_\ell};Z^n| W_{d_1}, \ldots, W_{d_{\ell-1}})\Big] +\frac{\epsilon_n}{n}\nonumber\\
    & = \frac{1}{n} \Big[ I(W_{d_1}; Y_1^n|V_1) - I(W_{d_1}; Z^n|V_1)+I(W_{d_1};V_1|Z^n) \Big] \nonumber \\
    & \quad  + \frac{1}{n} \sum_{\ell=2}^k \Big[I(W_{d_\ell}; Y_\ell^n|V_1, \ldots, V_\ell,W_{d_1}, \ldots, W_{d_{\ell-1}})  - I(W_{d_\ell};Z^n|V_1,\ldots, V_\ell, W_{d_1}, \ldots, W_{d_{\ell-1}})  \nonumber \\
     & \qquad\qquad  + I(W_{d_\ell}; V_1, \ldots, V_\ell| W_{d_1}, \ldots, W_{d_{\ell-1}}, Z^n)\Big] + \frac{\epsilon_n}{n}, \label{eq:constraint_sumk}
	\end{align}
	where (a) follows from the chain rule of mutual information; (b) follows from Fano's inequality; and (c) follows from the chain rule of mutual information. 
	
	\noindent
	In a similar way to \eqref{eq:condOverQ}, we can prove that 
  	\begin{equation}\label{eq:constraint_k1}
  		\frac{1}{n} \Big[ I(W_{d_1}; Y_1^n|V_1) - I(W_{d_1}; Z^n|V_1) \Big] \leq \big[ I(U_1;Y_1|Q)- I(U_1;Z|Q) \big]^{+}.
  	\end{equation}
  	
  	\noindent
	Then, we prove that for each $\ell \in \{2,\ldots,k\}$, the following set of inequalities holds:
  	\begin{align} \label{eq:constraint_k2}
		I(W_{d_\ell} & ; Y_\ell^n|V_1, \ldots, V_\ell,W_{d_1}, \ldots, W_{d_{\ell-1}}) - I(W_{d_\ell};Z^n|V_1,\ldots, V_\ell, W_{d_1}, \ldots, W_{\ell-1})\nonumber \\
		& \stackrel{(a)}{=} \sum_{i=1}^n\big[ I(W_{d_\ell}; Y_{\ell,i}|V_1, \ldots, V_\ell,W_{d_1}, \ldots, W_{d_{\ell-1}}, Y_\ell^{i-1}, Z_{i+1}^n) \nonumber\\
		&\qquad \qquad  - I(W_{d_\ell};Z_i|V_1,\ldots, V_\ell, W_{d_1}, \ldots, W_{d_{\ell-1}}, Y_\ell^{i-1},Z_{i+1}^n)\big]
\nonumber\\
		& \stackrel{(b)}{=} \sum_{i=1}^n\big[ I(W_{d_\ell}; Y_{\ell,i}|V_1, \ldots, V_\ell,W_{d_1}, \ldots, W_{d_{\ell-1}}, Y_2^{i-1}, \ldots, Y_\ell^{i-1}, Z_{i+1}^n) \nonumber \\
		&\qquad\qquad   - I(W_{d_\ell};Z_i|V_1,\ldots, V_\ell, W_{d_1}, \ldots, W_{d_{\ell-1}}, Y_2^{i-1}, \ldots, Y_\ell^{i-1},Z_{i+1}^n)\big] \nonumber \\
		& \stackrel{(c)}{\leq} \sum_{i=1}^n\big[ I(W_{d_\ell}; Y_{\ell,i}|V_1, \ldots, V_\ell,W_{d_1}, \ldots, W_{d_{\ell-1}},  Y_2^{i-1}, \ldots,Y_\ell^{i-1}, Z_{i+1}^n) \nonumber \\
		&\qquad \qquad  - I(W_{d_\ell};Z_i|V_1,\ldots, V_\ell, W_{d_1}, \ldots, W_{d_{\ell-1}},  Y_2^{i-1}, \ldots,Y_\ell^{i-1},Z_{i+1}^n)\big]^{+} \nonumber \\
		&  \qquad +\sum_{i=1}^n \big[ I(V_\ell; Y_{\ell,i}|V_1, \ldots, V_{\ell-1},W_{d_1}, \ldots, W_{d_{\ell-1}},  Y_2^{i-1}, \ldots,Y_\ell^{i-1}, Z_{i+1}^n)  \nonumber \\
		&\qquad \qquad\qquad - I(V_\ell; Z_{i}|V_1, \ldots, V_{\ell-1},W_{d_1}, \ldots, W_{d_{\ell-1}},  Y_2^{i-1}, \ldots,Y_\ell^{i-1}, Z_{i+1}^n)  \big]^{+} \nonumber \\
  		& \stackrel{(d)}{\leq} \sum_{i=1}^n\big[ I(W_{d_\ell}, V_\ell; Y_{\ell,i}|V_1, \ldots, V_{\ell-1},W_{d_1}, \ldots, W_{d_{\ell-1}},  Y_2^{i-1}, \ldots,Y_\ell^{i-1}, Z_{i+1}^n) \nonumber \\
  		&\qquad \qquad  - I(W_{d_\ell}, V_\ell;Z_i|V_1,\ldots, V_{\ell-1}, W_{d_1}, \ldots, W_{d_{\ell-1}},  Y_2^{i-1}, \ldots,Y_\ell^{i-1},Z_{i+1}^n)\big]^{+} \nonumber \\
  		&  \qquad +\sum_{i=1}^n \big[ I(Y_{\ell+1}^{i-1}; Y_{\ell,i}|V_1, \ldots, V_{\ell},W_{d_1}, \ldots, W_{d_\ell},  Y_2^{i-1}, \ldots,Y_\ell^{i-1}, Z_{i+1}^n)  \nonumber \\
  		&\qquad \qquad \qquad- I(Y_{\ell+1}^{i-1}; Z_{i}|V_1, \ldots, V_{\ell},W_{d_1}, \ldots, W_{d_\ell},  Y_2^{i-1}, \ldots,Y_\ell^{i-1}, Z_{i+1}^n)  \big]^{+} \nonumber \\
  		&=\sum_{i=1}^n\big[ I(W_{d_\ell}, V_\ell, Y_{\ell+1}^{i-1}; Y_{\ell,i}|V_1, \ldots, V_{\ell-1},W_{d_1}, \ldots, W_{d_{\ell-1}},  Y_2^{i-1}, \ldots,Y_\ell^{i-1}, Z_{i+1}^n) \nonumber \\
  		&\qquad \qquad  - I(W_{d_\ell}, V_\ell, Y_{\ell+1}^{i-1};Z_i|V_1,\ldots, V_{\ell-1}, W_{d_1}, \ldots, W_{d_{\ell-1}}, Y_2^{i-1}, \ldots, Y_\ell^{i-1},Z_{i+1}^n)\big]^{+},
  	\end{align}
	where $(a)$ follows from the chain rule of mutual information and by applying Csiszar's sum-identity; $(b)$ because Receivers $1, \ldots, \ell-1$ are degraded with respect to Receiver~$\ell$, and so the following Markov chain  holds: 
 	\begin{equation}
 		(W_{d_\ell}, Y_{\ell,i},V_1, \ldots, V_k,W_{d_1}, \ldots, W_{d_{\ell-1}},Z_{i+1}^n) \to Y_{\ell}^{i-1} \to (Y_1^{i-1}, \ldots, Y_{\ell-1}^{i-1});
 	\end{equation}
	 $(c)$  holds because  for all values of $x$ we have: $x \leq x^+$ and $0 \leq x^+$; $(d)$ holds because if the eavesdropper is degraded with respect to Receiver~$\ell$, then all $[\cdot]^+$ terms are positive and if Receiver~$\ell$ is degraded with respect to the eavesdropper than all these terms are zero. 
 	
	We define for each $k\in\{2,\ldots, \Ka\}$ the random variables
 	\begin{align}
 		Y_k & := Y_{k,Q}\\
 		U_k & := (W_{d_k},V_k, Y_{k+1}^{Q-1}, U_{k-1}).
 	\end{align}
	Dividing by $n$, we can rewrite constraint \eqref{eq:constraint_k2} as
 	\begin{align}\label{eq:constraint_k3}
		\frac{1}{n} & \Big[ I(W_\ell; Y_\ell^n|V_1, \ldots, V_\ell,W_{d_1}, \ldots, W_{d_{\ell-1}}) - I(W_\ell;Z^n|V_1,\ldots, V_\ell, W_{d_1}, \ldots, W_{d_{\ell-1}}) \Big] \nonumber \\
		& \qquad \leq  \sum_{\ell=1}^k \big[ I(U_\ell;Y_\ell|U_{\ell-1},Q) - I( U_\ell;Z|U_{\ell-1},Q) \big]^+.
 	\end{align}

	Finally, we bound the following sum: 
 	\begin{align}
		I(W_{d_1} & ;V_1|Z^n) + \sum_{\ell=2}^kI(W_{d_\ell}; V_1, \ldots, V_\ell| W_{d_1}, \ldots, W_{d_\ell-1}, Z^n) \nonumber \\
		& \leq I(W_{d_1};V_1\ldots, V_k|Z^n) + \sum_{\ell=2}^kI(W_{d_\ell}; V_1, \ldots, V_k| W_{d_1}, \ldots, W_{d_\ell-1}, Z^n) \nonumber \\
		& = I(W_{d_1}, \ldots, W_{d_k};  V_1, \ldots, V_k|Z^n)\nonumber \\
		& \leq n\sum_{\ell=1}^k \Ma_\ell. \label{eq:constraint_cache}
 	\end{align}
 	
 	Taking into consideration constraints~\eqref{eq:constraint_k1}, \eqref{eq:constraint_k3} and \eqref{eq:constraint_cache}, we can rewrite constraint~\eqref{eq:constraint_sumk} as:
 	\begin{align}\label{eq:constraint_sumK}
 	kR	& \leq \sum_{\ell=1}^k \Big[ I(U_\ell;Y_\ell|U_{\ell-1},Q) - I( U_\ell;Z|U_{\ell-1},Q) \Big]^+ + \sum_{\ell=1}^k \Ma_\ell + \frac{\epsilon_n}{n},
 	\end{align}
 	where $U_0$ is a constant.
 	
	Letting $n \to \infty$, from constraints~\eqref{eq:condOverQ} and \eqref{eq:constraint_sumK}, we conclude that Lemma~\ref{lemma:1} holds.

\end{document}